\documentclass[]{article}
\usepackage{shortcuts}
\usepackage{tikz,amsmath,mathtools}
\usetikzlibrary{decorations.pathreplacing}
\usetikzlibrary{arrows.meta}

\numberwithin{equation}{section}
\numberwithin{thm}{section}

\DeclareMathOperator{\Unif}{Unif}
\DeclareMathOperator{\supp}{supp}
\DeclareMathOperator{\block}{block}
\DeclareMathOperator{\rank}{rank}

\newcommand{\KRR}{\mathrm{KRR}}
\newcommand{\KKMC}{\mathrm{KKMC}}

\title{Tight Kernel Query Complexity of Kernel Ridge Regression and Kernel $k$-means Clustering}
\author{
  Manuel Fernández V \\ Carnegie Mellon University \\ \texttt{manuelf@andrew.cmu.edu} \and
  David P. Woodruff \\ Carnegie Mellon University \\ \texttt{dwoodruf@cs.cmu.edu} \and
  Taisuke Yasuda \\ Carnegie Mellon University \\ \texttt{taisukey@andrew.cmu.edu}
}
\hypersetup{pdftitle={Tight Kernel Query Complexity of Kernel Ridge Regression and Kernel \texorpdfstring{$k$}{k}-means Clustering}}

\begin{document}
\maketitle
\begin{abstract}
    We present tight lower bounds on the number of kernel evaluations required to approximately solve kernel ridge regression (KRR) and kernel $k$-means clustering (KKMC) on $n$ input points. For KRR, our bound for relative error approximation to the minimizer of the objective function is $\Omega(nd_{\mathrm{eff}}^\lambda/\varepsilon)$ where $d_{\mathrm{eff}}^\lambda$ is the effective statistical dimension, which is tight up to a $\log(d_{\mathrm{eff}}^\lambda/\varepsilon)$ factor. For KKMC, our bound for finding a $k$-clustering achieving a relative error approximation of the objective function is $\Omega(nk/\varepsilon)$, which is tight up to a $\log(k/\varepsilon)$ factor. Our KRR result resolves a variant of an open question of El Alaoui and Mahoney, asking whether the effective statistical dimension is a lower bound on the sampling complexity or not. Furthermore, for the important practical case when the input is a mixture of Gaussians, we provide a KKMC algorithm which bypasses the above lower bound.
\end{abstract}

\newpage

\section{Introduction}
The \emph{kernel trick} in machine learning is a general technique that takes linear learning algorithms that only depend on the dot products of the data, including linear regression, support vector machines, principal component analysis, and $k$-means clustering, and boosts them to powerful nonlinear algorithms. This is done by replacing the inner product between two data points with their inner product after applying a kernel map, which implicitly maps the points to a higher dimensional space via a non-linear feature map. The simplicity and power of kernel methods has lead to wide adoption across the machine learning community: nowadays, kernel methods are a staple both in theory \cite{friedman2001elements} and in practice \cite{scholkopf2004kernel, zhang2007local}. We refer the reader to \cite{scholkopf2001learning} for more background on kernel methods.

However, one problem with kernel methods is that the computation of the kernel matrix $\bfK$, the matrix containing all pairs of kernel evaluations between $n$ data points, requires $\Omega(n^2)$ time, which is prohibitively expensive for the large-scale data sets encountered in modern data science. To combat this, a large body of literature in the last decade has been devoted to designing faster algorithms that attempt to trade a small amount of accuracy in exchange for speed and memory, based on techniques such as random Fourier features \cite{rahimi2008random}, sampling \cite{bach2013sharp, alaoui2015fast, musco2017recursive, musco2017sublinear}, sketching \cite{yang2017randomized}, and incomplete Cholesky factorization \cite{bach2002kernel, fine2001efficient}. We refer the reader to the exposition of \cite{musco2017recursive} for a more extensive overview of recent literature on the approximation of kernel methods.

\subsection{Previous work on kernel query complexity}
In this work, we consider lower bounds on the query complexity of the kernel matrix. The kernel query complexity is a fundamental information-theoretic parameter of kernel problems and both upper and lower bounds have been studied by a number of works \cite{lin2014sample, cesa2015complexity, musco2017recursive, musco2017sublinear}.

For kernel ridge regression, a lower bound has been shown for additive error approximation of the objective function value in Corollary 8 of \cite{cesa2015complexity}, which is a weaker approximation guarantee than what we study in this work. However, their bound is not known to be tight. Furthermore, the best known upper bounds for kernel ridge regression are in terms of a data-dependent quantity known as the \emph{effective statistical dimension} \cite{alaoui2015fast, musco2017recursive}, on which the \cite{cesa2015complexity} bound does not depend. The question of whether the effective statistical dimension gives a lower bound on the sample complexity has been posed as an open question by El Alaoui and Mahoney \cite{alaoui2015fast}. We will answer this question affirmatively under a slightly different approximation guarantee than they use, which is nevertheless satisfied by known algorithms nearly tightly, for instance by \cite{musco2017recursive}.

Another kernel problem for which lower bounds have been shown is the problem of giving a $(1+\eps)$ relative Frobenius norm error rank $k$ approximation of the kernel matrix, which has a bound of $\Omega(nk/\eps)$ by Theorem 13 of \cite{musco2017sublinear}. For kernel $k$-means clustering, there are no kernel complexity lower bounds to our knowledge.

Similar cost models have also been studied in the context of semisupervised/interactive learning. Intuitively, kernel evaluations are queries that ask for the similarity between two objects, where the notion of similarity in this context is the implicit notion of similarity recognized by humans, i.e.\ the ``crowd kernel''. In such situations, the dominant cost is the number of these queries that must be made to users, making kernel query complexity an important computational parameter. Mazumdar and Saha \cite{mazumdar2017clustering} study the problem of clustering under the setting where the algorithm obtains information by adaptively asking users whether two data points belong to the same cluster or not. In this setting, the dominant cost that is analyzed is the number of same-cluster queries that the algorithm must make, which exactly corresponds to the kernel query complexity of clustering a set of $n$ points drawn from $k$ distinct points with the indicator function kernel and the 0-1 loss (as opposed to $k$-means clustering, which uses the $\ell_2$ loss). In \cite{tamuz2011adaptively}, the authors consider the problem of learning a ``crowd kernel'', where the implicit kernel function is crowdsourced and the cost is measured as the number of queries of the form ``is $a$ more similar to $b$ than $c$?'' rather than queries that directly access the underlying kernel evaluations.

\subsection{Our contributions}
In this work, we resolve the kernel query complexity of kernel ridge regression and kernel $k$-means clustering up to $\log(d_\eff^\lambda/\eps)$ and $\log(k/\eps)$ factors, respectively. Our lower bounds apply even to \emph{adaptive} algorithms, that is, algorithms that are allowed to decide which kernel entries to query based on the results of previous kernel queries. This is a crucial aspect of our contributions, since some of the most efficient algorithms known for kernel ridge regression and kernel $k$-means clustering make use of adaptive queries, most notably through the use of a data-dependent sampling technique known as \emph{ridge leverage score sampling} \cite{musco2017recursive}.

For kernel ridge regression, we present Theorem \ref{thm:krr-hardness-main}, in which we construct a distribution over kernel ridge regression instances such that any randomized algorithm requires $\Omega(nd_\eff^\lambda/\eps)$ adaptive kernel evaluations. This matches the upper bound given in Theorem 15 of \cite{musco2017recursive} up to a $\log(d_\eff^\lambda/\eps)$ factor. Although we present the main ideas of the proof using the kernel as the dot product kernel, our proof in fact applies to any kernel that is of the form $(c_1-c_0)\mathbbm{1}(\bfe_i=\bfe_j)+c_0$ for constants $c_1>c_0$ when restricted to the standard basis vectors (Theorem \ref{thm:krr-hardness-main-indicator}). This includes any kernel that can be written as functions of dot products and Euclidean distances, including the polynomial kernel and the Gaussian kernel. This result resolves a variant of an open question posed by \cite{alaoui2015fast}, which asks whether the effective statistical dimension is a lower bound on the sampling complexity or not. In their paper, they consider the approximation guarantee of a $(1+\eps)$ relative error in the statistical risk, while we consider a $(1+\eps)$ relative error approximation of the minimizer of the KRR objective function. By providing tight bounds on the query complexity in terms of the effective statistical dimension $d_\eff^\lambda$, we definitively establish the fundamental importance of the quantity as a computational parameter, in addition to its established significance as a statistical parameter in the statistics literature \cite{friedman2001elements}. Furthermore, our result also clearly gives a lower bound on the time complexity of kernel ridge regression that matches Theorem 15 of \cite{musco2017recursive} up to a $\tilde O(d_\eff^\lambda/\eps)$ factor for intermediate ranges of $\eps$. This is in contrast to the conditional $\Omega(n^{2-o(1)})$ time complexity lower bound of \cite{backurs2017fine}, which operates in the regime of $\eps = \exp(-\omega(\log^2 n))$ for approximating the argmin of the objective function.

For kernel $k$-means clustering, we present Theorem \ref{thm:main-lower-bound}, which shows a lower bound of $\Omega(nk/\eps)$ for the problem of outputting a clustering which achieves a $(1+\eps)$ relative error value in the objective function. This matches the upper bound given in Theorem 16 of \cite{musco2017recursive} up to a $\log(k/\eps)$ factor. We also note that the problem of outputting a $(1\pm\eps)$ relative error approximation of the optimal cost itself has an $O(nk) + \poly(k,1/\eps,\log n)$ algorithm, and we complement it with a lower bound of $\Omega(nk)$ in Proposition \ref{prop:kkmc-cost-hardness}.

\begin{figure}
  \centering
  \begin{tabular}{|l|l|l|}
    \hline
    Kernel problem & Upper bound & Lower bound \\
    \hline
    KRR & $O\parens*{\tfrac{nd_\eff^\lambda}{\eps}\log\tfrac{d_\eff^\lambda}{\eps}}$ (\cite{musco2017recursive}, Theorem 15) & $\Omega\parens*{\tfrac{nd_\eff^\lambda}{\eps}}$ (this work, Theorem \ref{thm:krr-hardness-main}) \\
    KKMC & $O\parens*{\tfrac{nk}{\eps}\log\tfrac{k}{\eps}}$ (\cite{musco2017recursive}, Theorem 16) & $\Omega\parens*{\tfrac{nk}{\eps}}$ (this work, Theorem \ref{thm:main-lower-bound}) \\
    \hline
  \end{tabular}
  \caption{Table of upper bounds and lower bounds on the kernel query complexity.}
\end{figure}

Although our lower bounds show that existing upper bounds for kernel ridge regression and kernel $k$-means clustering are optimal, up to logarithmic factors, in their query complexity, one could hope that for important input distributions that may occur in practice, that better query complexities are possible. We show specifically in the case of kernel $k$-means that when the $n$ points are drawn from a mixture of $k$ Gaussians with $1/\poly(k/\eps)$ mixing probabilities and a separation between their means that matches the information-theoretically best possible for learning the means given by \cite{regev2017learning}, one can bypass the $\Omega(nk/\eps)$ lower bound, achieving an $(n/\eps) \poly(\log(k/\eps))$ query upper bound, effectively saving a factor of $k$ from the lower bounds for worst-case input distributions. This is our Theorem \ref{thm:cluster-gaussians-main}.

\subsection{Our techniques}
To prove our lower bounds, we design hard input distributions of kernel matrices as inner product matrices of an i.i.d.\ sample of vectors.

For our KRR lower bound, we draw our sample of vectors as follows: with probability $1/2$, we draw our vector uniformly from the first $(1/2)(k/\eps)$ standard basis vectors, and with probability $1/2$, we draw our vector uniformly from the next $(1/4)(k/\eps)$ standard basis vectors. Now if we draw our data set as $n$ points sampled from this distribution, then, on average, half of the input points have $n\eps/k$ copies of themselves in the data set while the other half have $2n\eps/k$ copies. We first show that correctly deciding between these two cases for a constant fraction of the $n$ input points requires $\Omega(nk/\eps)$ queries by standard arguments. We will then show that running KRR with a regularization of $\lambda = n/k$ and a target vector of all ones can solve this problem, while having an effective statistical dimension of $\Theta(k)$, giving the desired theorem. To see this, first note that the kernel matrix $\bfK$ has rank $(3/4)(k/\eps)$, where each of the $j$th nonzero eigenvalue $n_j$ is the number of copies of the $j$th standard basis vector in the data set. Then, we show that the true argmin of the KRR objective has the $i$th coordinate as $(n_j+\lambda)^{-1}$, where $n_j$ is the number of copies of the $i$th input vector. Then, if $n_j = n\eps/k$, then this is $(k/n)/(1+\eps)$ while if $n_j = 2n\eps/k$, then this is $(k/n)/(1+2\eps)$. Since these two cases are separated by a $(1\pm\eps)$ factor, a $(1+\eps)$-relative approximation of the argmin can distinguish these cases for a constant fraction of coordinates by averaging.

For our KKMC lower bound, we draw our sample of vectors as follows: we first divide the coordinates of $\mathbb R^{k/\eps}$ into $k$ blocks of size $1/\eps$, uniformly select a block, uniformly select a pair of coordinates $j_1\neq j_2$ from the block, and draw the sum of the corresponding standard basis vectors $\bfe_{j_1}+\bfe_{j_2}$. Intuitively, an optimal clustering should cluster points in the same block together, and it turns out that this clustering has a cost of $n(1-2\eps)$. We first show that for any set $S$ of size at most $\abs{S}\leq 2n/5$ points, there is a lower bound on the cost of at least $\abs{S} - (77/40)n\eps$, and that for the set $S'$ of points $\bfx$ belonging to a cluster in which uniformly sampling a point $\bfx'$ in its cluster has $\angle*{\bfx,\bfx'}\neq 0$ with probability less than $o(\eps)$, then the cost is $\abs*{S'}(1-o(\eps))$. Then setting $S$ to be the complement of $S'$, i.e.\ points with an $\Omega(\eps)$ probability of sampling a nonzero inner product, we conclude that if $\abs{S'}\geq 3n/5$, then the cost is not within a $(1+\eps/40)$ factor of the optimal cost. Thus, $\abs*{S'}\leq 3n/5$ and so for at least a $2n/5$ fraction of points, there must be an $\Omega(\eps)$ probability of sampling a nonzero inner product among its cluster. However, we then show that constructing a clustering with this guarantee requires $\Omega(nk/\eps)$ inner products by standard arguments, giving the theorem.

In our algorithm for mixtures of Gaussians, we exploit the input distribution itself to efficiently compute sketches $\bfS\bfx$ of the input points $\bfx$, where $\bfS$ is a matrix of zero mean i.i.d.\ Gaussians. Once we have computed these sketches, we show that we may compute an approximately optimal clustering in no more inner product evaluations.

\subsection{Open questions}
We suggest several related open questions. First, the error guarantee that we consider for the KRR lower bound does not directly measure the predictive performance of the KRR estimator. Thus, a more desirable result would be to find tight lower bounds for an algorithm outputting an estimator that guarantees, say, a $(1+\eps)$ relative error guarantee for the statistical risk of the resulting estimator. This is the main error guarantee considered in \cite{musco2017recursive} as well. Another interesting direction is to characterize the complexity of finding KRR estimators with objective function value guarantees as well, for which there are no query complexity efficient algorithms to the best of our knowledge.

A couple of other kernel problems have query complexity efficient algorithms using ridge leverage score sampling, including kernel principal component analysis and kernel canonical correlation analysis \cite{musco2017recursive}. We leave it open to determine whether these problems have matching lower bounds as well.

\subsection{Paper outline}
In section \ref{section:preliminaries}, we recall basic definitions and results about KRR and KKMC that we use in our lower bound results. We then prove our KRR lower bound in section \ref{section:krr-lower-bound} and our KKMC lower bound in section \ref{section:kkmc-lower-bound}. Finally, our query complexity efficient clustering algorithm for mixtures of Gaussians is given in section \ref{section:mog-upper-bound}.

\section{Preliminaries}\label{section:preliminaries}
\subsection{Notation}
We denote the set $\{1,2,\dots,n\}$ by $[n]$. For $j\in[d]$, we write $\bfe_j\in\mathbb R^d$ for the standard Euclidean basis vectors. We write $\bfI_n\in\mathbb R^{n\times n}$ for the $n\times n$ identity matrix and $\mathbf{1}_n\in\mathbb R^n$ for the vector of all ones in $n$ dimensions.

Let $\mathcal S$ be a finite set. Given two distributions $\mu,\nu$ on $\mathcal S$, the \emph{total variation distance between $\mu$ and $\nu$} is
\begin{eqn}
  D_{TV}(\mu,\nu) = \sum_{s\in\mathcal S}\abs*{\mu(s)-\nu(s)}.
\end{eqn}
We write $\Unif(\mathcal S)$ for the uniform distribution on $\mathcal S$.

Let $\mathcal X$ be the input space of a data set and $\mathcal F$ a reproducing kernel Hilbert space with kernel $k:\mathcal X\times\mathcal X\to\mathbb R$. We write $\varphi:\mathcal X\to\mathcal F$ for the feature map, i.e.\ the $\varphi$ such that $k(\bfx,\bfy) = \angle*{\varphi(\bfx),\varphi(\bfy)}_{\mathcal F}$. For a set of vectors $\{\bfx_{i}\}_{i=1}^n\subseteq\mathcal X$ and a kernel map $k:\mathcal X\times\mathcal X\to\mathbb R$, we write $\bfK\in\mathbb R^{n\times n}$ for the kernel matrix, i.e.\ the matrix with $\bfe_i^\top\bfK\bfe_j\coloneqq k(\bfx_i,\bfx_j)$. Note that $\bfK$ is symmetric and positive semidefinite (PSD). We refer the reader to \cite{scholkopf2001learning} for more details on the general theory of kernel methods. For all of our lower bound constructions, we will take $\mathcal X = \mathbb R^d$ and our kernel to be the linear kernel, i.e.\ the standard dot product on $\mathbb R^d$, $k(\bfx_i,\bfx_j) = \bfx_i\cdot\bfx_j$. Hence, we will frequently refer to kernel queries alternatively as inner product queries.

\subsection{Kernel ridge regression}
The kernel ridge regression (KRR) task is defined as follows. We parameterize an instance of KRR by a triple $(\bfK,\bfz,\lambda)$, where $\bfK\in\mathbb R^n$ is the kernel matrix of a data set $\{\bfx_i\}_{i=1}^n$, $\bfz\in\mathbb R^n$ is the target vector, and $\lambda$ is the regularization parameter. The problem is to compute
\begin{eqn}
  \bfalpha_\opt\coloneqq \argmin_{\bfalpha\in\mathbb R^n}\norm*{\bfK\bfalpha - \bfz}_2^2 + \lambda\bfalpha^\top\bfK\bfalpha.
\end{eqn}
It is well-known that the solution to the above is given in closed form by
\begin{eqn}
  \bfalpha_\opt = \parens*{\bfK+\lambda\bfI_n}^{-1}\bfz
\end{eqn}
which can be shown for example by completing the square.

An important parameter to the KRR instance $(\bfK,\bfz,\lambda)$ is the \emph{effective statistical dimension}:
\begin{dfn}[Effective statistical dimension (\cite{friedman2001elements, zhang2005learning}]
  Given a rank $r$ kernel matrix $\bfK$ with eigenvalues $\sigma_i^2$ for $i\in[r]$ and a regularization parameter $\lambda$, we define the effective statistical dimension as
  \begin{eqn}
    d_\eff^\lambda(\bfK)\coloneqq \tr\parens*{\bfK\parens*{\bfK+\lambda\bfI_n}^{-1}} = \sum_{i=1}^{r}\frac{\sigma_i^2}{\sigma_i^2+\lambda}.
  \end{eqn}
  We simply write $d_\eff^\lambda$ when the kernel matrix $\bfK$ is clear from context.
\end{dfn}
The effective statistical dimension was first introduced to measure the statistical capacity of the KRR instance, but has since been used to parameterize its computational properties as well, in the form of bounds on sketching dimension \cite{avron2017sharper} and sampling complexity \cite{alaoui2015fast, musco2017recursive}.

\subsubsection{Approximate solutions}
In the literature, various notions of approximation guarantees for KRR have been studied, including $(1+\eps)$ relative error approximations in the objective function cost \cite{avron2017sharper} and $(1+\eps)$ relative error approximations in the statistical risk \cite{bach2013sharp, alaoui2015fast, musco2017recursive}. In our paper, we consider a slightly different approximation guarantee, namely a $(1+\eps)$ relative error approximation of the argmin of the KRR objective function.
\begin{dfn}[$(1+\eps)$-approximate solution to kernel ridge regression]\label{dfn:krr-guarantee}
Given a kernel ridge regression instance $(\bfK,\bfz,\lambda)$, we say that $\hat\bfalpha\in\mathbb R^n$ is a $(1+\eps)$-approximate solution to kernel ridge regression if
\begin{eqn}\label{eqn:krr-guarantee}
    \norm*{\hat\bfalpha - \bfalpha_\opt}_2\leq \eps\norm*{\bfalpha_\opt}_2 = \eps\norm*{(\bfK+\lambda \bfI_n)^{-1}\bfz}_2.
\end{eqn}
\end{dfn}
This approximation guarantee is natural, and we note that it is achieved by the estimator of \cite{musco2017recursive}. This guarantee is then used to prove their main statistical risk guarantee. We will briefly reproduce the proof of this fact from Theorem 15 of \cite{musco2017recursive} below for completeness. Indeed, using a spectral approximation $\tilde \bfK$ satisfying $\bfK-\tilde\bfK\preceq \lambda\eps\bfI_n$, they output an estimator $\hat\bfalpha\coloneqq (\tilde \bfK+\lambda \bfI_n)^{-1}\bfz$ which satisfies the guarantee of equation (\ref{eqn:krr-guarantee}) since
\begin{eqn}
    \norm*{\hat\bfalpha-\bfalpha_\opt}_2 &= \norm*{(\tilde \bfK+\lambda \bfI_n)^{-1}\bfz - (\bfK+\lambda \bfI_n)^{-1}\bfz}_2 \\
    &= \norm*{(\tilde \bfK+\lambda \bfI_n)^{-1}[(\bfK+\lambda \bfI_n)-(\tilde \bfK+\lambda \bfI_n)](\bfK+\lambda \bfI_n)^{-1}\bfz}_2 \\
    &= \norm*{(\tilde \bfK+\lambda \bfI_n)^{-1}[\bfK-\tilde \bfK](\bfK+\lambda \bfI_n)^{-1}\bfz}_2 \\
    &\leq \norm*{(\tilde \bfK+\lambda \bfI_n)^{-1}}_2\norm*{\bfK-\tilde \bfK}_2\norm*{(\bfK+\lambda \bfI_n)^{-1}\bfz}_2 \\
    &\leq \frac1\lambda(\lambda\eps)\norm*{\bfalpha_\opt}_2 = \eps\norm*{\bfalpha_\opt}_2.
\end{eqn}

\subsection{Kernel \texorpdfstring{$k$}{k}-means clustering}
Recall the feature map $\varphi:\mathcal X\to\mathcal F$ for an input space $\mathcal X$ and a reproducing kernel Hilbert space $\mathcal F$. The problem of kernel $k$-means clustering (KKMC) involves forming a partition of the data set $\{\bfx_i\}_{i=1}^n$ into $k$ clusters $\mathcal C\coloneqq \{C_j\}_{j=1}^k$ with centroids $\bfmu_j\coloneqq \tfrac1{\abs{C_j}}\sum_{\bfx\in C_j}\varphi(\bfx)$ such that the objective function
\begin{eqn}
  \cost(\mathcal C)\coloneqq \sum_{j=1}^k \sum_{\bfx\in C_j}\norm*{\varphi(\bfx)-\bfmu_j}_{\mathcal F}^2
\end{eqn}
is minimized. The problem of finding exact solutions are known to be NP-hard \cite{aloise2009np}, but it has nonetheless proven to be an extremely popular model in practice \cite{hartigan1975clustering}.

With an abuse of notation, we will also talk about the cost of a single cluster, which is just the above sum taken only over one cluster:
\begin{eqn}
  \cost(C_j)\coloneqq \sum_{\bfx\in C_j}\norm*{\varphi(\bfx)-\bfmu_j}_{\mathcal F}^2.
\end{eqn}

As done in \cite{boutsidis2009unsupervised, cohen2015dimensionality, musco2017recursive} and many other works, we consider the approximation guarantee of finding a clustering that achieves a $(1+\eps)$ relative error in the objective function cost, i.e.\ a finding a partition $\{C_j'\}_{j=1}^k$ such that
\begin{eqn}
  \cost(\{C_j'\}_{j=1}^k) &= \sum_{j=1}^k \sum_{\bfx\in C_j'}\norm*{\varphi(\bfx)-\bfmu_j}_{\mathcal F}^2 \\
  &\leq (1+\eps)\min_{\mathcal C}\cost(\mathcal C) = (1+\eps)\min_{\mathcal C=\{C_j\}_{j=1}^k}\sum_{j=1}^k \sum_{\bfx\in C_j}\norm*{\varphi(\bfx)-\bfmu_j}_{\mathcal F}^2.
\end{eqn}

\section{Lower bound for kernel ridge regression}\label{section:krr-lower-bound}
We present our lower bound on the number of kernel entries required in order to compute a $(1+\eps)$-approximate solution to kernel ridge regression (see definition \ref{dfn:krr-guarantee}).

\begin{thm}[Query lower bound for kernel ridge regression]\label{thm:krr-hardness-main}
Consider a possibly randomized algorithm $\mathcal A$ that correctly outputs a $(1+\eps)$-approximate solution $\hat\bfalpha\in\mathbb R^n$ (see definition \ref{dfn:krr-guarantee}) to any kernel ridge regression instance $(\bfK,\bfz,\lambda)$ with probability at least $2/3$. Then there exists an input instance $(\bfK,\bfz,\lambda)$ on which $\mathcal A$ reads at least $\Omega(nd_\eff^\lambda/\eps)$ entries of $\bfK$, possibly adaptively, in expectation.
\end{thm}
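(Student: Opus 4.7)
The construction and overall strategy are laid out in the ``Our techniques'' section of the introduction, so my plan is to realize each step rigorously. First, I would fix the hard input distribution $\mathcal D$: sample $\bfx_1,\dots,\bfx_n\in\mathbb R^{3k/4\eps}$ i.i.d., where each $\bfx_i$ is, with probability $1/2$, $\Unif(\bfe_1,\dots,\bfe_{k/2\eps})$ (``group~$1$'') and, with probability $1/2$, $\Unif(\bfe_{k/2\eps+1},\dots,\bfe_{3k/4\eps})$ (``group~$2$''). The kernel is the linear kernel, so $\bfK_{ij}=\angle{\bfx_i,\bfx_j}\in\{0,1\}$ and $\bfK_{ij}=1$ iff $\bfx_i=\bfx_j$. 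I set $\bfz=\mathbf{1}_n$ and $\lambda=n/k$. With these choices, a standard Chernoff/balls-into-bins calculation gives, with high probability, that each basis vector $\bfe_j$ in group~$1$ appears $n_j=(1\pm o(1))\,n\eps/k$ times and each in group~$2$ appears $n_j=(1\pm o(1))\,2n\eps/k$ times, so I will condition on this event throughout.

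Next I would compute $\bfalpha_\opt=(\bfK+\lambda\bfI_n)^{-1}\mathbf{1}_n$ explicitly. Grouping the rows/columns of $\bfK$ by which basis vector $\bfx_i$ equals, $\bfK$ decomposes as a block-diagonal of all-ones matrices $\mathbf 1_{n_j}\mathbf 1_{n_j}^\top$, one per realized basis vector. Within the $j$-th block, $(\mathbf 1_{n_j}\mathbf 1_{n_j}^\top+\lambda\bfI)^{-1}\mathbf 1_{n_j}=(n_j+\lambda)^{-1}\mathbf 1_{n_j}$ by a Sherman--Morrison computation, so $(\bfalpha_\opt)_i=1/(n_{j(i)}+\lambda)$ where $j(i)$ is the index of $\bfx_i$. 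With $\lambda=n/k$, this equals $(k/n)/(1+\eps\pm o(\eps))$ in group~$1$ and $(k/n)/(1+2\eps\pm o(\eps))$ in group~$2$. The per-coordinate gap between the two cases is $\Theta(\eps k/n)$, and $\|\bfalpha_\opt\|_2^2=\Theta(k^2/n)$. Consequently the ``budget'' $\eps^2\|\bfalpha_\opt\|_2^2=\Theta(\eps^2k^2/n)$ in Definition \ref{dfn:krr-guarantee} is exactly $n$ times the squared gap (up to constants); by Markov's inequality, a $(1+c\eps)$-approximate solution $\hat\bfalpha$ (for a small absolute constant $c$) must satisfy $|\hat\bfalpha_i-(\bfalpha_\opt)_i|\le\tfrac12\cdot\Theta(\eps k/n)$ on at least a constant fraction of indices $i\in[n]$, and on each such index I can read off the group of $\bfx_i$ by rounding. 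Rescaling $\eps$ by the constant $c$ gives the same bound up to constants.

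Then I would prove the decision-problem lower bound: any algorithm which, for a constant fraction of indices $i$, correctly decides whether $\bfx_i$ lies in group~$1$ or group~$2$ must read $\Omega(nk/\eps)$ entries of $\bfK$ in expectation. This is the classical ``collision'' argument: conditioned on $\bfx_i$, the queried entries $\bfK_{ij}=\mathbb 1(\bfx_j=\bfx_i)$ are independent Bernoullis with success probability $\Theta(\eps/k)$ in either group, and the total-variation distance between the conditional distributions of the $m$ answered bits in the group~$1$ vs.\ group~$2$ case is $O(m\eps/k)$. To decide the group of $\bfx_i$ with constant advantage therefore requires $\Omega(k/\eps)$ queries involving $i$; summing (or using a Yao-style average over a constant fraction of indices achieving the task) gives $\Omega(nk/\eps)$ total queries. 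I plan to handle adaptivity by the usual reduction to a deterministic decision tree on a random input, where the above per-coordinate argument is applied to indices the algorithm has not ``collided'' with enough times.

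Finally I would verify $d_\eff^\lambda=\Theta(k)$: the nonzero eigenvalues of $\bfK$ are exactly the multiplicities $n_j\in\{(1\pm o(1))n\eps/k,(1\pm o(1))2n\eps/k\}$, and the number of distinct realized basis vectors is, with high probability, $\Theta(k/\eps)$. Since $n_j\ll\lambda=n/k$, each term $n_j/(n_j+\lambda)=\Theta(\eps)$, so $d_\eff^\lambda=\Theta(k/\eps)\cdot\Theta(\eps)=\Theta(k)$, and the lower bound $\Omega(nk/\eps)=\Omega(nd_\eff^\lambda/\eps)$ follows. I expect the main obstacle to be the averaging step in the second paragraph: one has to be careful that the coordinate-wise gap is genuinely spread over $\Theta(n)$ coordinates (which relies on the high-probability multiplicity bounds) and that the constants match so that the $\ell_2$ approximation guarantee truly pins down a constant fraction of coordinates, not a vanishing fraction. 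Everything else is either a closed-form linear-algebra computation or a routine information-theoretic argument.
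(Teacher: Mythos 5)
Your proposal is correct and follows essentially the same route as the paper: the same two-group hard distribution, the same closed-form computation $(\bfalpha_\opt)_i = (n_{j(i)}+\lambda)^{-1}$ with $\lambda = n/k$, the same Markov/averaging step converting the $\ell_2$ guarantee into per-coordinate group identification on a constant fraction of indices (with the constant-factor rescaling of $\eps$), the same Yao-plus-total-variation argument showing that classifying a constant fraction of points requires $\Omega(nk/\eps)$ queries, and the same $d_\eff^\lambda=\Theta(k)$ verification. The only cosmetic difference is that you phrase the single-point hardness by conditioning on $\bfx_i$ and treating the other points as random, whereas the paper fixes the other points in the reduction and treats the embedded point as the unknown; these are equivalent views of the same collision argument.
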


Our lower bound is nearly optimal, matching the ridge leverage score algorithm in Theorem 15 of \cite{musco2017recursive} which reads $O\parens*{\tfrac{nd_\eff^\lambda}{\eps}\log\tfrac{nd_\eff^\lambda}{\eps}}$ kernel entries up to a $\log\tfrac{d_\eff^\lambda}{\eps}$ factor.

\subsection{Main lower bound}
\begin{dfn}[Hard input distribution -- kernel ridge regression]\label{dfn:hard-krr}
Let $J,n\in\mathbb N$ and assume for simplicity that $4\mid J$. We define a distribution $\mu_\KRR(n,J)$ on binary PSD matrices $\bfK\in\mathbb R^{n\times n}$ defined as follows. We first define a distribution $\nu_\KRR(J)$ over standard basis vectors $\{\bfe_j\in\mathbb R^{3J/4} : j\in[3J/4]\}$, where with probability $1/2$ we draw a uniformly random $\bfe_j$ from $S_1\coloneqq\{\bfe_j:j\in[J/2]\}$ and with probability $1/2$ we draw a uniformly random $\bfe_j$ from $S_2\coloneqq\{\bfe_{j+J/2} : j\in[J/4]\}$. We then generate $\bfK$ by drawing $n$ i.i.d.\ samples $\{\bfx_i\}_{i=1}^n$ from $\nu_\KRR(J)$ and letting $\bfK$ be the inner product matrix of $\{\bfx_i\}_{i=1}^n$, that is, $\bfe_i^\top\bfK\bfe_j\coloneqq \bfx_i\cdot \bfx_j$.
\end{dfn}

\begin{thm}\label{thm:krr-hardness}
Let $\eps\in(0,1/2)$ and $J = k/\eps$ with $J^2 = O(n)$ and $k$ a parameter. Suppose that there exists a possibly randomized algorithm $\mathcal A$ that, with probability at least $2/3$ over its random coin tosses and random kernel matrix drawn from $\bfK\sim\mu_\KRR(n,J)$, correctly outputs a $(1+\eps/100)$-approximate solution $\hat\bfalpha\in\mathbb R^n$ (see definition \ref{dfn:krr-guarantee}) to the kernel ridge regression instance $(\bfK,\bfz,\lambda)$ with $\bfz = \mathbf{1}_n$ and $\lambda = n/k$. Furthermore, suppose that $\mathcal A$ reads at most $r$ positions of $\bfK$ on any input, possibly adaptively. Then, $d_\eff^\lambda(\bfK) = \Theta(k)$ and $r = \Omega(nd_\eff^\lambda/\eps)$.
\end{thm}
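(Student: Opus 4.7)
The plan is to reduce approximate KRR on $\bfK\sim\mu_\KRR(n,J)$ to the decision problem of identifying, for each index $i\in[n]$, whether $\bfx_i$ was drawn from $S_1$ or $S_2$, and then to lower-bound the query complexity of this identification task.

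\emph{Characterizing $\bfalpha_\opt$ and $d_\eff^\lambda$.} First I would permute the rows and columns of $\bfK$ to group identical $\bfx_i$ together, which makes $\bfK$ block diagonal with blocks $\mathbf{1}_{n_j}\mathbf{1}_{n_j}^\top$, where $n_j$ is the multiplicity of the $j$th distinct basis vector. Since $(\mathbf{1}_{n_j}\mathbf{1}_{n_j}^\top+\lambda\bfI)^{-1}\mathbf{1}_{n_j}=\mathbf{1}_{n_j}/(n_j+\lambda)$, the closed form $\bfalpha_\opt=(\bfK+\lambda\bfI)^{-1}\mathbf{1}_n$ gives $\bfalpha_{\opt,i}=1/(n_i+\lambda)$ where $n_i$ is the multiplicity of $\bfx_i$. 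Using $J^2=O(n)$, standard multiplicative Chernoff bounds would yield $n_j=(1\pm o(1))\mathbb{E}[n_j]$ with high probability, with $\mathbb{E}[n_j]=n/J=n\eps/k$ for $j$ indexing $S_1$ and $2n/J=2n\eps/k$ for $j$ indexing $S_2$. Plugging into $d_\eff^\lambda=\sum_j n_j/(n_j+\lambda)$ with $\lambda=n/k$ then gives $d_\eff^\lambda=(1\pm o(1))[k/(2(1+\eps))+k/(2(1+2\eps))]=\Theta(k)$, and summing $1/(n_i+\lambda)^2$ over $i$ likewise yields $\norm{\bfalpha_\opt}_2=\Theta(k/\sqrt{n})$.

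\emph{Decoding a $(1+\eps/100)$-approximate $\hat\bfalpha$.} Next I would observe that the two possible values of $\bfalpha_{\opt,i}$, namely $k/(n(1+\eps))$ and $k/(n(1+2\eps))$ (up to $o(k/n)$ additive error), are separated by $\Theta(\eps k/n)$. Given $\norm{\hat\bfalpha-\bfalpha_\opt}_2\leq (\eps/100)\norm{\bfalpha_\opt}_2=O(\eps k/\sqrt n)$, the number of coordinates $i$ for which $|\hat\bfalpha_i-\bfalpha_{\opt,i}|$ exceeds half of this gap is at most $O(n/10^4)$ by summing squares. Thresholding $\hat\bfalpha_i$ at the midpoint of the two values therefore correctly identifies whether $\bfx_i\in S_1$ or $\bfx_i\in S_2$ for at least $(1-1/10^4)n$ coordinates, conditional on the $(1+\eps/100)$-approximation event, which holds with probability $\geq 2/3$.

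\emph{Query lower bound for type identification.} Finally I would show that identifying types for a constant fraction of points requires $\Omega(nk/\eps)$ queries. Because the $\bfx_i$ are standard basis vectors, each query $\angle*{\bfx_i,\bfx_j}$ returns exactly $\mathbbm{1}[\bfx_i=\bfx_j]\in\{0,1\}$. I would invoke Yao's minimax principle to fix a deterministic algorithm and let $Q_i$ denote the (random) number of queries touching coordinate $i$, so $\sum_i Q_i\leq 2r$. The KL divergence between the transcript distributions under the hypotheses ``$\bfx_i\in S_1$'' versus ``$\bfx_i\in S_2$'' decomposes by the chain rule: queries not touching $i$ have conditional outcome distributions independent of $\bfx_i$ (since $\{\bfx_j:j\neq i\}$ is independent of $\bfx_i$) and contribute zero, while each query touching $i$ contributes at most $\mathrm{KL}(\mathrm{Bern}(\eps/k)\,\|\,\mathrm{Bern}(2\eps/k))=O(\eps/k)$. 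Summing and applying Pinsker gives transcript TV distance $O(\sqrt{Q_i\eps/k})$, so any predictor of $i$'s type has accuracy at most $1/2+O(\sqrt{Q_i\eps/k})$. If $r=o(nk/\eps)$, then by Markov's inequality at least $3n/4$ of the coordinates have $Q_i=o(k/\eps)$ and are predicted correctly with probability $1/2+o(1)$, so the expected number of correctly predicted types is at most $5n/8+o(n)$; this contradicts the decoding bound of $\geq(2/3)(1-1/10^4)n>5n/8$.

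\emph{Main obstacle.} The subtlest step is the KL chain-rule calculation under adaptive queries. The point is that, with the algorithm fixed and deterministic, the query chosen at each step is a function of the history, so the query itself is identical under both hypotheses and only the conditional outcome distributions need be compared. These are identical for queries not touching $i$ by independence of the samples, leaving only the $Q_i$ touching queries to contribute. Making this rigorous requires reducing the two hypotheses to the symmetric mixtures $\Unif(S_1)$ versus $\Unif(S_2)$ so that each touching query's conditional outcome is marginally Bernoulli$(\eps/k)$ or Bernoulli$(2\eps/k)$ regardless of the history, plus converting the expected-query bound to a high-probability bound via Markov; I do not expect any other significant difficulty.
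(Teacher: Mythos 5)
Your first two steps (the closed form $(\bfalpha_\opt)_i = 1/(n_i+\lambda)$ via the block structure, the Chernoff concentration of the multiplicities, the computation $d_\eff^\lambda=\Theta(k)$, and the thresholding/averaging argument that decodes the type of all but an $O(10^{-4})$ fraction of coordinates from a $(1+\eps/100)$-approximate $\hat\bfalpha$) are essentially identical to the paper's proof. Where you diverge is the hardness of the type-identification problem: the paper embeds a single unknown point at a random index $i^*$, argues by averaging that the algorithm makes at most $J/100$ queries touching that index with probability $99/100$, and bounds $D_{TV}$ directly by observing that with probability $1-O(q/J)$ under either hypothesis the transcript is the fixed all-zeros transcript of the deterministic algorithm. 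You instead propose a per-coordinate KL accounting with the chain rule.

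That last step has a genuine gap. The claimed per-query contribution $\mathrm{KL}(\mathrm{Bern}(\eps/k)\,\|\,\mathrm{Bern}(2\eps/k))=O(\eps/k)$ is not what the chain rule produces. Each query touching $i$ is an indicator $\mathbbm{1}[\bfx_i=\bfx_j]$ where (in the reduction) $\bfx_j$ is a known standard basis vector; if that vector lies in the support of $S_1$, the outcome is $\mathrm{Bern}(\Theta(1/J))$ under the hypothesis $\bfx_i\sim\Unif(S_1)$ but $\mathrm{Bern}(0)$ under $\bfx_i\sim\Unif(S_2)$, and vice versa for vectors in the support of $S_2$. Hence a transcript containing a single observed ``1'' has positive probability under one hypothesis and zero under the other, so $\mathrm{KL}(L_1\|L_2)=\mathrm{KL}(L_2\|L_1)=+\infty$ and Pinsker gives nothing; your proposed fix (that the conditional outcome is ``marginally Bernoulli$(\eps/k)$ or Bernoulli$(2\eps/k)$ regardless of the history'') is false precisely for these queries and for histories containing an observed match, which can reveal the type outright. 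The argument is repairable: replace KL by squared Hellinger distance, which is $O(1/J)$ per query even when one side is $\mathrm{Bern}(0)$ and still chains/tensorizes, or simply adopt the paper's route of conditioning on the all-zeros transcript, whose total probability of being left is $O(q/J)$ under both hypotheses, yielding $D_{TV}(L_1,L_2)=O(q/J)$ directly. With either repair your counting ($\sum_i \E[Q_i]\le 2r$, Markov, and the $5n/8$ versus $(2/3)(1-10^{-4})n$ comparison) goes through.
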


To prove Theorem \ref{thm:krr-hardness}, we will make a reduction to the following hardness lemma.

\begin{lem}\label{lem:krr-hardness}
Recall the definitions of $\mu_\KRR(n,J), \nu_\KRR(J), S_1, S_2$ from definition \ref{dfn:hard-krr}. Suppose that there exists a possibly randomized algorithm $\mathcal A$ that, with probability at least $2/3$ over its random coin tosses and random inputs drawn from $\mu_\KRR(n,k/\eps)$, correctly outputs whether $\bfx_{i}$ corresponds to $\bfe_j$ with $j\in S_1$ or $j\in S_2$ for at least a $9/10$ fraction of rows $\bfe_i^\top\bfK$ for $i\in[n]$. Further, suppose that $\mathcal A$ reads at most $r$ positions of $\bfK$ on any input, possibly adaptively. Then, $r = \Omega(nJ)$.
\end{lem}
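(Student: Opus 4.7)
The plan is to reduce the lemma to a per-row hardness statement and aggregate by pigeonhole. Let $c_i\in\{S_1,S_2\}$ denote the class type of $\bfx_i$ (i.e., which of $S_1,S_2$ its underlying basis vector lies in) and let $\hat c_i$ be the algorithm's corresponding output. Writing $F=\sum_{i=1}^n\mathbf 1[\hat c_i\neq c_i]$, the success hypothesis gives $\mathbb E[F]\le (n/10)(1)+(1/3)n=13n/30$. Let $Q_i\subseteq[n]$ be the set of columns $j$ for which entry $\bfK_{ij}$ is queried; since each query touches at most two rows, $\sum_i|Q_i|\le 2r$ deterministically. The goal is to show that for a small enough constant $c'>0$, any row that gets at most $c'J$ queries (measured in an appropriate ``pretend'' coupling) has failure probability at least $1/2-O(c')$; a Markov-style counting argument then forces $\mathbb E[F]>13n/30$ whenever $r=o(nJ)$, a contradiction.

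For the per-row step, fix row $i$ and couple the actual execution with a \emph{pretend} execution of $\mathcal A$ in which every query of the form $\bfK_{ij}$ is overridden to return $0$, regardless of its true value. In the pretend execution the algorithm receives no information about $\pi(i)$, so the set $Q_i^\star$ of row-$i$ queries and the output $b^\star\in\{S_1,S_2\}$ it produces for row $i$ are functions purely of $\omega$, the randomness excluding $c_i$ and $\pi(i)$. Conditional on $\omega$ and on $c_i$, $\pi(i)$ is uniform over a set of size $J/2$ (if $c_i=S_1$) or $J/4$ (if $c_i=S_2$), and the actual and pretend transcripts coincide on the event $\pi(i)\notin\pi(Q_i^\star)$, which has conditional probability at least $1-4|Q_i^\star|/J$ under either value of $c_i$. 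Because the two quantities $p_b=\Pr[\text{all zero}\mid c_i=S_b,\omega]$ satisfy $|p_1-p_2|=O(|Q_i^\star|/J)$, a short Bayes computation gives $\Pr[c_i=b^\star\mid\text{all zero},\omega]\le 1/2+O(|Q_i^\star|/J)$, and hence
\begin{eqn}
    \Pr[\hat c_i=c_i\mid\omega]\le \Pr[\text{all zero}\mid\omega]\parens*{\tfrac12+O(|Q_i^\star|/J)}+\Pr[\text{not all zero}\mid\omega]\le \tfrac12+O(|Q_i^\star|/J).
\end{eqn}

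For the aggregation step, the total query bound $\sum_i|Q_i^\star|\le 2r$ also holds in the pretend world, so $A^\star=\{i:|Q_i^\star|\le c'J\}$ satisfies $|A^\star|\ge n-2r/(c'J)$ pointwise in $\omega$. Summing the per-row bound across $i\in A^\star$ yields
\begin{eqn}
    \mathbb E[F]\ge\sum_i\Pr[i\in A^\star]\parens*{\tfrac12-O(c')}\ge\parens*{\tfrac12-O(c')}\parens*{n-\tfrac{2r}{c'J}},
\end{eqn}
and choosing $c'$ small enough to push the first factor above, say, $0.49$ and then $r\le cnJ$ with $c\ll c'$ makes the right-hand side exceed $13n/30$, producing the contradiction. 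The main obstacle is the per-row bound under adaptivity: in the actual execution $Q_i$ depends on $\pi(i)$ through responses to earlier row-$i$ queries, so a direct total-variation comparison between the two values of $c_i$ is cumbersome because the very set of queries one would compare shifts with $c_i$. The pretend all-zero coupling sidesteps this, since $Q_i^\star$ is manifestly independent of $(c_i,\pi(i))$ and the two transcripts agree except on the low-probability event that $\pi(i)$ falls into one of the at most $|Q_i^\star|$ classes the algorithm has already touched.
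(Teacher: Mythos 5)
Your per-row coupling is sound, and it is essentially the engine of the paper's proof in a different wrapper: the paper first proves a single-point lower bound by arguing that a low-query algorithm sees all zeros with high probability (so the query transcripts under $S_1$ and $S_2$ are close in total variation), and then plants the unknown point at a uniformly random index $i^*$ among $n-1$ self-generated points, using the symmetry of $i^*$ to convert the global budget $r$ into a per-row budget. Your ``pretend all-zero'' execution packages the same idea without the planting reduction, and your Bayes step is fine (in fact $\Pr[c_i=b^\star\mid\omega]=1/2$ exactly, since $b^\star$ is a function of $\omega$ alone and $c_i$ is independent of $\omega$).

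The gap is in the aggregation. The inequality $\sum_i|Q_i^\star|\le 2r$ does not follow from ``each query touches at most two rows,'' because the sets $Q_i^\star$ come from $n$ \emph{different} executions: $Q_i^\star$ counts the row-$i$ queries of the pretend-$i$ run, and the double-counting bound controls the sum over all rows $j$ within a single run, not the diagonal sum over $i$ of the row-$i$ count of run $i$. In particular it is not a pointwise statement in $\omega$. A priori, for every row $i$ whose actual execution receives a nonzero row-$i$ response, the pretend-$i$ run diverges from the actual one and may spend its entire remaining budget on row $i$, so the diagonal sum is only trivially bounded by $2r+|D|\cdot r$ with $D$ the set of diverging rows. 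The repair is to route the aggregation through the actual counts: write $\Pr[\hat c_i=c_i]\le\Pr[\hat c_i=c_i,\ |Q_i|\le c'J]+\Pr[|Q_i|> c'J]$, use the valid single-execution bound $\sum_i\Pr[|Q_i|>c'J]\le 2r/(c'J)$, and note that on $\{|Q_i|\le c'J\}$ either the two executions coincide (so the output is $b^\star$ and is correct with probability exactly $1/2$ given $\omega$) or divergence occurred at one of the first $c'J+1$ row-$i$ queries of the pretend run, an event of conditional probability $O(c')$. (Alternatively, the coupling itself yields $\E[\min(|Q_i^\star|,J/8)]\le 2\,\E[|Q_i|]$, which suffices to bound $\E[|A^{\star c}|]$.) With either fix your argument closes and is a legitimate alternative to the paper's planting reduction; as written, the claim that $|A^\star|\ge n-2r/(c'J)$ holds pointwise is unjustified.
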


\begin{figure}[ht]
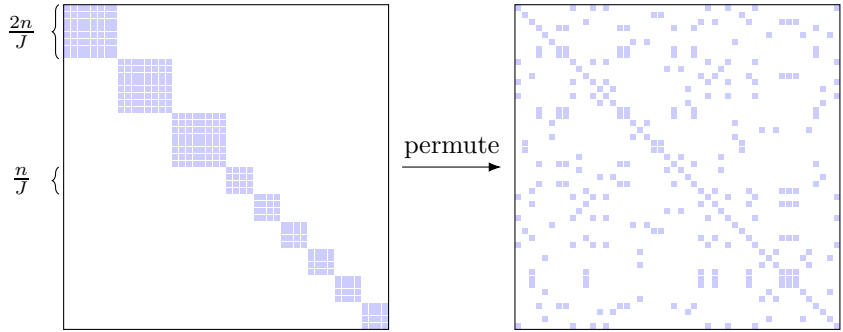

    \centering
    \begin{tikzpicture}[scale=0.3]
        \draw [decorate,decoration={brace},xshift=-6pt,yshift=0pt] (0,12) -- (0,14.4) node [black,midway,xshift=-0.5cm]  {$\frac{2n}{J}$};

        \begin{scope}
        \input{unpermuted.tex}
        \end{scope}

        \draw [decorate,decoration={brace},xshift=-6pt,yshift=0pt] (0,6) -- (0,7.2) node [black,midway,xshift=-0.5cm]  {$\frac{n}{J}$};

        \draw[-{Latex[]}] (15,7.2) -- (19.4,7.2) node[midway,anchor=south] {permute};

        \begin{scope}[shift={(20,0)}]
        \input{permuted.tex}
        \end{scope}
    \end{tikzpicture}
    \caption{The hardness lemma -- does the $i$th row have $\tfrac{2n}{J}$ or $\tfrac{n}{J}$ ones?}
    \label{fig:krr-hardness-lemma}
\end{figure}

\begin{proof}[Proof of Lemma \ref{lem:krr-hardness}]
First consider a single draw $\bfx\sim\nu_\KRR(J)$. We claim that $\Omega(J)$ adaptive inner product queries $\bfx\cdot\bfe_j$ are required to correctly output whether $\bfx\in S_1$ or $\bfx\in S_2$ with probability at least $2/3$ over $\nu_\KRR(J)$. Suppose there exists a randomized algorithm $\mathcal B$ that has the above guarantee. By Yao's minimax principle \cite{yao1977probabilistic}, there then exists a deterministic algorithm $\mathcal B'$ with the same guarantee and the same expected cost over the input distribution. Then, $\mathcal B'$ can be used to construct a hypothesis test to decide whether $\bfx\sim \Unif(S_1)$ or $\bfx\sim \Unif(S_2)$ which succeeds with probability at least $2/3$. Now let $S$ denote the random variable indicating the list of inner product queries made and their corresponding values, let $L_1$ denote the distribution of $S$ conditioned on $\bfx\sim \Unif(S_1)$, and $L_2$ the distribution of $S$ conditioned on $\bfx\sim \Unif(S_2)$. Then by Proposition 2.58 of \cite{bar2002complexity}, we have that
\begin{eqn}
    \frac{1+D_{TV}(L_1,L_2)}{2}\geq\frac23
\end{eqn}
and thus rearranging gives $D_{TV}(L_1,L_2)\geq 1/3$. Now suppose for contradiction that $\mathcal B'$ makes at most $q\leq J/100$ queries on any input. Since $\mathcal B'$ is deterministic, it makes the same sequence of inner product queries $\bfx\cdot\bfe_{j_1},\bfx\cdot\bfe_{j_2},\dots,\bfx\cdot\bfe_{j_q},$ if it reads a sequence of $q$ zeros. Now fix these queries $j_1,j_2,\dots,j_q$. We then have that for each $\ell\in[q]$,
\begin{eqn}
    \Pr_{\bfx\sim\Unif(S_1)}\parens*{\bfx = \bfe_{j_\ell}} = \frac1J,\qquad\Pr_{\bfx\sim\Unif(S_2)}\parens*{\bfx = \bfe_{j_\ell}} = \frac2J
\end{eqn}
and thus by the union bound,
\begin{eqn}
    \Pr_{\bfx\sim\Unif(S_1)}\parens*{\bfx\in\braces*{\bfe_{j_\ell} : \ell\in[q]}} \leq \frac{q}J,\qquad\Pr_{\bfx\sim\Unif(S_2)}\parens*{\bfx\in\braces*{\bfe_{j_\ell} : \ell\in[q]}} \leq \frac{2q}J.
\end{eqn}
Now let $\Omega$ denote the support of $S$ and let $s_0\in\Omega$ denote the value of $S$ when $\mathcal B'$ reads all zeros. Then,
\begin{eqn}
    D_{TV}(L_1,L_2) &= \sum_{s\in\Omega}\abs*{\Pr_{\bfx\sim\Unif(S_1)}\parens*{S=s} - \Pr_{\bfx\sim\Unif(S_2)}\parens*{S=s}} \\
    &= \abs*{\Pr_{\bfx\sim\Unif(S_1)}\parens*{S=s_0} - \Pr_{\bfx\sim\Unif(S_2)}\parens*{S=s_0}} + \sum_{s\in\Omega\setminus\{s_0\}}\abs*{\Pr_{\bfx\sim\Unif(S_1)}\parens*{S=s} - \Pr_{\bfx\sim\Unif(S_2)}\parens*{S=s}} \\
    &\leq \frac{2q}{J} + \sum_{s\in\Omega\setminus\{s_0\}}\Pr_{\bfx\sim\Unif(S_1)}\parens*{S=s} + \Pr_{\bfx\sim\Unif(S_2)}\parens*{S=s} \\
    &= \frac{2q}{J} + \Pr_{\bfx\sim\Unif(S_1)}\parens*{S\neq s_0} + \Pr_{\bfx\sim\Unif(S_2)}\parens*{S\neq s_0}\leq \frac{2q}{J} + \frac{q}{J} + \frac{2q}{J} = \frac{5q}{J}\leq \frac1{20}
\end{eqn}
which contradicts $D_{TV}(L_1,L_2)\geq 1/3$. Thus, we conclude that $q > J/100$.

We now prove the full claim via a reduction to the above problem of deciding whether some $\bfx\sim\nu_\KRR(J)$ is either drawn from $S_1$ or $S_2$. Suppose for contradiction that there exists a randomized algorithm $\mathcal A$ with the guarantees of the lemma which reads $r = o(nJ)$. We then design an algorithm $\mathcal B$ using $\mathcal A$ as follows. We independently sample a uniformly random index $i^*\sim\Unif([n])$ and $n-1$ points $\{\bfx_i\}_{i=1}^{n-1}$ with $\bfx_i\sim\nu_\KRR(J)$ for each $i\in[n-1]$. We then run $\mathcal A$ on the kernel matrix instance $\bfK$ corresponding to setting the $i^*$th standard basis vector to $\bfx$ and the other $n-1$ vectors according to $\{\bfx_i\}_{i=1}^{n-1}$. Note then that we can generate any entry of $\bfK$ on row $i^*$ or column $i^*$ by an inner product query to $\bfx$, and otherwise we can simulate the kernel query without making any inner product queries to $\bfx$. If $\mathcal A$ ever reads more than $J/100$ entries of $\bfx$, we output failure. Since $r = o(nJ)$, by averaging, for at least a $199/200$ fraction of the $n$ rows of $\bfK$, $\mathcal A$ reads at most $J/200$ entries of the row $\bfe_i^\top\bfK$. Similarly, for at least a $199/200$ fraction of the $n$ columns of $\bfK$, $\mathcal A$ reads at most $J/200$ entries of the column $\bfK\bfe_i$. Thus, for at least a $99/100$ fraction of the input points, $\mathcal A$ makes at most $J/100$ inner product queries. It follows by symmetry that with probability $99/100$, $\mathcal A$ makes at most $J/100$ inner product queries on $\bfx$. Then by a union bound over the random choice of $i^*$ over the $n$ input points, $\mathcal A$ correctly decides whether $\bfx\sim\Unif(S_1)$ or $\bfx\sim\Unif(S_2)$ and attempts to read at most $J/100$ entries of $\bfx$ with probability at least $1/10 + 1/100 = 11/100$. Thus, $\mathcal B$ succeeds with probability at least $1-11/100\geq 2/3$, contradicting the above result.
\end{proof}

With Lemma \ref{lem:krr-hardness} in hand, we finally get to the proof of Theorem \ref{thm:krr-hardness}.

\begin{proof}[Proof of Theorem \ref{thm:krr-hardness}]
Assume that $nJ = o(n^2)$, since otherwise the lower bound is $\Omega(n^2)$, which is best possible. Note that for $\bfx\sim\nu_\KRR(J)$, $\bfx = \bfe_j$ with probability $\tfrac12\tfrac1{J/2} = \tfrac1J$ if $\bfe_j\in S_1$ and $\tfrac12\tfrac1{J/4} = \tfrac2J$ if $\bfe_j\in S_2$. For a fixed $j\in[3J/4]$, let $n_j$ be the number of $\bfe_j$ sampled in $\bfK$ and $\mu_j\coloneqq \E_{\bfK\sim\mu_\KRR(n,J)}(n_j)$. Note that $\mu_j = n/J$ for $j\in[J/2]$ and $\mu_j = 2n/J$ for $j\in[J/4]+J/2$. Then by Chernoff bounds,
\begin{eqn}
    \Pr_{\bfK\sim\mu_\KRR(n,J)}\parens*{\braces*{\abs*{n_j - \mu_j}\geq \frac1{100}\mu_j}}\leq 2\exp\parens*{-\frac1{100}\frac{\mu_j}{3}}\leq 2\exp\parens*{-\frac1{100}\frac{n/J}{3}}
\end{eqn}
so by a union bound, we have that
\begin{eqn}
    \Pr_{\bfK\sim\mu_\KRR(n,J)}\parens*{\bigcup_{j\in[3J/4]}\braces*{\abs*{n_j - \mu_j}\geq \frac1{100}\mu_j}}\leq 2\frac{3J}{4}\exp\parens*{-\frac1{100}\frac{n/J}{3}}.
\end{eqn}
Since $nJ = o(n^2)$, we have that $n/J = \omega(1)$. Furthermore, since $J^2 = O(n)$, we have that $J = O(n/J)$. Thus, the above happens with probability at most $1/100$ by taking $n/J$ large enough. Dismiss this event as a failure and assume that $\abs{n_j-\mu_j}\leq\tfrac1{100}\mu_j$ for all $j\in[3J/4]$.

Now let $\bfK = \bfU\bfSigma\bfU^\top$ be the full SVD of $\bfK$. Note that the first $3J/4$ singular values are $n_j$ with corresponding singular vectors $\bfU\bfe_j = \tfrac1{\sqrt{n_j}}\bfK\bfe_j$, and the rest are all $0$s. Then, the target vector $\bfz = \mathbf{1}_n$ can be written as
\begin{eqn}
    \bfz = \sum_{j\in[3J/4]}\bfK\bfe_j = \sum_{j\in[3J/4]}\sqrt{n_j}\bfU\bfe_j,
\end{eqn}
since each coordinate $i\in[n]$ belongs to exactly one of the $3J/4$ input points drawn from $\nu_\KRR(n,J)$. The optimal solution can then be written as
\begin{eqn}
    \bfalpha_\opt &= (\bfK+\lambda\bfI_n)^{-1}\bfz = \bfU(\bfSigma+\lambda\bfI_n)^{-1}\bfU^\top\bfz \\
    &= \sum_{j\in[3J/4]}\sqrt{n_j}\bfU(\bfSigma+\lambda\bfI_n)^{-1}\bfU^\top\bfU\bfe_j = \sum_{j\in[3J/4]}\frac{1}{n_j+\lambda}\parens*{\sqrt{n_j}\bfU\bfe_j}.
\end{eqn}
Thus, for $i\in[n]$, the optimal solution takes the value $(\bfalpha_\opt)_i = (n_{j_i}+\lambda)^{-1}$ where $j_i\in[3J/4]$ is the index of the standard basis vector that the $i$th input point corresponds to.

Now by multiplying the $(1+\eps/100)$-approximation guarantee by $n/k$ and squaring, we have that
\begin{eqn}
    \norm*{\frac{n}{k}\hat\bfalpha - \frac{n}{k}\bfalpha_\opt}_2^2 &\leq \frac{\eps^2}{100^2}\norm*{\frac{n}{k}\bfalpha_\opt}_2^2 = \frac{\eps^2}{100^2}\sum_{j\in[3J/4]}\norm*{\frac{n/k}{n_j+\lambda}\parens*{\sqrt{n_j}\bfU\bfe_j}}_2^2\leq \frac{\eps^2}{100^2}\norm*{\bfz}_2^2 = \frac{\eps^2}{100^2}n
\end{eqn}
so by averaging, we have that $\parens*{\tfrac{n}{k}(\hat\bfalpha)_i-\tfrac{n}{k}(\bfalpha_\opt)_i}^2\leq \eps^2/100$ for at least a $99/100$ fraction of the $n$ coordinates of $i$. Then on these coordinates, $\abs*{\tfrac{n}{k}(\hat\bfalpha)_i - \tfrac{n}{k}(\bfalpha_\opt)_i}\leq \eps/10$. Now note that on these coordinates, we have that
\begin{eqn}
    \abs*{\frac{n}{k}(\hat\bfalpha)_i - \frac{n}{k}\frac1{\mu_j+\lambda}} &\leq \abs*{\frac{n}{k}(\hat\bfalpha)_i - \frac{n}{k}(\bfalpha_\opt)_i} + \abs*{\frac{n}{k}(\bfalpha_\opt)_i - \frac{n}{k}\frac1{\mu_j+\lambda}} \\
    &\leq \frac{\eps}{10} + \frac{n}{k}\abs*{\frac1{n_j+n/k}-\frac1{\mu_j+n/k}} \leq \frac{\eps}{10} + \frac{n}{k}\frac{\abs{n_j-\mu_j}}{(n_j+n/k)(\mu_j+n/k)} \\
    &\leq \frac{\eps}{10} + \frac{\mu_j/100}{n/k}\leq \frac{\eps}{10} + \frac{2n\eps/(100k)}{n/k} = \frac{6}{50}\eps.
\end{eqn}
Since
\begin{eqn}
    \frac{n}{k}\frac1{n\eps/k + n/k} - \frac{n}{k}\frac1{2n\eps/k + n/k} = \frac1{1+\eps} - \frac1{1+2\eps} = \frac\eps{(1+\eps)(1+2\eps)} > \frac{\eps}3 > 2\frac{6}{50}\eps
\end{eqn}
for $\eps\in(0,1/2)$, we can distinguish whether the $i$th input point has $\mu_j = n\eps/k$ or $\mu_j = 2n\eps/k$ on these coordinates and thus we can solve the hard computational problem of Lemma \ref{lem:krr-hardness} without reading anymore entries of $\bfK$ after solving the kernel ridge regression instance. Thus, we have that $\mathcal A$ reads $\Omega(nk/\eps)$ kernel entries by a reduction to Lemma \ref{lem:krr-hardness}.

Finally, to obtain the statement of the theorem, it remains to show that $d_\eff^\lambda = \Theta(k)$. Indeed,
\begin{eqn}
    d_\eff^\lambda = \sum_{j\in[3J/4]}\frac{n_j}{n_j+\lambda} = \Theta\parens*{\sum_{j\in[3J/4]}\frac{n\eps/k}{n\eps/k+n/k}} = \Theta\parens*{k}
\end{eqn}
as desired.
\end{proof}

We now obtain Theorem \ref{thm:krr-hardness-main} by scaling parameters by constant factors.

\begin{rem}
The setting of the regularization parameter in the above construction is a bit unnatural as the top $d_\eff^\lambda = \Theta(k)$ singular values of the kernel matrix are of order $n\eps/k$ while the regularization is of order $n/k$, which is $1/\eps$ times larger. One can easily fix this as follows. We add $(n/k)\bfe_i$ to the end of our data set for $i = k/\eps + 1, k/\eps + 2,\dots, k/\eps + k$. This only increases our effective statistical dimension to
\begin{eqn}
    d_\eff^\lambda = \sum_{j\in[3J/4]}\frac{n_j}{n_j+\lambda} + \sum_{i=1}^k \frac{n/k}{n/k + \lambda} = \Theta\parens*{\sum_{j\in[3J/4]}\frac{n\eps/k}{n\eps/k+n/k} + \frac{k}{2}} = \Theta\parens*{k}
\end{eqn}
and our hardness argument is clearly unaffected. Now the setting of the regularization is such that it scales as the top $d_\eff^\lambda$ singular values, so that it reduces the effects of the next $k/\eps$ noisy directions, which is natural.
\end{rem}

\subsection{Extensions to other kernels}
The above lower bound was proven just for the dot product kernel, but we note that essentially the same proof applies to more general kernels as well. To this end, we introduce the notion of \emph{indicator kernels}:
\begin{dfn}[Indicator kernels]\label{dfn:indicator-kernels}
We say that $k:\mathbb R^d\times\mathbb R^d\to\mathbb R$ is an \emph{indicator kernel} if there exist $c_1>0$ and $c_0<c_1$ such that
\begin{eqn}
    k(\bfe_i,\bfe_j) =
    \begin{cases}
        c_1 & \text{if $i=j$} \\
        c_0 & \text{otherwise}
    \end{cases}
\end{eqn}
for all standard basis vectors $\bfe_i,\bfe_j$ for $i,j\in[d]$.
\end{dfn}
Examples of such kernels include generalized dot product kernels and distance kernels, i.e.\ kernels of the form $k(\bfx,\bfx') = f(\bfx\cdot\bfx')$ and $k(\bfx,\bfx') = f(\norm*{\bfx-\bfx'}_2)$ for an appropriate function $f:\mathbb R\to\mathbb R$, which in turn include important kernels such as the polynomial kernel, the Gaussian kernel, etc.

Note that if $c_0 = 0$, the kernel matrix is just $c_1$ times the kernel matrix from before, so it is easy to see that the exact same proof works after scaling $\lambda$ by $c_1$. When $c_0$ is nonzero, then every entry of the kernel matrix is offset by $c_0$. However we will see that even in this case, the same proof still applies.
\begin{thm}[Query lower bound for kernel ridge regression for indicator kernels]\label{thm:krr-hardness-main-indicator}
The lower bound of Theorem \ref{thm:krr-hardness-main} continues to hold for any algorithm computing a $(1+\eps)$ relative error solution to a KRR instance with an indicator kernel (Definition \ref{dfn:indicator-kernels}) instead of the dot product kernel.
\end{thm}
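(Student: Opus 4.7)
The key observation is that if $\bfK$ denotes the binary dot-product kernel matrix from Definition \ref{dfn:hard-krr}, then the indicator-kernel Gram matrix $\bfK'$ on the same dataset $\{\bfx_i\}_{i=1}^n$ satisfies
\[
\bfK' = (c_1 - c_0)\bfK + c_0\, \mathbf{1}_n \mathbf{1}_n^\top,
\]
since $\bfK'_{ij} = c_1$ when $\bfx_i = \bfx_j$ (equivalently $\bfK_{ij} = 1$) and $c_0$ otherwise. Hence $\bfK'$ differs from a rescaling of $\bfK$ by a rank-one correction, and the plan is to reduce to the analysis already carried out in Theorem \ref{thm:krr-hardness}. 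The $c_0 = 0$ case is immediate: the kernel matrix is simply $c_1 \bfK$, so replacing the regularization $\lambda$ with $c_1 \lambda$ reproduces every step of the original proof verbatim.

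For general $c_0$, I would apply the Sherman--Morrison formula. Setting $\bfM \coloneqq (c_1 - c_0)\bfK + \lambda \bfI_n$ and $\bfz = \mathbf{1}_n$, one gets
\[
\bfalpha_\opt = (\bfM + c_0\, \mathbf{1}_n \mathbf{1}_n^\top)^{-1} \mathbf{1}_n = \gamma\, \bfM^{-1} \mathbf{1}_n, \qquad \gamma \coloneqq \frac{1}{1 + c_0\, \mathbf{1}_n^\top \bfM^{-1}\mathbf{1}_n}.
\]
Using the same eigendecomposition of $\bfK$ as before, a direct calculation gives $(\bfM^{-1}\mathbf{1}_n)_i = ((c_1 - c_0) n_{j_i} + \lambda)^{-1}$, so $\bfalpha_\opt$ has exactly the per-coordinate structure of the original proof, multiplied by the single global scalar $\gamma$ that does not depend on $i$. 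Choosing the analogue $\lambda = (c_1 - c_0)\cdot n/k$, the gap between the coordinate values in the two cases $n_{j_i} \in \{n\eps/k,\, 2n\eps/k\}$ becomes $\Theta(\gamma k\eps/((c_1-c_0)n))$, while $\norm*{\bfalpha_\opt}_2 = \Theta(\gamma k/((c_1-c_0)\sqrt{n}))$. Because $\gamma$ appears identically in both quantities, it cancels out of the signal-to-noise ratio, so the averaging argument that distinguishes the two cases in the proof of Theorem \ref{thm:krr-hardness} carries over without modification, and the reduction to Lemma \ref{lem:krr-hardness} yields the $\Omega(nd_\eff^\lambda/\eps)$ lower bound.

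Finally, I would check that the effective statistical dimension of $\bfK'$ is still $\Theta(k)$. Since $\bfK'$ equals $(c_1-c_0)\bfK$ plus a symmetric rank-one update, Weyl's inequalities force the eigenvalues of the two matrices to interlace, which changes $d_\eff^\lambda$ by at most an additive $O(1)$; hence $d_\eff^\lambda(\bfK') = \Theta(k)$ as before. The main technical obstacle is the Sherman--Morrison step: once one confirms that the rank-one correction only multiplies $\bfalpha_\opt$ by a global scalar (rather than perturbing its coordinatewise structure), everything else is a bookkeeping exercise with the constants $c_0, c_1$. One mild caveat is that for $c_0$ negative enough one must verify that $\bfK'$ remains PSD on the support of the construction, but this is automatic from the assumption that $k$ is a valid kernel when restricted to the standard basis.
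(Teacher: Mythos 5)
Your proposal is correct and follows essentially the same route as the paper's proof: decompose the indicator-kernel matrix as $(c_1-c_0)$ times the dot-product Gram matrix plus the rank-one correction $c_0\mathbf{1}_n\mathbf{1}_n^\top$, apply Sherman--Morrison to see that $\bfalpha_\opt$ is a single global scalar multiple of the original solution, and note that this scalar cancels in the relative-error guarantee with $\lambda = (c_1-c_0)n/k$. Your added checks (that $d_\eff^\lambda$ changes by at most $O(1)$ under the rank-one update, and PSD-ness) are sound minor supplements rather than a different argument.
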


\begin{proof}
Suppose we draw our kernel $\bfK$ as in Definition \ref{dfn:hard-krr}, with the dot product kernel being replaced by any indicator kernel. Let $\bfG$ be the inner product matrix of the point set with SVD $\bfG = \bfU\bfSigma\bfU^\top$ as before. Then, we may write the kernel matrix as
\begin{eqn}
    \bfK = c_0\mathbf{1}_{n\times n} + (c_1-c_0)\bfG.
\end{eqn}
Now define
\begin{eqn}
    C\coloneqq c_0\mathbf{1}_n((c_1-c_0)\bfG+\lambda\bfI_n)^{-1}\bf1_n.
\end{eqn}
Then, by the \href{https://en.wikipedia.org/wiki/Sherman\%E2\%80\%93Morrison\_formula}{Sherman-Morrison formula}, $C\neq -1$ since $(\bfK+\lambda\bfI_n)$ is invertible, and so we have that
\begin{eqn}
    \bfalpha_\opt &= (\bfK+\lambda\bfI_n)^{-1}\bfz \\
    &= (c_0\mathbf{1}_{n}\mathbf{1}_n^\top + (c_1-c_0)\bfG+\lambda\bfI_n)^{-1}\bfz \\
    &= ((c_1-c_0)\bfG+\lambda\bfI_n)^{-1}\bfz - \frac{((c_1-c_0)\bfG+\lambda\bfI_n)^{-1}(c_0\mathbf{1}_{n}\mathbf{1}_n^\top)((c_1-c_0)\bfG+\lambda\bfI_n)^{-1}}{1+c_0\mathbf{1}_{n}^\top((c_1-c_0)\bfG+\lambda\bfI_n)^{-1}\mathbf{1}_{n}}\mathbf{1}_n \\
    &= ((c_1-c_0)\bfG+\lambda\bfI_n)^{-1}\bfz - ((c_1-c_0)\bfG+\lambda\bfI_n)^{-1}\mathbf{1}_{n}\frac{c_0\mathbf{1}_n^\top((c_1-c_0)\bfG+\lambda\bfI_n)^{-1}\mathbf{1}_n}{1+c_0\mathbf{1}_{n}^\top((c_1-c_0)\bfG+\lambda\bfI_n)^{-1}\mathbf{1}_{n}} \\
    &= \parens*{1 - \frac{C}{1+C}}((c_1-c_0)\bfG+\lambda\bfI_n)^{-1}\bfz \\
    &= \frac1{(c_1-c_0)(1+C)}(\bfG+(\lambda/(c_1-c_0))\bfI_n)^{-1}\bfz
\end{eqn}
Thus, we find that the exact same proof as before works by setting $\lambda = (c_1-c_0)n/k$.
\end{proof}

\section{Lower bound for kernel \texorpdfstring{$k$}{k}-means clustering}\label{section:kkmc-lower-bound}

\subsection{Finding the cost vs.\ assigning points}
Recall that \cite{musco2017recursive} present an algorithm for solving KKMC with a kernel querying complexity of $O\parens*{\tfrac{nk}{\eps}\log\tfrac{k}{\eps}}$. We now briefly present some intuition on how we would like to match this up to $\log\tfrac{k}{\eps}$. We first note that the hardness cannot come from finding the centers of an approximately optimal clustering or approximating the cost of the optimal clustering up to $(1\pm\eps)$, since there is an algorithm for finding these in $O(nk + \poly(k,1/\eps,\log n))$ kernel queries: indeed, Theorem 15.5 of \cite{feldman2011unified} shows how to find a strong $\eps$-coreset of size $\poly(k\log n/\eps)$ in $O(nk + \poly(k,1/\eps,\log n))$ kernel queries, which can then be used to compute both approximate centers and the cost. Thus, intuitively, in order to achieve a lower bound of $\Omega(nk/\eps)$ which nearly matches the dominant term in the upper bound of \cite{musco2017recursive}, we must design a hard point set in which the hardness is not in computing the cost nor the centers, but rather in assigning the $n$ input points to their appropriate clusters.

We take this opportunity to prove a lower bound of $\Omega(nk)$ kernel queries for the problem of computing a $(1+\eps)$ relative error approximation to the cost of KKMC. In practical applications, the $nk$ term dominates the $\poly(k,1/\eps,\log n)$ term and thus we obtain a fairly tight characterization of this subproblem of KKMC. To prove this result, we make use of the hardness of deciding whether a binary PSD matrix has rank $k$ or $k+1$.

\begin{dfn}[Hard input distribution -- rank]
Consider the distribution $\mu_{\rank}(n,k)$ on binary PSD matrices $\bfK\in\mathbb R^{n\times n}$ defined as follows. We first draw $n$ i.i.d.\ samples $\{\bfx_i\}_{i=1}^n$ drawn from $\Unif(\braces*{\bfe_j : j\in[k]})$. Then, with probability $1/2$, select a uniformly random index $i^*\in[n]$ and set $\bfx_i \coloneqq \bfe_{k+1}$. Finally, generate $\bfK$ as the inner product matrix of $\{\bfx_i\}_{i=1}^n$, that is, $\bfe_i^\top\bfK\bfe_j\coloneqq \bfx_i\cdot\bfx_j$.
\end{dfn}

\begin{lem}\label{lem:rank-hardness}
Suppose there exists an algorithm which, with probability at least $5/8$, over its random coin tosses and random inner product matrix $\bfK\sim\mu_{\rank}(n,k)$, correctly decides whether $\bfK$ has rank $k+1$ or at most $k$. Furthermore, suppose that the algorithm reads at most $r$ positions of $\bfK$, possibly adaptively. Then, $r = \Omega(nk)$.
\end{lem}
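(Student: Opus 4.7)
The plan is to mimic the two-step structure of the proof of Lemma~\ref{lem:krr-hardness}: first establish a single-vector hardness result, then reduce the rank-distinguishing problem to it by planting the single vector at a uniformly random index of an otherwise iid data set.

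The single-vector problem I have in mind is the following. A target $\bfx \in \mathbb R^{k+1}$ is drawn either (NO) uniformly from $\{\bfe_j : j \in [k]\}$ or (YES) fixed to $\bfe_{k+1}$, and the algorithm may adaptively query $\bfx \cdot \bfe_l$ for any $l \in [k]$. In the YES case every response is $0$; in the NO case the response is $1$ precisely when the queried $l$ matches the unique $j^*$ with $\bfx = \bfe_{j^*}$. Hence, for any algorithm making $q$ adaptive queries (WLOG distinct, since repeats reveal nothing new), the transcript is ``all zeros'' with probability $1$ in the YES case and with probability $1 - q/k$ in the NO case, so the two transcript distributions have total variation distance at most $q/k$. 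By Yao's principle, distinguishing these cases with any constant advantage requires $q = \Omega(k)$ oracle queries.

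Next I would reduce the rank-distinguishing problem to the single-vector problem. By Yao, it suffices to consider a deterministic $\mathcal A$ making at most $r$ adaptive queries and succeeding with probability $\geq 5/8$ over $\bfK \sim \mu_\rank(n,k)$. Construct $\mathcal B$ for the single-vector problem as follows: sample $i^* \sim \Unif([n])$ and, independently, $\bfx_i \sim \Unif(\{\bfe_j : j \in [k]\})$ for each $i \in [n]\setminus\{i^*\}$, and set $\bfx_{i^*} \coloneqq \bfx$. Then simulate $\mathcal A$ on the Gram matrix $\bfK$ of $\bfx_1,\ldots,\bfx_n$: any query $(a,b)$ with $a,b \neq i^*$ is answered directly by $\mathcal B$ (no oracle use); the diagonal entry $(i^*,i^*)$ is always $1$; and a query $(i^*, b)$ with $b \neq i^*$ is forwarded to the single-vector oracle as $\bfx \cdot \bfe_l$, where $\bfe_l = \bfx_b$ is known to $\mathcal B$. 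Set $q_0 \coloneqq 200 r/n$ and abort if $\mathcal A$ attempts more than $q_0$ queries involving $i^*$; otherwise return $\mathcal A$'s answer. Crucially, the induced distribution on $\bfK$ exactly matches the NO (resp.\ YES) conditional of $\mu_\rank(n,k)$ in the NO (resp.\ YES) case of the single-vector problem, so $\mathcal B$'s correctness, conditional on not aborting, equals $\mathcal A$'s.

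To conclude, I would bound the abort probability and invoke the single-vector lower bound. For every input $\bfK$, if $q_i(\bfK)$ denotes the number of queries involving row or column $i$, then $\sum_i q_i(\bfK) \leq 2r$, so $\E_{\bfK, i^*}[q_{i^*}(\bfK)] \leq 2r/n$ and Markov's inequality yields $\Pr[\text{abort}] \leq 1/100$. Hence $\mathcal B$ succeeds with probability at least $5/8 - 1/100 > 1/2 + 1/10$ while making at most $q_0 = 200 r/n$ oracle queries to $\bfx$, so the single-vector bound forces $q_0 = \Omega(k)$ and therefore $r = \Omega(nk)$. The main step requiring care is verifying that the distribution on $\bfK$ simulated inside $\mathcal B$ exactly matches $\mu_\rank(n,k)$ in both cases; this is analogous to the subtle coupling step in the proof of Lemma~\ref{lem:krr-hardness} and drives the choice to make $i^*$ uniform on $[n]$ and the remaining coordinates iid $\Unif(\{\bfe_j : j \in [k]\})$.
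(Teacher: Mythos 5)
Your proposal follows essentially the same route as the paper's: plant the unknown vector at a uniformly random index $i^*$, simulate the rank-distinguishing algorithm on the resulting Gram matrix, forward only the queries touching row or column $i^*$ to a single-vector oracle, and bound the total variation distance between the all-zeros-dominated transcript distributions. The paper inlines the TV computation for the reduced algorithm rather than isolating a standalone single-vector lemma, and it uses a fixed threshold $k/100$ justified by averaging over rows and columns under the contrapositive hypothesis $r=o(nk)$, whereas you use Markov's inequality with threshold $200r/n$; your version is, if anything, cleaner, and the distribution-matching argument for the planting is correct.

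One claim needs a patch: you assert that, conditional on not aborting, $\mathcal B$'s correctness for the single-vector problem equals $\mathcal A$'s. This is not quite right, because $\mathcal A$ is only guaranteed to decide the \emph{rank} of $\bfK$, and in the YES case the rank equals $k+1$ only if every $\bfe_j$ for $j\in[k]$ actually appears among the $n-1$ auxiliary points; if some $\bfe_j$ is missing, the rank is at most $k$, so a correct $\mathcal A$ leads $\mathcal B$ to the wrong answer about $\bfx$. The paper conditions explicitly on the event that all $k$ basis vectors are drawn, which holds with probability at least $99/100$ for $n$ large enough relative to $k$. Your success-probability budget $5/8 - 1/100$ has enough slack to absorb this additional $1/100$ loss, so the argument survives, but the event must be accounted for rather than claiming exact equality of correctness.
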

\begin{proof}
Suppose for contradiction that $r = o(nk)$. Let $\mathcal S_k\coloneqq\{\bfe_j:j\in[k]\}$. We consider the following hypothesis test: decide whether some $\bfx$ is drawn from $\bfx\sim\Unif(\mathcal S_k)$ or from $\bfx = \bfe_{k+1}$ using inner product queries of the form $\bfx\cdot\bfe_\ell$ for $\ell\in[k]$. By Yao's minimax principle \cite{yao1977probabilistic}, there exists a deterministic algorithm $\mathcal A'$ with the same guarantee as $\mathcal A$ and the same expected cost over the input distribution. We then design an algorithm $\mathcal B$ using $\mathcal A'$ as follows. We independently sample a uniformly random index $i^*\sim\Unif([n])$ and $n-1$ points $\{\bfx_i\}_{i=1}^{n-1}$ with $\bfx_i\sim\Unif(\mathcal S_k)$ for each $i\in[n-1]$. We then run $\mathcal A'$ on the kernel matrix instance $\bfK$ corresponding to setting the $i^*$th standard basis vector to $\bfx$ and the other $n-1$ vectors according to $\{\bfx_i\}_{i=1}^{n-1}$. Note then that we can generate any entry of $\bfK$ on row $i^*$ or column $i^*$ by an inner product query to $\bfx$, and otherwise we can simulate the kernel query without making any inner product queries to $\bfx$. If $\mathcal A'$ ever reads more than $k/100$ entries of $\bfx$, we output failure.

Note that with probability at least $99/100$ over $\{\bfx_i\}_{i=1}^{n-1}$, each $\bfe_j$ for $j\in[k]$ is drawn at least once for $n$ large enough. Thus, in this event, $\bfK$ has rank $k$ only if $\bfx\sim\Unif(\mathcal S_k)$ and $k+1$ otherwise. Since $\mathcal A'$ is correct with probability $5/8$, by a union bound, $\mathcal B$ is correct with probability at least $1-(\tfrac1{100}+\tfrac38)\geq 5/9$.

Let $S$ denote the random variable indicating the list of positions of $\bfK$ read by $\mathcal A'$ and its corresponding values,let $L_1$ denote the distribution of $S$ conditioned on $\bfx\sim\Unif(\mathcal S_k)$, and $L_2$ the distribution of $S$ conditioned on $\bfx = \bfe_{k+1}$. Then by Proposition 2.58 of \cite{bar2002complexity}, we have that
\begin{eqn}
    \frac{1+D_{TV}(L_1,L_2)}{2}\geq\frac{5}{9}
\end{eqn}
so $D_{TV}(L_1,L_2)\geq 1/9$.

Since $r = o(nk)$, by averaging, for at least a $199/200$ fraction of the $n$ rows of $\bfK$, $\mathcal A'$ reads at most $k/200$ entries of the row $\bfe_i^\top\bfK$. Similarly, for at least a $199/200$ fraction of the $n$ columns of $\bfK$, $\mathcal A'$ reads at most $k/200$ entries of the column $\bfK\bfe_i$. Thus, for at least a $99/100$ fraction of the input points, $\mathcal A'$ makes at most $k/100$ inner product queries.

If we condition on the event that $\bfx\sim\Unif(\mathcal S_k)$, we have that $\mathcal A'$ makes at most $k/100$ inner product queries with $\bfx$ with probability at least $99/100$ over the randomness of $i^*$ by symmetry. That is, if $\mathcal E$ is the event that $\mathcal A'$ makes at most $k/100$ inner product queries on row and column $i^*$, then
\begin{eqn}
    \Pr_{\substack{i^*\sim\Unif([n]) \\
    \{\bfx_i\}_{i=1}^{n-1}\sim\Unif(\mathcal S_k)^{n-1} \\
    \bfx\sim\Unif(\mathcal S_k)}}\parens*{\mathcal E}\geq\frac{99}{100}.
\end{eqn}
Now letting $\mathcal E'$ be the event that $\mathcal A'$ sees $0$s on all of its queries on row and column $i^*$, we have that
\begin{eqn}
    \Pr_{\substack{i^*\sim\Unif([n]) \\
    \{\bfx_i\}_{i=1}^{n-1}\sim\Unif(\mathcal S_k)^{n-1} \\
    \bfx\sim\Unif(\mathcal S_k)}}\parens*{\mathcal E'\mid\mathcal E} &= \sum_{w\in\Omega}\Pr_{\substack{i^*\sim\Unif([n]) \\
    \bfx\sim\Unif(\mathcal S_k)}}\parens*{\mathcal E'\mid\mathcal E, \{\bfx_i\}_{i=1}^{n-1}=w}\Pr_{\substack{i^*\sim\Unif([n]) \\
    \bfx\sim\Unif(\mathcal S_k)}}\parens*{\mathcal E'\mid\mathcal E, \{\bfx_i\}_{i=1}^{n-1}=w}.
\end{eqn}
Once we fix $\{\bfx_i\}_{i=1}^{n-1}$, there is a $1-1/k$ probability over $\bfx$ that any fixed query $\bfx\cdot\bfe_\ell$ returns a $0$, so the probability that $\mathcal E'$ happens is at least
\begin{eqn}
    \parens*{1-\frac1k}^{k/100}\geq 1-\frac1k\frac{k}{100} = \frac{99}{100}.
\end{eqn}
Thus, by the chain rule,
\begin{eqn}
    \Pr_{\substack{i^*\sim\Unif([n]) \\
    \{\bfx_i\}_{i=1}^{n-1}\sim\Unif(\mathcal S_k)^{n-1} \\
    \bfx\sim\Unif(\mathcal S_k)}}\parens*{\mathcal E'\cap\mathcal E} &= \Pr_{\substack{i^*\sim\Unif([n]) \\
    \{\bfx_i\}_{i=1}^{n-1}\sim\Unif(\mathcal S_k)^{n-1} \\
    \bfx\sim\Unif(\mathcal S_k)}}\parens*{\mathcal E'\mid\mathcal E}\Pr_{\substack{i^*\sim\Unif([n]) \\
    \{\bfx_i\}_{i=1}^{n-1}\sim\Unif(\mathcal S_k)^{n-1} \\
    \bfx\sim\Unif(\mathcal S_k)}}\parens*{\mathcal E} \\
    &\geq \frac{99}{100}\frac{99}{100}\geq \frac{49}{50}.
\end{eqn}
Bounding the event $\mathcal E'\cup\mathcal E$ by $\mathcal E'$ as sets, we have that
\begin{eqn}
    \Pr_{\substack{i^*\sim\Unif([n]) \\
    \{\bfx_i\}_{i=1}^{n-1}\sim\Unif(\mathcal S_k)^{n-1} \\
    \bfx\sim\Unif(\mathcal S_k)}}\parens*{\mathcal E'}\geq\frac{49}{50}.
\end{eqn}

 Also let $W\subseteq \supp(S)$ denote the set of $s\in \supp(S)$ such that $\mathcal A'$ reads all zeros on row and column $i^*$. Then,
\begin{eqn}
    D_{TV}(L_1,L_2) &= \sum_{s\in \supp(S)}\abs*{
        \Pr_{\substack{i^*,\{\bfx_i\}_{i=1}^{n-1} \\ \bfx\sim\Unif(\mathcal S_k)}}\parens*{S=s} -
        \Pr_{\substack{i^*,\{\bfx_i\}_{i=1}^{n-1} \\ \bfx=\bfe_{k+1}}}\parens*{S=s}
    } \\
    &= \sum_{s\in \supp(S)}\abs*{
        \parens*{
            \Pr_{\substack{i^*,\{\bfx_i\}_{i=1}^{n-1} \\ \bfx\sim\Unif(\mathcal S_k)}}\parens*{S=s\mid\mathcal E'}\Pr_{\substack{i^*,\{\bfx_i\}_{i=1}^{n-1} \\ \bfx\sim\Unif(\mathcal S_k)}}\parens*{\mathcal E'} -
            \Pr_{\substack{i^*,\{\bfx_i\}_{i=1}^{n-1} \\ \bfx=\bfe_{k+1}}}\parens*{S=s}
        }
    } \\
    &= \sum_{s\in \supp(S)}
        \Pr_{\substack{i^*,\{\bfx_i\}_{i=1}^{n-1} \\ \bfx=\bfe_{k+1}}}\parens*{S=s}
        \abs*{
            \Pr_{\substack{i^*,\{\bfx_i\}_{i=1}^{n-1} \\ \bfx\sim\Unif(\mathcal S_k)}}\parens*{\mathcal E'} - 1
    } \leq \frac1{50}.
\end{eqn}
This contradicts the conclusion that $D_{TV}(L_1,L_2)\geq1/9$ so we conclude as desired.
\end{proof}

With the lemma in hand, we prove the $\Omega(nk)$ lower bound for approximating the cost of KKMC.

\begin{prop}\label{prop:kkmc-cost-hardness}
Let $k^2 = O(n)$. Suppose there exists an algorithm which, with probability at least $2/3$ over its random coin tosses and random inner product matrix $\bfK\sim\mu_{\rank}(n,k)$, correctly computes the optimal cost of the kernel $k$-means clustering instance up to a $(1\pm1/2)$ relative error. Furthermore, suppose that the algorithm reads at most $r$ positions of $\bfK$, possibly adaptively. Then, $r = \Omega(nk)$.
\end{prop}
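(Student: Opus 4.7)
The plan is to reduce to Lemma \ref{lem:rank-hardness}: I will show that a $(1\pm 1/2)$-multiplicative estimate of the optimal KKMC cost on $\bfK \sim \mu_{\rank}(n,k)$ already separates the rank-$k$ from the rank-$(k+1)$ case, so the $\Omega(nk)$ rank-decision lower bound transfers directly.

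First I would analyze the two cases of $\mu_{\rank}$. When no substitution happens, every input point is some $\bfe_j$ with $j \in [k]$, and placing all copies of each distinct basis vector into its own cluster achieves $\mathrm{OPT} = 0$. When one point is replaced by $\bfe_{k+1}$, I would invoke a coupon-collector/Chernoff argument (using $n = \Omega(k^2)$, hence $n \gg k \log k$) to conclude that with probability $1 - o(1)$ every $\bfe_j$ with $j \in [k]$ also appears at least once among the remaining $n-1$ samples. Conditioned on this event, the data realizes $k+1$ distinct types, and the pigeonhole principle forces some cluster to contain points from two distinct basis vectors $\bfe_a, \bfe_b$. A direct centroid computation using $\|\bfe_a - \bfe_b\|_2^2 = 2$ shows such a cluster contributes $\tfrac{2 n_a n_b}{n_a + n_b} \geq 1$ to the cost whenever $n_a, n_b \geq 1$, so $\mathrm{OPT} \geq 1$ in the rank-$(k+1)$ case.

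Given a $(1\pm 1/2)$-approximate cost $\hat c$, we then have $\hat c = 0$ deterministically in the rank-$k$ case and $\hat c \geq \mathrm{OPT}/2 \geq 1/2$ on the good event in the rank-$(k+1)$ case, so thresholding $\hat c$ at $1/4$ yields a correct rank decision. The overall success probability is at least $2/3 - o(1) > 5/8$, and the assumption $k^2 = O(n)$ keeps us in the regime where the claimed $\Omega(nk)$ bound does not exceed the trivial $\Omega(n^2)$ one. Lemma \ref{lem:rank-hardness} then delivers $r = \Omega(nk)$.

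The main obstacle is the lower bound $\mathrm{OPT} \geq 1$ in the rank-$(k+1)$ case: it needs to hold uniformly across essentially every realization of the input, otherwise the reduction would burn too much of its success-probability budget. The pigeonhole-plus-centroid bound is tight (the inequality $\tfrac{2 n_a n_b}{n_a + n_b} \geq 1$ is attained only at $n_a = n_b = 1$) but still comfortably enough; the only place that could fail is if some $\bfe_j$ with $j \in [k]$ is missing from the sample, which is precisely why the coupon-collector step and the assumption $k^2 = O(n)$ are needed. Everything else is a routine packaging of Lemma \ref{lem:rank-hardness}.
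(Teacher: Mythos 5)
Your proposal is correct and follows essentially the same route as the paper: reduce to Lemma \ref{lem:rank-hardness} by observing that the optimal cost is $0$ in the rank-$k$ case and nonzero in the rank-$(k+1)$ case (conditioned on the high-probability event that all $k$ basis vectors appear), so a $(1\pm 1/2)$ cost estimate decides the rank with probability at least $5/8$. Your quantitative bound $\mathrm{OPT}\geq \tfrac{2n_an_b}{n_a+n_b}\geq 1$ is a correct but unnecessary refinement of the paper's simpler observation that the cost merely cannot be $0$ when $k+1$ distinct types are present.
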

\begin{proof}
Assume that $nk = o(n^2)$, since otherwise the lower bound is best possible. As in the proof of Theorem \ref{thm:krr-hardness-main}, we have by Chernoff bounds that the number of $\bfe_j$ drawn is $(1\pm\tfrac1{100})n/k$ with probability at least $99/100$ for $n/k$ large enough for all $j\in[k]$ simultaneously. Note then that the optimal cost when $\bfx\sim\Unif(\{\bfe_j:j\in[k]\})$ is $0$, since we can just take the centers to each be $\bfe_j$ for $j\in[k]$. On the other hand,  when $\bfx=\bfe_{k+1}$, then there are more than $k$ types of vectors and thus the cost cannot be $0$.

Thus, with probability $99/100$, the algorithm must correctly distinguish the two cases whenever the algorithm correctly approximates the optimal cost up to $(1\pm1/2)$ relative error. Since the algorithm does this with probability $2/3$, by the union bound, the overall algorithm succeeds with probability at least $1-(\tfrac1{100}+\tfrac13)\geq 5/8$. Thus, the algorithm reads $\Omega(nk)$ kernel entries by Lemma \ref{lem:rank-hardness}.
\end{proof}

\subsection{Main lower bound}
\subsubsection{The construction}
We describe our hard input distribution $\mu_\KKMC(n,k,\eps)$, formed as an inner product matrix of points drawn from the ambient space $\mathbb R^{k/\eps}$.

\begin{dfn}[Hard input distribution -- kernel $k$-means clustering]\label{dfn:kkmc-hard-dist}
Let $\eps>0,k,n$ be such that $k\binom{\eps^{-1}}2 = o(n)$ and $k/\eps = \omega(1)$. We first define a distribution $\nu_\KKMC(k,\eps)$ over vectors in $\mathbb R^{k/\eps}$ as follows. First divide the $k/\eps$ coordinates into $k$ blocks of $1/\eps$ dimensions. Then, we sample our point set as follows: first uniformly select some block $j\in[k]$, and then uniformly select one of the $\binom{1/\eps}{2}$ pairs $(j_1,j_2)$ where $j_1,j_2\in[1/\eps]$ with $j_1\neq j_2$, and then output $\bfv_{j,j_1,j_2}\coloneqq(\bfe_{\ell_1} + \bfe_{\ell_2})/\sqrt2$, where $\ell_1 = j/\eps + j_1, \ell_2 = j/\eps + j_2$. We then generate an i.i.d.\ sample $\{\bfx_i\}_{i=1}^n$ of $n$ points drawn from $\nu_\KKMC(k,\eps)$ and then generate $\bfK\sim\mu_\KKMC(n,k,\eps)$ by setting it to be the inner product matrix of $\{\bfx_i\}_{i=1}^n$, i.e.\ $\bfe_i^\top\bfK\bfe_j\coloneqq \bfx_i\cdot\bfx_j$. For $\bfx$ in the support of $\nu_\KKMC(k,\eps)$, we let $\block(\bfx)$ denote the $j\in[k]$ such that $\bfx = \bfv_{j,j_1,j_2}$.
\end{dfn}

Intuitively, we are adding ``edges'' between pairs of coordinates in the same block of $1/\eps$ coordinates, so that clusterings that associate points in the same block together have lower cost.

In this section, we will prove the following main theorem:
\begin{thm}[Query lower bound for kernel $k$-means clustering]\label{thm:main-lower-bound}
    Let $\eps,k,n$ be such that $k\binom{\eps^{-1}}2 = o(n)$. Suppose an algorithm $\mathcal A$ finds a $(1\pm\eps)$-approximate solution to a kernel $k$-means clustering instance drawn from $\mu_\KKMC(n,k,\eps)$ with probability at least $2/3$ over its random coin tosses and the input distribution. Then, $\mathcal A$ makes at least $\Omega(nk/\eps)$ kernel queries.
\end{thm}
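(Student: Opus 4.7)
The plan follows the three-stage outline from the techniques section.

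First, I would estimate $\min_{\mathcal C}\cost(\mathcal C) \leq n(1-2\eps) + o(n\eps)$ by analyzing the block clustering $C_j \coloneqq \{\bfx_i : \block(\bfx_i) = j\}$. A Chernoff bound gives $|C_j| = (1\pm o(1))n/k$ with high probability, and by symmetry within each block the centroid $\bfmu_j$ concentrates near $\sqrt{2}\eps\sum_{\ell\in\text{block }j}\bfe_\ell$, so each $\bfx \in C_j$ has per-point cost $\|\varphi(\bfx)-\bfmu_j\|^2 = 1 - 4\eps + 2\eps + o(\eps) = 1 - 2\eps + o(\eps)$.

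Second, I would establish two structural lower bounds on the cost of any clustering $\mathcal C$ of a random input. (a) For any subset $T \subseteq [n]$ with $|T| \leq 2n/5$, $\sum_{\bfx\in T}\|\varphi(\bfx)-\bfmu_{C(\bfx)}\|^2 \geq |T| - (77/40)n\eps$. This expands $\|\varphi(\bfx)-\bfmu_C\|^2 \geq 1 - 2\langle\bfx,\bfmu_C\rangle$ and upper-bounds $\sum_{\bfx\in T}\langle\bfx,\bfmu_{C(\bfx)}\rangle$ by a case analysis on cluster sizes together with a Chernoff bound on the total number of nonzero inner products, tracking constants to produce $77/40$. (b) Letting $T' \coloneqq \{\bfx : \Pr_{\bfx' \in C(\bfx)}[\langle\bfx,\bfx'\rangle \neq 0] = o(\eps)\}$, $\sum_{\bfx\in T'}\|\varphi(\bfx)-\bfmu_{C(\bfx)}\|^2 \geq |T'|(1-o(\eps))$, since $\langle\bfx,\bfmu_{C(\bfx)}\rangle \leq \E_{\bfx'\in C(\bfx)}\langle\bfx,\bfx'\rangle \leq \Pr_{\bfx'\in C(\bfx)}[\langle\bfx,\bfx'\rangle \neq 0]$ using $\langle\bfx,\bfx'\rangle \in \{0,1/2,1\}$. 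Combining: if $\mathcal C$ is a $(1+\eps/40)$-approximation and $|T'| \geq 3n/5$, then setting $T \coloneqq [n]\setminus T'$ gives $\cost(\mathcal C) \geq n - (77/40)n\eps - o(n\eps)$, which exceeds $(1+\eps/40)\min_{\mathcal C}\cost(\mathcal C) \leq n(1-79\eps/40+o(\eps))$ since $77/40 < 79/40$; contradiction. Hence $|T| > 2n/5$, i.e.\ at least $\Omega(n)$ input points lie in clusters where a uniformly random cluster-mate has $\Omega(\eps)$ probability of nonzero inner product.

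Third, and this is the main obstacle, I would prove that meeting this condition requires $\Omega(nk/\eps)$ kernel queries, via a Yao-style reduction analogous to Lemma \ref{lem:krr-hardness}. The key observations are: two i.i.d.\ samples from $\nu_\KKMC$ have $\langle\bfx,\bfx'\rangle \neq 0$ only if they share a block ($\Pr = 1/k$) and share a coordinate within it ($\Pr = \Theta(\eps)$ conditionally), for $\Theta(\eps/k)$ unconditionally; and the condition $\Pr_{\bfx'\in C}[\langle\bfx,\bfx'\rangle \neq 0] = \Omega(\eps)$ forces an $\Omega(1)$ fraction of $C$ to lie in $\bfx$'s block, since same-block inner products are only $\Theta(\eps)$ likely to be nonzero. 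Assume $r = o(nk/\eps)$; by Yao, $\mathcal A$ may be taken deterministic. Because the posterior on any single input stays close to uniform, the per-query nonzero probability remains $O(\eps/k)$ throughout adaptive execution, so the expected number of nonzero-answered queries is at most $r \cdot O(\eps/k) = o(n)$; by Markov, with constant probability the algorithm sees at most $n/100$ nonzero queries, hence at most $n/50$ points are ever involved in one. The remaining $\geq 49n/50$ ``dark'' points have all their queries return $0$, and for each such $\bfx_i$ a total-variation calculation as in Lemma \ref{lem:krr-hardness}---now over $k$ block hypotheses rather than $2$---shows that the conditional distribution of $\block(\bfx_i)$ given the transcript remains $(1\pm o(1))$-close to uniform on $[k]$, since each zero query against a known-block $\bfx'$ contributes at most a $(1-O(\eps))$ factor to the relevant likelihood ratio and in expectation only $o(1/\eps)$ of $\bfx_i$'s queries hit $\bfx_i$'s own block. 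Thus $\mathcal A$'s cluster assignment for $\bfx_i$ is essentially independent of $\block(\bfx_i)$, and the probability over $\block(\bfx_i)$ that the assigned cluster contains an $\Omega(1)$ same-block fraction is $O(1/k)$. Summing over dark and non-dark points yields at most $O(n/k) + n/50 < 2n/5$ good points for $k$ large enough, contradicting $|T| > 2n/5$.

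The main technical hurdle is cleanly quantifying the posterior uniformity over $\block(\bfx_i)$ under adaptive queries and the partial information leakage carried by zero answers: the argument must generalize the two-case total-variation bound of Lemma \ref{lem:krr-hardness} to a $k$-way hypothesis test and carefully bound the per-query nonzero probability across the adaptive execution. Once this is in hand, the Yao $+$ symmetry $+$ union-bound template from the KRR case extends directly, with a final averaging step tying per-point query budgets to the global structural requirement $|T| > 2n/5$.
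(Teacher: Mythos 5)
Your first two stages match the paper's argument essentially exactly (the $n(1-2\eps)$ cost of the block clustering, the $|T|-(77/40)n\eps$ bound for $|T|\leq 2n/5$, and the conclusion that at least $2n/5$ points must land in clusters with $\Omega(\eps)$ neighbor-sampling probability). Your third stage, however, departs from the paper and contains the genuine gap. The paper does \emph{not} argue directly about the posterior of $\block(\bfx_i)$ under adaptive pairwise queries among $n$ unknown points. Instead it reduces to a labeling problem (\textsc{LabelKKMC}) in which the first $n/2$ points come \emph{labeled}, plants a single unknown point $\bfx$ at a random position, and simulates every query not touching $\bfx$ for free; the remaining queries have the form $\bfx\cdot\bfv_{\ell,\ell_1,\ell_2}$ against \emph{known} vectors, so the two-hypothesis TV argument of Lemma \ref{lem:krr-hardness} generalizes cleanly to a $\log_2 k$-fold product of independent binary tests. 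To connect this to clustering, the paper spends $O(n/\eps)$ \emph{additional} queries after the clustering step: for each good point it samples $O(1/\eps)$ cluster-mates, finds a neighbor, and reads off the block from the labeled half. Your proposal omits both devices, and the step you flag as the "main technical hurdle" — that the conditional distribution of $\block(\bfx_i)$ given the full transcript of pairwise queries stays near-uniform, \emph{and} that the cluster assignment is therefore "essentially independent" of $\block(\bfx_i)$ — is precisely the part that is not established and does not follow from a per-query union bound. A nonzero answer to $\bfx_i\cdot\bfx_j$ couples the posteriors of two unknown points, the transcript (hence the output partition) is itself a function of $\bfx_i$, and the event "$\bfx_i$'s assigned cluster has many same-block mates" depends on the joint posterior of all $n$ blocks, not on $\block(\bfx_i)$ alone. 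Without a concrete mechanism (such as the paper's planted-point-plus-labels reduction) to break this circularity, the final counting step is unsupported.

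A secondary, more localized error: the condition $\Pr_{\bfx'\in C(\bfx)}[\langle\bfx,\bfx'\rangle\neq 0]\geq \eps/80$ does \emph{not} force an $\Omega(1)$ fraction of $C(\bfx)$ to lie in $\bfx$'s block. Any $\bfx'$ with nonzero inner product is automatically in the same block, so the condition only forces an $\Omega(\eps)$ same-block fraction; your justification implicitly assumes the same-block members of $C(\bfx)$ are distributed like fresh samples from the block, which an adversarial partition need not respect (a cluster of size $O(n\eps/k)$ can consist entirely of points sharing a coordinate with $\bfx$). The $O(1/k)$ failure bound you want is still recoverable — apply Markov to $\E_{\bfx_i}\bigl[\Pr_{\bfx'\in C}[\langle\bfx_i,\bfx'\rangle\neq 0]\bigr]\leq 8\eps/k$ for a fixed candidate cluster $C$ — but only once the independence issue above is resolved, which brings you back to the main gap.
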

This lower bound is tight up to logarithmic factors, nearly matching for example the ridge leverage score algorithm of Theorem 16 in \cite{musco2017recursive} which reads $O\parens*{\tfrac{nk}{\eps}\log\tfrac{k}{\eps}}$ kernel entries.

\subsubsection{Proof overview}
In our proof, we will think of any clustering as being divided into two groups: the points $S$, which are clustered to ``dense'' clusters, and the points $\overline S$, which are clustered to ``sparse'' clusters. Roughly, if we fix a point in a dense cluster and sample points randomly from that cluster, then we have a high probability (at least $\Omega(\eps)$) of finding a point that has nonzero inner product with it. We then argue that if there are not enough points in dense clusters, then the cost of the clustering is too large, so the clustering cannot be a $(1+\eps)$-approximate solution to the optimal kernel $k$-means clustering solution. Then, we show that finding a lot of points in dense clusters can solve a computational problem that requires $\Omega(nk/\eps)$ kernel queries, which then yields Theorem \ref{thm:main-lower-bound}.

The main work that needs to be done is lower bounding the cost of clustering the points that belong to dense clusters, since it is easy to see that sparse clusters have high cost. Among the dense clusters, if the size of the cluster is at least $n/k$, which we call the ``large'' clusters, then the cost is easy to bound. The worrisome part is the ``small'' clusters, which have the potential of having very small cost per point. We will show that if we carefully bound the cost of small clusters as a function of their size, then if we don't have too many points total that belong to small clusters, then the cost is still high enough to achieve the desired result.

\subsection{Cost computations}
\subsubsection{The cost of a good clustering}
Consider the clustering that assigns all points supported in the same block with each other. We first do our cost computations for the average case, where every vector $\bfv_{j,j_1,j_2}$ is drawn the same number of times. Then, the first block has center
\begin{eqn}
    \frac1{\binom{\eps^{-1}}2}\sum_{(i,j)\in\binom{[\eps^{-1}]}2}\frac{\bfe_i + \bfe_j}{\sqrt2} = \frac{(\eps^{-1}-1)}{\sqrt2\binom{\eps^{-1}}2}\sum_{i\in[\eps^{-1}]}\bfe_i = \sqrt2\eps\sum_{i\in[\eps^{-1}]}\bfe_i
\end{eqn}
and the center for the rest of the blocks is similar. Then, the cost of a single point $(\bfe_{i^*} + \bfe_{j^*})/\sqrt2$ is
\begin{eqn}
    \norm*{\frac{\bfe_{i^*} + \bfe_{j^*}}{\sqrt2} - \sqrt2\eps\sum_{i\in[\eps^{-1}]}\bfe_i}_2^2 = 2\parens*{\frac1{\sqrt2} - \sqrt2\eps}^2 + \parens*{\eps^{-1}-2}\parens*{\sqrt2\eps}^2 = 1 - 4\eps + \eps^{-1}2\eps^2 - 4\eps^2 = 1 - 2\eps - 4\eps^2.
\end{eqn}
Thus, the cost of this clustering is like $n(1-2\eps)$. Note that this computation also gives a lower bound on the cost of a cluster containing $n/k$ points, since for any cluster of size $n/k$, we can clearly improve its cost while we can swap points to be supported on the same block.

Now by Chernoff bounds, with probability tending to $1$ as $n/k\binom{\eps^{-1}}2$ tends to infinity, the cost of this clustering is bounded above by
\begin{eqn}
    n\parens*{1 - \parens*{1-\frac1{40}}2 \eps} = n\parens*{1-\frac{79}{40}\eps}.
\end{eqn}
and the cost of any cluster of size $n/k$ is bounded below by
\begin{eqn}
    \frac{n}{k}\parens*{1-\parens*{1+\frac1{40}}2\eps} = \frac{n}{k}\parens*{1-\frac{81}{40}\eps}.
\end{eqn}

This proves the following lemmas.
\begin{lem}[Cost bound for an optimal clustering]\label{lem:cost-optimal}
With probability at least $99/100$, the cost of an optimal clustering is at most $n(1-(79/40)\eps)$.
\end{lem}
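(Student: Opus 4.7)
The plan is to exhibit a specific clustering -- the natural block-wise clustering -- whose cost is at most $n(1-2\eps)$, hence certainly at most $n(1-(79/40)\eps)$. Since the optimal cost is bounded above by the cost of any clustering, this immediately implies the lemma, in fact deterministically (with probability $1$), which is strictly stronger than the stated claim. Notably, no concentration argument is needed despite the Chernoff-flavored discussion immediately preceding the lemma; Cauchy--Schwarz alone suffices.

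Concretely, define $\mathcal{C}^\dagger = \{C_j^\dagger\}_{j=1}^k$ by $C_j^\dagger \coloneqq \{\bfx_i : \block(\bfx_i) = j\}$, and let $n_j \coloneqq \abs{C_j^\dagger}$. For each coordinate $\ell$ in block $j$, let $c_\ell^{(j)}$ denote the number of points in $C_j^\dagger$ whose support contains $\bfe_\ell$. Since every $\bfx$ in the support of $\nu_\KKMC(k,\eps)$ satisfies $\norm*{\bfx}_2^2 = 1$, the standard variance identity gives
\[
    \cost(C_j^\dagger) = n_j - \frac{1}{n_j}\norm*{\sum_{\bfx \in C_j^\dagger}\bfx}_2^2.
\]
A coordinate-wise expansion yields $\sum_{\bfx \in C_j^\dagger}\bfx = (1/\sqrt{2})\sum_\ell c_\ell^{(j)}\bfe_\ell$, so $\norm*{\sum_{\bfx \in C_j^\dagger}\bfx}_2^2 = (1/2)\sum_\ell (c_\ell^{(j)})^2$. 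Because each point in $C_j^\dagger$ has exactly $2$ nonzero coordinates, $\sum_\ell c_\ell^{(j)} = 2n_j$. The Cauchy--Schwarz inequality applied to the $1/\eps$ values $\{c_\ell^{(j)}\}_\ell$ therefore yields $\sum_\ell (c_\ell^{(j)})^2 \geq (2n_j)^2 / (1/\eps) = 4\eps n_j^2$, and hence $\cost(C_j^\dagger) \leq n_j - (1/(2n_j)) \cdot 4\eps n_j^2 = n_j(1-2\eps)$. Summing over $j$ gives $\cost(\mathcal{C}^\dagger) \leq n(1-2\eps) \leq n(1-(79/40)\eps)$, as required.

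The main obstacle is essentially absent for this particular lemma: Cauchy--Schwarz delivers the upper bound tightly and deterministically, subsuming the Chernoff argument suggested in the preceding paragraph. The Chernoff machinery will become genuinely necessary for the companion \emph{lower} bound on the cost of any cluster of size $n/k$ (a distinct statement presumably appearing shortly afterward), since there one needs to bound $\sum_\ell (c_\ell^{(j)})^2$ from \emph{above}, which requires concentrating each $c_\ell^{(j)}$ near its mean $2\eps n_j$ rather than simply invoking an inequality.
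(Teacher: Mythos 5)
Your proof is correct, and it takes a genuinely different route from the paper's. The paper evaluates the per-point cost of the block-wise clustering in the balanced case where every vector type $\bfv_{j,j_1,j_2}$ appears equally often (getting roughly $1-2\eps$ per point), and then invokes Chernoff bounds to absorb the fluctuations of the counts around their means, which is where the $99/100$ probability and the slack from $2\eps$ down to $(79/40)\eps$ come from. Your argument instead uses the exact identity $\cost(C_j) = n_j - \tfrac{1}{2n_j}\sum_\ell (c_\ell^{(j)})^2$ together with Cauchy--Schwarz and the deterministic constraint $\sum_\ell c_\ell^{(j)} = 2n_j$, yielding $\cost(C_j)\leq n_j(1-2\eps)$ for every realization of the sample. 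This is strictly stronger (it holds with probability $1$ and does not need $n/(k\binom{\eps^{-1}}{2})\to\infty$) and is arguably the cleaner proof of this particular lemma. Your closing remark is also on target: the paper's Chernoff setup is not wasted, because the companion Lemma \ref{lem:cost-large-cluster} needs a \emph{lower} bound on the per-point cost of a size-$n/k$ cluster, i.e.\ an \emph{upper} bound on $\sum_\ell (c_\ell^{(j)})^2$, which Cauchy--Schwarz cannot provide and which genuinely requires concentration of the counts.
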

\begin{lem}[Cost bound for a large cluster]\label{lem:cost-large-cluster}
Let $C$ be a cluster of size at least $n/k$. Then with probability at least $99/100$, the cost per point of $C$ is bounded below by $1-(81/40)\eps$.
\end{lem}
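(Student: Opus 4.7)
The plan is to bound the cost from below via the identity $\cost(C) = |C| - \|\sum_{\bfx\in C}\bfx\|_2^2/|C|$ (valid because every data point has unit norm), and then to upper bound $\|\sum_{\bfx\in C}\bfx\|_2^2$ by controlling the coordinate-wise frequencies of the underlying i.i.d.\ sample with a Chernoff argument. Writing each $\bfx\in\supp(\nu_\KKMC)$ as $(\bfe_{\ell_1(\bfx)}+\bfe_{\ell_2(\bfx)})/\sqrt 2$ and defining the coordinate degree $d_\ell = |\{\bfx\in C : \ell\in\{\ell_1(\bfx),\ell_2(\bfx)\}\}|$ for $\ell\in[k/\eps]$, we have $\sum_\ell d_\ell = 2|C|$ (each point contributes to exactly two coordinates) and
\begin{eqn}
\Bigl\|\sum_{\bfx\in C}\bfx\Bigr\|_2^2 = \frac12\sum_{\ell\in[k/\eps]}d_\ell^2.
\end{eqn}
The advantage of this reformulation is that the $d_\ell$'s depend on $C$ only through its point counts, and are bounded above by purely data-set-level quantities that are independent of the clustering.

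Next I would control $\max_\ell d_\ell$ by observing that for a single draw $\bfx\sim\nu_\KKMC(k,\eps)$ and fixed $\ell$, the probability that $\ell$ lies in the support of $\bfx$ is $(1/\eps - 1)/(k\binom{1/\eps}{2}) = 2\eps/k$. Thus the number of data points in the full sample $\{\bfx_i\}_{i=1}^n$ containing $\ell$ is binomial with mean $2n\eps/k$, and $d_\ell$ is upper bounded by this count since $C\subseteq\{\bfx_i\}_{i=1}^n$. The hypothesis $k\binom{\eps^{-1}}{2} = o(n)$ forces $2n\eps/k = \omega(\log(k/\eps))$, so a standard multiplicative Chernoff bound combined with a union bound over the $k/\eps$ coordinates yields, with probability at least $99/100$,
\begin{eqn}
\max_{\ell\in[k/\eps]} d_\ell \le \parens*{1 + \tfrac{1}{80}}\frac{2n\eps}{k}.
\end{eqn}

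Finally, on this $99/100$-probability event, the crude inequality $\sum_\ell d_\ell^2 \leq (\max_\ell d_\ell)(\sum_\ell d_\ell)$ combined with the two displays above gives $\|\sum_{\bfx\in C}\bfx\|_2^2 \leq (1+\tfrac{1}{80})(2|C|n\eps/k)$, so
\begin{eqn}
\frac{\cost(C)}{|C|} \;\geq\; 1 - \parens*{1+\tfrac{1}{80}}\frac{2n\eps}{k|C|} \;\geq\; 1 - \tfrac{81}{40}\eps,
\end{eqn}
where the last inequality uses $|C|\geq n/k$. The main ``obstacle'' is really just bookkeeping: confirming that the Chernoff multiplicative deviation $1/80$ propagates to exactly the target $81/40$ factor, and that the $k/\eps$-wide union bound survives, both of which are automatic under the lemma's hypotheses. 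Notably, this approach cleanly bypasses the informal ``swap points into the same block'' argument alluded to in the preceding paragraph, since the bound on $\max_\ell d_\ell$ automatically rules out any cluster that concentrates too much mass on any single coordinate, regardless of which blocks its points come from.
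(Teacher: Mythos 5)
Your proof is correct, and it takes a genuinely different route from the paper's. The paper proves this lemma by computing the per-point cost of the idealized configuration (a single block with every vector type appearing its expected number of times), obtaining $1-2\eps$, and then asserting via an informal exchange argument ("we can swap points to be supported on the same block") that no cluster of size $n/k$ can do better, before adjusting by a $1/40$ Chernoff deviation on the vector-type counts. You instead start from the exact identity $\cost(C) = \abs{C} - \tfrac{1}{2\abs{C}}\sum_\ell d_\ell^2$ (the same identity the paper derives later in its small-cluster lemma, valid since every point is a unit vector), and then replace the paper's configuration-optimization step with the elementary bound $\sum_\ell d_\ell^2 \leq (\max_\ell d_\ell)\sum_\ell d_\ell$, controlling $\max_\ell d_\ell$ by Chernoff plus a union bound over the $k/\eps$ coordinate degrees of the \emph{full} data set. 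This buys two things: it removes the unproven swapping step, and, because the high-probability event concerns only data-set-level quantities, the resulting inequality holds deterministically and simultaneously for every subset of size at least $n/k$, which is exactly the form needed when the lemma is invoked inside Lemma \ref{lem:cost-bound-l-clusters} for sub-clusters of an arbitrary clustering. The one place where you are no more rigorous than the paper is the concentration hypothesis: $k\binom{\eps^{-1}}{2}=o(n)$ gives $n\eps/k=\omega(\eps^{-1})$, which covers the $\log(1/\eps)$ part of the union bound but not, strictly speaking, the $\log k$ part; the paper's own Chernoff-plus-union step has the identical implicit assumption, so this is a shared (and minor) looseness rather than a gap in your argument.
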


\subsubsection{The cost of a small cluster}
We will first prove a lower bound on the cost of a fixed cluster $C$. In this section, we will parameterize our lower bound only by the size of the cluster, and then later use this result to lower bound the cost of any clustering. When we prove our lower bound result, it will be useful to introduce the quantities
\begin{eqn}
    \alpha\coloneqq\frac{n}{k\binom{\eps^{-1}}2},\qquad \tau\coloneqq\frac{\abs{C}}\alpha.
\end{eqn}
Intuitively, $\alpha$ is the number of copies of a vector $\bfv_{j,j_1,j_2}$ we expect to draw, and $\tau$ is the minimal number of different types of vectors $\bfv_{j,j_1,j_2}$ we can have in our cluster $C$. Note that by the union and Chernoff bounds, there are $(1\pm\gamma)\alpha$ copies of each vector with probability $k\binom{\eps^{-1}}2\exp\parens*{-\gamma^2\alpha/3}$, where $\gamma$ is a small constant to be chosen. The lower bound we prove then is the following.

\begin{lem}[Cost bound for a small cluster]\label{lem:cost-fixed-cluster}
    Let $\gamma$ be a constant small enough so that
    \begin{eqn}\label{eqn:gamma-choice}
        (1+\gamma)^2\frac{(1+2\sqrt\gamma)^2}{(1-2\sqrt\gamma)^3}\leq 1+\frac1{20}.
    \end{eqn}
    Let $\alpha$ and $\tau$ be defined as above, with respect to a cluster $C$ of points drawn from $\nu_\KKMC(k,\eps)$ with size bounded by
    \begin{eqn}\label{eqn:cluster-size-bound}
        \frac{\alpha}{\gamma} = \Theta(\alpha)\leq \abs{C}\leq \frac{n}{k}
    \end{eqn}
    Additionally, define the quantity
    \begin{eqn}\label{eqn:def-kappa}
        \kappa\coloneqq (\eps^{-1}-1/2) - \sqrt{(\eps^{-1}-1/2)^2 - 2\tau}.
    \end{eqn}
    Then, the cost on this cluster is bounded below by
    \begin{eqn}
    \cost(C)\geq \abs{C} - \parens*{1+\frac1{20}}\frac{\alpha}{2}\frac{\kappa(\eps^{-1}-1)^2 + \kappa^2(\eps^{-1}-\kappa)}{\tau}.
    \end{eqn}
\end{lem}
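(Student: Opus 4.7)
Because every vector $\bfv_{j,j_1,j_2}=(\bfe_{\ell_1}+\bfe_{\ell_2})/\sqrt 2$ has unit $\ell_2$-norm, the cluster cost simplifies to $\cost(C) = |C|(1 - \norm*{\bfmu}_2^2)$ with $\bfmu$ the centroid, so it suffices to upper bound $|C|\norm*{\bfmu}_2^2$. Different blocks occupy disjoint coordinate subsets, so letting $C^{(b)}$ be the points of $C$ from block $b$ and $\bar{\bfmu}^{(b)}$ its local centroid, we have the block decomposition $|C|^2\norm*{\bfmu}_2^2 = \sum_b |C^{(b)}|^2\norm*{\bar{\bfmu}^{(b)}}_2^2$. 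Inside a single block, I encode each point by its unordered pair of support coordinates, let $n_e$ denote the multiplicity of edge $e=\{j_1,j_2\}$, and let $d_i = \sum_{e\ni i}n_e$; a short expansion gives $|C^{(b)}|^2\norm*{\bar{\bfmu}^{(b)}}_2^2 = \tfrac12\sum_i d_i^2$. A Chernoff bound plus union bound over all $k\binom{\eps^{-1}}{2}$ types yields $n_e \leq N\coloneqq (1+\gamma)\alpha$ for every edge with probability at least $99/100$, so the task reduces to a purely combinatorial upper bound on $\tfrac12\sum_b\sum_i(d_i^{(b)})^2$ over capacitated multigraphs on $\eps^{-1}$ vertices with per-edge capacity $N$ and total edge count $|C|$.

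I next argue that the extremal quantity is maximized when all of $C$ lies in a single block: since $|C|\leq n/k = \alpha\binom{\eps^{-1}}{2}\leq N\binom{\eps^{-1}}{2}$, one block has enough capacity, and moving a unit of an edge from a sparser block into the densest block only increases $\sum_i d_i^2$ by convexity of $x^2$. Within a single block, the extremum is the ``hot subset'' configuration parameterized by an integer $\kappa^*$: pick a size-$\kappa^*$ set $U$ of coordinates, saturate every edge inside $U$ and every $U$-to-complement edge at capacity $N$, and leave complement-internal edges empty. This fits exactly $N\kappa^*(\eps^{-1}-(\kappa^*+1)/2)$ points, assigns hot vertices degree $N(\eps^{-1}-1)$ and cold vertices degree $N\kappa^*$, and so $\sum_i d_i^2 = N^2[\kappa^*(\eps^{-1}-1)^2 + (\kappa^*)^2(\eps^{-1}-\kappa^*)]$. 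Extremality is established by a pairwise swap argument: if any edge between two non-hot vertices has positive multiplicity, one can transfer a unit of it to an edge incident to a maximum-degree hot vertex, strictly increasing $\sum_i d_i^2$ via the convexity identity $(d+1)^2+(d'-1)^2>d^2+(d')^2$ whenever $d'\leq d$.

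Finally, I translate $\kappa^*$ into the lemma's $\kappa$. Solving $N\kappa^*(\eps^{-1}-(\kappa^*+1)/2) = |C| = \tau\alpha$ identifies $\kappa^*$ as the smaller root of $x^2 - (2\eps^{-1}-1)x + 2\tau/(1+\gamma) = 0$, while the lemma's $\kappa$ is the smaller root of the same quadratic with $2\tau/(1+\gamma)$ replaced by $2\tau$. Using the identity $\sqrt{a^2-b}-\sqrt{a^2-c}=(c-b)/(\sqrt{a^2-b}+\sqrt{a^2-c})$ together with the cluster-size bound (\ref{eqn:cluster-size-bound}) to stay away from the degenerate discriminant, one extracts multiplicative comparisons such as $\kappa^*\leq(1+2\sqrt\gamma)\kappa$ and $\eps^{-1}-\kappa^*\geq(1-2\sqrt\gamma)(\eps^{-1}-\kappa)$. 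Substituting into $\sum_i d_i^2 \leq N^2[\kappa^*(\eps^{-1}-1)^2+(\kappa^*)^2(\eps^{-1}-\kappa^*)]$, dividing by $2|C| = 2\tau\alpha$, and using $N = (1+\gamma)\alpha$, the total multiplicative overhead above the target $\alpha[\kappa(\eps^{-1}-1)^2+\kappa^2(\eps^{-1}-\kappa)]/(2\tau)$ is exactly $(1+\gamma)^2(1+2\sqrt\gamma)^2/(1-2\sqrt\gamma)^3$, which is at most $1+1/20$ by the $\gamma$-choice hypothesis (\ref{eqn:gamma-choice}).

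The main obstacle I expect is the extremal-structure claim in the second paragraph: the swap argument must be chained so that each step strictly increases $\sum_i d_i^2$ while respecting the capacity and total-edge-count constraints, and so that the process terminates at the hot-subset configuration rather than a spurious local maximum. Solving the natural continuous relaxation of the problem by KKT conditions---where the Lagrangian forces each $n_e \in \{0,N\}$ at any extremum---is a useful cross-check that pins down the support structure before the integer-valued swap argument is applied.
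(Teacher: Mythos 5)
Your proposal is correct and follows essentially the same route as the paper's proof: both reduce the cost to $\abs{C}-\tfrac1{2\abs{C}}\sum_i d_i^2$, use Chernoff bounds to cap each type's multiplicity at $(1+\gamma)\alpha$, apply convexity-based swap arguments to show the extremizer is the saturated ``hot subset'' (fill-coordinates-greedily) configuration, solve the same quadratic for the number of saturated coordinates, and absorb the $(1+\gamma)$ and $(1\pm2\sqrt\gamma)$ discrepancies into the slack guaranteed by the choice of $\gamma$ in (\ref{eqn:gamma-choice}). The extremal-structure step you flag as the main obstacle is handled in the paper by exactly the two swap arguments you describe (one to push multiplicities to $\{0,N\}$, one to concentrate degree on the highest-count coordinates).
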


\begin{rem}
Note that if a cluster has size at most $\alpha$, then its cost can be $0$ by taking them all to be the same vector, and here, we only require the cluster to have size a constant times $\alpha$ to get the lower bound.
\end{rem}

\begin{proof}
    To prove this lemma, we will first reduce to the case of considering clusters which have all copies of all of its vectors, and then conclude the lower bound by optimizing the cost of clusters of this form.

    \paragraph{Reduction to maximizing a sum of squares.}
    Since we wish to lower bound the cost of this cluster, we assume that it is supported on $\eps^{-1}$ coordinates since using more than $\eps^{-1}$ coordinates is clearly suboptimal, and we make take the entire cluster to be drawn from one block by the upper bound on the size of $C$. Then for $i\in[\eps^{-1}]$, let $n_i$ denote the number of vectors supported on the $i$th coordinate, so $\sum_{i\in[\eps^{-1}]}n_i = 2\abs{C}$. Let $\bfmu$ denote the center of $C$. Then, it's easy to see that
    \begin{eqn}
      \bfmu = \frac1{\abs{C}}\sum_{i\in[\eps^{-1}]} \frac{n_i}{\sqrt2}\bfe_i.
    \end{eqn}
    Thus, for each point $\bfv_{i,j}\coloneqq (\bfe_i+\bfe_j)/\sqrt2$ in $C$, its cost contribution is
    \begin{eqn}
      \parens*{\frac1{\sqrt2}-\bfmu_i}^2 + \parens*{\frac1{\sqrt2}-\bfmu_j}^2 + \sum_{\ell\in[\eps^{-1}]\setminus\{i,j\}}\parens*{0-\bfmu_\ell}^2 = 1 + \norm*{\bfmu}_2^2 - \sqrt2(\bfmu_i+\bfmu_j).
    \end{eqn}
    Then, the total cost for the cluster is
    \begin{eqn}
      \sum_{\bfv_{i,j}\in C}1+\norm{\bfmu}_2^2 - \sqrt2(\bfmu_i+\bfmu_j) &= \sum_{\bfv_{i,j}\in C}1+\norm{\bfmu}_2^2 - \sqrt2\parens*{\frac{n_i}{\sqrt2\abs{C}}+\frac{n_j}{\sqrt2\abs{C}}} \\
      &= \abs{C}(1+\norm{\bfmu}_2^2) - \frac1{\abs{C}}\sum_{\bfv_{i,j}\in C} n_i + n_j \\
      &= \abs{C}\parens*{1+\sum_{i\in[\eps^{-1}]}\frac{n_i^2}{2\abs{C}^2}} -  \frac1{\abs{C}}\sum_{i\in[\eps^{-1}]} n_i^2 \\
      &= \abs{C} -  \frac1{2\abs{C}}\sum_{i\in[\eps^{-1}]} n_i^2.
    \end{eqn}
    Thus, going forward, we will forget about the $\abs{C}$ and focus only on the sum of squared counts. In the following arguments, we may give the impression of adding vectors to the cluster to bound this quantity above, but we will do this without touching this $\abs{C}$ term.

    \paragraph{Discretization.}
    Consider two vectors $\bfv_{i_1,i_2},\bfv_{j_1,j_2}$ where $n_{i_1}+n_{i_2}\geq n_{j_1}+n_{j_2}$. Then, we have that
    \begin{eqn}
        (n_{i_1}+1)^2 + (n_{i_2}+1)^2 + (n_{j_1}-1)^2 + (n_{j_2}-1)^2 &= n_{i_1}^2 + n_{i_2}^2 + n_{j_1}^2 + n_{j_2}^2 + 4 + 2(n_{i_1}+n_{i_2}-n_{j_1}-n_{j_2}) \\
        &> n_{i_1}^2 + n_{i_2}^2 + n_{j_1}^2 + n_{j_2}^2
    \end{eqn}
    so we can strictly improve the cost of any clustering by changing a vector with low total count of coordinates to one with higher total count of coordinates. Thus, we may reduce our lower bound proof to the case where every vector type $\bfv_{i,j}$ has every copy of itself, except for possibly one ``leftover'' vector type. By adding in all copies of this leftover vector type, which only makes the sum of squared counts larger, we assume that every vector type has every copy of itself.

    \paragraph{Filling up coordinates.}
    We do a similar argument as in the above to further constrain the form of the clusters we consider. Note that
    \begin{eqn}
        (n_i+1)^2 + (n_j-1)^2 = n_i^2 + n_j^2 + 2 + 2(n_i-n_j) > n_i^2 + n_j^2
    \end{eqn}
    for $n_i\geq n_j$, so we improve the clustering by iteratively swapping vectors $v_{i,j}$ to $v_{i^*,j}$ where $i^*$ is the coordinate with the largest count $n_{i^*}$ that still hasn't exhausted the $(\eps^{-1}-1)$ different types of vectors $v_{i^*,j}$ for $j\in[\eps^{-1}]\setminus\{j\}$ that is supported on the $i^*$th coordinate. By relabeling the coordinates if necessary, WLOG assume that we fill up the coordinates $i=1,2,\dots,\eps^{-1}$ in order.

    On the first coordinate, we can use $(\eps^{-1}-1)$ types of vectors to fill it all the way up to
    \begin{eqn}
      n_1 = (\eps^{-1}-1)(1\pm\gamma)\alpha = (1\pm\gamma)(\eps^{-1}-1)\frac{n}{k\binom{\eps^{-1}}{2}} = (1\pm\gamma)\frac{2n\eps}{k}
    \end{eqn}
    vectors, where the $(\eps^{-1}-1)$ types of vectors are $(\bfe_1+\bfe_j)/\sqrt2$ for $j\in[\eps^{-1}]\setminus\{1\}$. Note that at this point, every other coordinate will have $(1\pm\gamma)\alpha$ vectors. Then we need to maximize the second coordinate, which already has $1$ type and $(1\pm\gamma)\alpha$ vectors, so we can use $(\eps^{-1}-2)$ additional types of vectors to fill it all the way up, at which point all the other coordinates will have $2(1\pm\gamma)\alpha$ vectors. We can do this until we fill up $K$ coordinates at which point we have used $T$ types of vectors. Note that we can solve for $K$ by solving
    \begin{eqn}
      T = \sum_{i=1}^K \eps^{-1}-i = K\eps^{-1} - \frac{K(K+1)}{2} &\iff \frac12K^2 + \parens*{\frac12 - \eps^{-1}}K + T = 0 \\
      &\iff K = (\eps^{-1}-1/2) \pm \sqrt{(\eps^{-1}-1/2)^2 - 2T}.
    \end{eqn}
    Since we can only have $\eps^{-1}$ coordinates, we need to take the solution
    \begin{eqn}
      K = (\eps^{-1}-1/2) - \sqrt{(\eps^{-1}-1/2)^2 - 2T}.
    \end{eqn}

    \paragraph{Bounding.}
    Note that from filling up the last leftover vector type, we have that $T\leq \ceil{\abs{C}/(1-\gamma)\alpha}$ and $T\geq \floor{\abs{C}/(1+\gamma)\alpha}$. Thus, $\abs{C} = (1\pm\gamma)\alpha (T\pm 1)$. Now note that we require that $\alpha/\gamma\leq \abs{C}$ in equation (\ref{eqn:cluster-size-bound}), which means that
    \begin{eqn}
        \tau = \frac{\abs{C}}{\alpha}\geq \frac{\alpha/\gamma}{\alpha} = \frac1\gamma\implies \tau\gamma\geq 1
    \end{eqn}
    and thus $T = (1\pm\gamma)\tau\pm 1 = (1\pm2\gamma)\tau$. It then follows that for $\kappa$ defined as in (\ref{eqn:def-kappa}),
    \begin{eqn}
        K &= (\eps^{-1}-1/2) - \sqrt{(\eps^{-1}-1/2)^2 - 2(1\pm2\gamma)\tau} \\
        &= \bracks*{(\eps^{-1}-1/2) - \sqrt{(\eps^{-1}-1/2)^2 - 2\tau}} \pm \sqrt{4\gamma\tau} \\
        &= \kappa \pm \sqrt{4\gamma\binom{\kappa}2} = (1\pm2\sqrt{\gamma})\kappa
    \end{eqn}
    For $K$ coordinates we fill the coordinates all the way up to $(\eps^{-1}-1)(1\pm\gamma)\alpha$ and otherwise we have $K(1\pm\gamma)\alpha$ vectors. Then, the sum of squared counts is bounded by
    \begin{eqn}
      \sum_{i\in[\eps^{-1}]}n_i^2 &\leq K \parens*{(\eps^{-1}-1)(1+\gamma)\alpha}^2 + (\eps^{-1}-K)\parens*{K (1+\gamma)\alpha}^2 \\
      &= ((1+\gamma)\alpha)^2\bracks*{K(\eps^{-1}-1)^2 + K^2(\eps^{-1}-K)}
    \end{eqn}
    so the cost on this cluster is at least
    \begin{eqn}
        \abs{C} - \frac1{2\abs{C}}\sum_{i\in[\eps^{-1}]}n_i^2 &\geq \abs{C} - \frac1{2\abs{C}}((1+\gamma)\alpha)^2\bracks*{K(\eps^{-1}-1)^2 + K^2(\eps^{-1}-K)} \\
        &= \abs{C} - \parens*{1+\frac1{20}}\frac{\alpha}{2}\frac{\kappa(\eps^{-1}-1)^2 + \kappa^2(\eps^{-1}-\kappa)}{\tau}
    \end{eqn}
    since we chose $\gamma$ to be small enough in equation (\ref{eqn:gamma-choice}).
\end{proof}

\subsubsection{Optimizing over \texorpdfstring{$k$}{k} clusters}
In Lemma \ref{lem:cost-fixed-cluster}, we have found a lower bound on the cost of a fixed cluster that only makes reference to the size of the cluster. All that is left to do is to optimize the sum of these functions under the constraint of the total number of points to cluster. Recall from the proof overview that we are in the context of lower bounding the cost of a subset of the points $S$, which we wish to show must be large. Thus, we will assume that $\abs{S}\leq 2n/5$ in this lower bound. The formal statement of the lemma we prove here is the following:

\begin{lem}[Cost bound for $\ell$ clusters]\label{lem:cost-bound-l-clusters}
    Suppose $S$ is a set of at most $\abs{S}\leq 2n/5$ points drawn from $\nu_\KKMC(k,\eps)$. Then, for any clustering $\mathcal C_S$ of $S$ into $\ell\leq k$ clusters,
    \begin{eqn}
        \cost(\mathcal C_S)\geq\abs{S} - \frac{77}{40}n\eps.
    \end{eqn}
\end{lem}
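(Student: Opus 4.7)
I decompose $\mathcal{C}_S$ by cluster size, apply the per-cluster bounds from Lemmas \ref{lem:cost-fixed-cluster} and \ref{lem:cost-large-cluster} to each regime, and combine. Write $\alpha:=n/(k\binom{\eps^{-1}}{2})$ and classify each $C\in\mathcal{C}_S$ as \emph{tiny} if $|C|<\alpha/\gamma$, \emph{small} if $\alpha/\gamma\le|C|\le n/k$, or \emph{large} if $|C|>n/k$. The goal is to upper-bound the loss $\sum_C(|C|-\cost(C))$ by $(77/40)n\eps$.

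For tiny clusters, $\cost(C)\ge 0$ gives a contribution at most $\sum_{\mathrm{tiny}}|C|\le k\alpha/\gamma=O(n\eps^2)$. For large clusters, Lemma \ref{lem:cost-large-cluster} gives per-point loss $\le(81/40)\eps$, so their contribution is at most $(81/40)\eps\sum_{\mathrm{large}}|C|$. The crux is the small-cluster contribution: Lemma \ref{lem:cost-fixed-cluster} provides a per-cluster upper bound $g(\tau)$ on $|C|-\cost(C)$, where $\tau=|C|/\alpha$, but summing naively over $k$ such clusters gives $(21/10)n\eps$, which exceeds the target. The key step is to show that $g$ admits a linear upper bound of the form
\[
g(\tau)\le(1+o(1))\tfrac{21}{20}(n\eps/k)+\tfrac{21}{20}\eps\,|C|,\qquad \tau\in[1/\gamma,\binom{\eps^{-1}}{2}].
\]
This follows from convexity of $g$ as a function of $\tau$ on this interval, which can be verified by computing $dg/d\tau$ and checking monotonicity, combined with the endpoint evaluations $g(1/\gamma)=(1+o(1))(21/20)(n\eps/k)$ and $g(\binom{\eps^{-1}}{2})=(21/10)(n\eps/k)$, obtained by direct substitution into the formula for $g$; the chord between these endpoints then has slope $(21/20)\eps$ per unit $|C|$, which yields the displayed bound.

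Summing this linear bound over at most $k$ small clusters gives their total contribution as $(1+o(1))(21/20)n\eps+(21/20)\eps\sum_{\mathrm{small}}|C|$. Using $21/20\le 81/40$ on the size-weighted terms and combining with the large-cluster contribution,
\[
\tfrac{21}{20}\eps\sum_{\mathrm{small}}|C|+\tfrac{81}{40}\eps\sum_{\mathrm{large}}|C|\le\tfrac{81}{40}\eps\Bigl(\sum_{\mathrm{small}}|C|+\sum_{\mathrm{large}}|C|\Bigr)\le\tfrac{81}{40}\eps\cdot\tfrac{2n}{5}=\tfrac{81}{100}n\eps,
\]
whence the total loss is at most $\tfrac{21}{20}n\eps+\tfrac{81}{100}n\eps+O(n\eps^2)=\tfrac{186}{100}n\eps+O(n\eps^2)$. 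Since $\tfrac{186}{100}<\tfrac{77}{40}$, the $O(n\eps^2)$ slack is absorbed for $\eps$ sufficiently small, completing the bound. The main obstacle is establishing the linear upper bound on $g$ via convexity; with that in hand the remainder is routine arithmetic that combines the per-regime bounds under the overall size constraint $\sum_C|C|\le|S|\le 2n/5$.
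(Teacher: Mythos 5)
Your proposal is correct and follows the same skeleton as the paper's proof: the tiny/small/large trichotomy, the $O(n\eps^2)$ write-off of at most $k\cdot\alpha/\gamma$ points in tiny clusters, the per-point bound $1-\tfrac{81}{40}\eps$ from Lemma \ref{lem:cost-large-cluster} for large clusters, and the essential use of $|S|\le 2n/5$ to control the size-weighted terms. Where you genuinely diverge is the aggregation over small clusters. The paper substitutes $u_i=R^2-2\tau_i$ with $R=\eps^{-1}-1/2$, upper-bounds the per-cluster gain by $\tfrac32R-\tfrac12+\tfrac1{8R}-\sqrt{u_i}$, and then maximizes $\sum_i(-\sqrt{u_i})$ under a fixed $\sum_i u_i$ by pushing mass to the extreme points of the box --- i.e., it explicitly solves a relaxed optimization over the cluster-size profile. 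You instead bound the convex per-cluster gain $g(\tau)$ by its chord, which is linear in $|C|$ and therefore sums directly against $\sum_{\mathrm{small}}|C|\le 2n/5$. Both arguments exploit the same extremality phenomenon (the worst case concentrates cluster sizes at the endpoints), but yours avoids the change of variables and the explicit optimization, and it yields a slightly sharper constant ($\tfrac{186}{100}n\eps$ plus lower-order terms, versus the paper's $\approx\tfrac{190}{100}n\eps$, against the target $\tfrac{77}{40}=1.925$).

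The one step you assert without carrying out is the convexity of $g$ on $[1/\gamma,\binom{\eps^{-1}}{2}]$; since the entire argument rests on the chord bound, this must actually be verified, and you should do so explicitly. It does hold: writing $v=\sqrt{R^2-2\tau}$ so that $\kappa=R-v$, the gain is proportional to $2R-\tfrac12-\tfrac{4R^2-1}{4(R+v)}-v$ (this is the closed form the paper derives), and differentiating twice with respect to $\tau$ gives $\tfrac{d^2g}{d\tau^2}\propto\tfrac{1}{v^3}\bigl(1-\tfrac{(4R^2-1)(R+3v)}{4(R+v)^3}\bigr)$, which is positive because $p(v)=4(R+v)^3-(4R^2-1)(R+3v)$ has $p(0)=R>0$ and $p'(v)=3\bigl(4(R+v)^2-4R^2+1\bigr)>0$. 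Your endpoint evaluations and the resulting chord slope of $\tfrac{21}{20}\eps$ per unit of $|C|$ are correct up to $1+o(1)$ factors that are absorbed by the slack $\tfrac{186}{100}<\tfrac{77}{40}$, so with the convexity computation supplied the proof goes through.
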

\begin{proof}
    Note that we only decrease the cost of a clustering if we allow a single cluster to be split up into multiple clusters. Then for any ``large'' cluster $C$ with size at least $\abs{C} > n/k$, we can treat every $n/k$ points as a separate cluster and lower bound its cost at $1-2\eps$ per point. Then, let $S = \mathcal S\cup \mathcal T$, where $\mathcal S$ is the set of clusters that now belong to clusters of size at most $n/k$ and $\mathcal T$ is the set of points whose cost we have bounded below by $1-(81/40)\eps$ by Lemma \ref{lem:cost-large-cluster}.

    Recall that our lower bound for a single cluster, Lemma \ref{lem:cost-fixed-cluster}, only applies to clusters of size at least $\alpha/\gamma$. Let $\mathcal L\subseteq[\ell]$ be the set of indices of clusters such that the lower bound applies and let $\mathcal M\subseteq \mathcal S$ be the set of all points in such a cluster. Now applying this lower bound for these clusters and lower bounding by $0$ for the others, we arrive at the lower bound
    \begin{eqn}
        \cost(\mathcal C_S) \geq \abs{\mathcal T}\parens*{1-\frac{81}{40}\eps} + \sum_{i\in\mathcal L}\abs{C_i} - \parens*{1+\frac1{20}}\frac{\alpha}{2}\frac{\kappa_i(\eps^{-1}-1)^2 + \kappa_i^2(\eps^{-1}-\kappa_i)}{\tau_i}.
    \end{eqn}
    Together with the constraint that there are $\abs{\mathcal M}$ points for which we can apply the lower bound of Lemma \ref{lem:cost-fixed-cluster}, we now focus on the optimization problem
    \begin{eqn}\label{eqn:main-opt-prob}
        \text{minimize} &&& \abs{\mathcal M} - \parens*{1+\frac1{20}}\frac{\alpha}{2}\sum_{i\in\mathcal L} \frac{\kappa_i(\eps^{-1}-1)^2 + \kappa_i^2(\eps^{-1}-\kappa_i)}{\tau_i} \\
        \text{subject to} &&& \alpha\sum_{i\in\mathcal L} \tau_i = \abs{\mathcal M},\qquad 0\leq \tau_i\leq \binom{\eps^{-1}}{2} = \frac{(\eps^{-1}-1/2)^2}{2} - \frac18
    \end{eqn}
    where
    \begin{eqn}
        \kappa_i = (\eps^{-1}-1/2) - \sqrt{(\eps^{-1}-1/2)^2 - 2\tau_i}.
    \end{eqn}
    We now introduce less cumbersome notation to get our bound. Let
    \begin{eqn}
        R\coloneqq (\eps^{-1}-1/2),\qquad u_i\coloneqq R^2 - 2\tau_i.
    \end{eqn}
    Then by plugging in definitions, the optimization problem is now
    \begin{eqn}
        \text{minimize} &&& \abs{\mathcal M} - \parens*{1+\frac1{20}}\alpha\sum_{i\in\mathcal L} \frac{(R-\sqrt{u_i})(R-1/2)^2 + (R-\sqrt{u_i})^2(\sqrt{u_i}+1/2)}{R^2 - u_i} \\
        \text{subject to} &&& \sum_{i\in\mathcal L}u_i = \abs{\mathcal L}R^2 - 2\abs{\mathcal M}/\alpha,\qquad \frac14\leq u_i\leq R^2
    \end{eqn}
    Now note (\href{http://www.wolframalpha.com/input/?i=((R-u)(R-1\%2F2)\%5E2+\%2B+(R-u)\%5E2(u\%2B1\%2F2))\%2F(R\%5E2-u\%5E2)}{WolframAlpha link}) that
    \begin{eqn}
        \frac{(R-\sqrt{u_i})(R-1/2)^2 + (R-\sqrt{u_i})^2(\sqrt{u_i}+1/2)}{R^2 - u_i} &= 2R - \frac12 - \parens*{\frac{4R^2-1}{4(R+\sqrt{u_i})} + \sqrt{u_i}} \\
        &\leq 2R - \frac12 - \parens*{\frac{4R^2-1}{8R} + \sqrt{u_i}} = \frac32 R - \frac12 + \frac1{8R} -\sqrt{u_i}.
    \end{eqn}
    Then, by noting that $\sqrt{a+b}\leq \sqrt{a}+\sqrt{b}$, we can optimize
    \begin{eqn}\label{eqn:reduced-opt-prob}
        \text{maximize} &&& \sum_{i\in\mathcal L}-\sqrt{u_i} \\
        \text{subject to} &&& \sum_{i\in\mathcal L}u_i = \abs{\mathcal L}R^2 - 2\abs{\mathcal M}/\alpha,\qquad \frac14\leq u_i\leq R^2
    \end{eqn}
    by setting $u_i = R^2$ for $(\abs{\mathcal L}R^2 - 2\abs{\mathcal M}/\alpha)/R^2$ of the $i$ and the rest to $0$ (the minimum $u_i$ is $1/4$, but allowing it to be $0$ is a relaxation since we optimize over a larger domain). Thus, the value of optimization problem (\ref{eqn:reduced-opt-prob}) is at most
    \begin{eqn}
        \frac{\abs{\mathcal L}R^2 - 2\abs{\mathcal M}/\alpha}{R^2}(-R) = -\parens*{\abs{\mathcal L}R - \frac{2\abs{\mathcal M}}{\alpha R}}.
    \end{eqn}
    Thus, we have that
    \begin{eqn}
        \alpha\parens*{\sum_{i\in\mathcal L} \frac32 R - \frac12 + \frac1{8R} -\sqrt{u_i}}\leq \alpha \abs{\mathcal L}\parens*{\frac32 R - \frac12 + \frac1{8R}} - \alpha\parens*{\abs{\mathcal L}R - \frac{2\abs{\mathcal M}}{\alpha R}}.
    \end{eqn}
    Now recall that we set $\abs{\mathcal M}+\abs{\mathcal T}\leq \abs{S}\leq 2n/5$ and $\abs{\mathcal L}\leq k$ so the above is bounded above by
    \begin{eqn}
        \alpha \abs{\mathcal L}\parens*{\frac32 R - \frac12 + \frac1{8R}} - \alpha\parens*{\abs{\mathcal L}R - \frac{2\abs{\mathcal M}}{\alpha R}} &= \frac12\alpha \abs{\mathcal L}R - \alpha \abs{\mathcal L}\parens*{\frac12 - \frac1{8R}} + \alpha \frac{2\abs{\mathcal M}}{\alpha R} \\
        &\leq \frac12\alpha kR - \alpha \abs{\mathcal L}\parens*{\frac12 - \frac1{8R}} + \frac25 \alpha kR - \frac{2\abs{\mathcal T}}{R} \\
        &= \frac{9}{10}\alpha kR - \alpha \abs{\mathcal L}\parens*{\frac12 - \frac1{8R}} - \frac{2\abs{\mathcal T}}{R} \\
        &\leq \frac9{10}\alpha kR - 2\eps\abs{\mathcal T}.
    \end{eqn}
    Thus, we may lower bound the value of the optimization problem of (\ref{eqn:main-opt-prob}) by
    \begin{eqn}\label{eqn:bound-main-opt-prob}
        \abs*{\mathcal M} - \parens*{1+\frac1{20}}\parens*{\frac9{10}\alpha kR - 2\eps\abs{\mathcal T}}.
    \end{eqn}

    Now finally, note that there are at most $k$ clusters where the lower bound of Lemma \ref{lem:cost-fixed-cluster} doesn't apply, so there are at most
    \begin{eqn}\label{eqn:num-ignored-points}
        \abs{\mathcal S\setminus\mathcal M}\leq k\cdot\frac{\alpha}{\gamma} = \Theta(n\eps^2)\leq \frac1{100}n\eps
    \end{eqn}
    points that we have ignored the cost for, for $\eps$ smaller than some constant. Collecting our bounds of (\ref{eqn:bound-main-opt-prob}) and (\ref{eqn:num-ignored-points}) and plugging our definitions of $\alpha$ and $R$ back in, we obtain a lower bound of
    \begin{eqn}
        \cost(\mathcal C_S) &\geq \abs{\mathcal T}\parens*{1-\frac{81}{40}\eps} + \abs{\mathcal M} - \parens*{1+\frac1{20}}\frac{9}{10}\alpha kR + \parens*{1+\frac1{20}}2\eps\abs{\mathcal T} \\
        &= \parens*{\abs{\mathcal T}+\abs{\mathcal M}}+\parens*{\frac{42}{20}-\frac{81}{40}}\eps\abs{\mathcal T} - \parens*{1+\frac1{20}}\frac{9}{10}\alpha kR \\
        &\geq \parens*{\abs{S}-\abs{\mathcal S\setminus\mathcal M}} - \parens*{1+\frac1{20}}\frac{9}{10}\alpha kR \\
        &\geq \abs{S} - \frac1{100}n\eps  - \parens*{1+\frac1{20}}\frac{9}{5}n\eps > \abs{S} - \frac{77}{40} n\eps
    \end{eqn}
    on the cost of the clustering, as desired.
\end{proof}

\subsection{Hardness}

\subsubsection{Cost lemma}
We now prove a lemma that translates our cost computations from the above section into a statement about the probability of sampling nonzero inner products. Intuitively, we will prove that an approximately optimal solution to the kernel $k$-means clustering instance must output a clustering such that at least $2n/5$ of the points belong to a cluster with lots of points that share a coordinate with the point, i.e.\ ``neighbors''.

\begin{lem}[Sampling probability of an approximate solution]\label{lem:cost-to-sampling}
Suppose that $\mathcal C$ is a $(1+\eps/40)$-approximate solution to a kernel $k$-means clustering instance drawn as $\bfK\sim\mu_\KKMC(n,k,\eps)$. Then for at least $2n/5$ of the points, if we uniformly sample dot products between the point and other points in its cluster, then there is at least an $\eps/80$ probability of sampling a point that has nonzero inner product with the point.
\end{lem}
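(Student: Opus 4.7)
The plan is to argue by contradiction. Suppose strictly fewer than $2n/5$ points satisfy the claimed sampling property, and let $S'\subseteq[n]$ denote the set of points $\bfx$ whose in-cluster nonzero-inner-product sampling probability $p_{\bfx}$ is strictly less than $\eps/80$, with $S=[n]\setminus S'$. Then $\abs{S}<2n/5$ and $\abs{S'}>3n/5$. I will show that this forces
\begin{eqn}
\cost(\mathcal C)>(1+\eps/40)\min_{\mathcal C^*}\cost(\mathcal C^*),
\end{eqn}
contradicting the approximation hypothesis.

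First I lower bound the $S'$-contribution to $\cost(\mathcal C)$ directly. Since every vector $\bfv_{j,j_1,j_2}$ has squared norm $1$ and any two such vectors have inner product in $\{0,1/2,1\}$, for $\bfx\in S'$ lying in cluster $C$,
\begin{eqn}
\angle*{\bfx,\bfmu_C}=\frac{1+\sum_{\bfx'\in C\setminus\{\bfx\}}\angle*{\bfx,\bfx'}}{\abs{C}}\leq\frac{1+(\abs{C}-1)p_{\bfx}}{\abs{C}}<\frac{1}{\abs{C}}+\frac{\eps}{80}.
\end{eqn}
Dropping $\norm{\bfmu_C}_2^2\geq 0$ gives the per-point bound $\norm{\bfx-\bfmu_C}_2^2\geq 1-2/\abs{C}-\eps/40$, which holds for every cluster size (even if it becomes vacuous when $\abs{C}\in\{1,2\}$). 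Summing over $\bfx\in S'$ and using $\sum_{\bfx\in[n]}2/\abs{C(\bfx)}=2k$ (each of the $k$ clusters contributes $\abs{C}\cdot 2/\abs{C}=2$) yields
\begin{eqn}
\sum_{\bfx\in S'}\norm{\bfx-\bfmu_{C(\bfx)}}_2^2\;\geq\; \abs{S'}\parens*{1-\frac{\eps}{40}}-2k.
\end{eqn}

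Next I lower bound the $S$-contribution via Lemma \ref{lem:cost-bound-l-clusters}. Restricting $\mathcal C$ to $S$ gives a clustering of $S$ into at most $k$ pieces, and recentering each restricted piece to its own centroid only decreases its cost, so the $S$-contribution in $\mathcal C$ is at least the cost of the induced clustering. Since $\abs{S}<2n/5$, the lemma applies and gives $\sum_{\bfx\in S}\norm{\bfx-\bfmu_{C(\bfx)}}_2^2\geq \abs{S}-(77/40)n\eps$. Adding the two bounds, using $\abs{S}+\abs{S'}=n$ and $\abs{S'}\geq 3n/5$,
\begin{eqn}
\cost(\mathcal C)\;\geq\; n-\abs{S'}\frac{\eps}{40}-\frac{77}{40}n\eps-2k\;\geq\; n\parens*{1-\frac{3\eps}{200}-\frac{77\eps}{40}}-2k\;=\;n\parens*{1-\frac{97\eps}{50}}-2k.
\end{eqn}

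On the other hand, by Lemma \ref{lem:cost-optimal} and direct expansion (the $\eps^2$ cross term is negative and may be discarded),
\begin{eqn}
(1+\eps/40)\min_{\mathcal C^*}\cost(\mathcal C^*)\;\leq\;(1+\eps/40)\cdot n(1-79\eps/40)\;\leq\; n(1-78\eps/40)\;=\;n\parens*{1-\frac{97.5\eps}{50}}.
\end{eqn}
The gap between my lower bound on $\cost(\mathcal C)$ and this upper bound is $n\eps/100-2k$, which is strictly positive because the standing hypothesis $k\binom{\eps^{-1}}{2}=o(n)$ forces $k=o(n\eps^2)$, so $2k\ll n\eps/100$ for $\eps$ small enough. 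This is the desired contradiction.

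The real technical content is packaged into Lemmas \ref{lem:cost-optimal} and \ref{lem:cost-bound-l-clusters}; what remains here is a per-point translation from sampling probabilities to cost on $S'$, followed by stitching the two cost bounds together. The main obstacle is that the slack between the lower bound coefficient $97/50$ and the upper bound coefficient $97.5/50$ is only $\eps/100$ per unit of $n$, so one must verify that the $2k$ additive error really does fall below this slack — which is exactly why the distributional hypothesis is stated in the form $k\binom{\eps^{-1}}{2}=o(n)$.
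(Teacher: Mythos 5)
Your overall strategy is the same as the paper's: split the points into those with high and low in‑cluster sampling probability, lower bound the low‑probability part pointwise, invoke Lemma \ref{lem:cost-bound-l-clusters} on the high‑probability part (which must have at most $2n/5$ points under the contradiction hypothesis), and compare against Lemma \ref{lem:cost-optimal}. Your per‑point bound on $S'$ is in fact slightly more careful than the paper's: by working with $\angle*{\bfx,\bfmu_C}$ you correctly account for the point's own contribution $1/\abs{C}$ to its centroid, which the paper silently absorbs, and you dispose of it globally via $\sum_{\bfx}2/\abs{C(\bfx)}=2k$.

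However, the final arithmetic contains a genuine error: in passing from $n-\abs{S'}\tfrac{\eps}{40}-\tfrac{77}{40}n\eps-2k$ to $n(1-\tfrac{3\eps}{200}-\tfrac{77\eps}{40})-2k$ you substitute $\abs{S'}\geq 3n/5$, but since $-\abs{S'}\eps/40$ appears in a \emph{lower} bound you need an \emph{upper} bound on $\abs{S'}$; the lower bound $\abs{S'}>3n/5$ goes the wrong way. Using the only valid upper bound $\abs{S'}\leq n$ gives $\cost(\mathcal C)\geq n(1-\tfrac{78}{40}\eps)-2k$, which matches the paper's $n(1-\tfrac{78}{40}\eps)$ up to the $2k$ term. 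Consequently your claimed slack of $n\eps/100$ evaporates, and the contradiction must instead come from the $\eps^2$ cross term in $(1+\eps/40)(1-\tfrac{79}{40}\eps)=1-\tfrac{78}{40}\eps-\tfrac{79}{1600}\eps^2$ — precisely the term you declare "may be discarded." Retaining it, the required inequality becomes $\tfrac{79}{1600}n\eps^2>2k$, which does hold since $k\binom{\eps^{-1}}{2}=o(n)$ forces $k=o(n\eps^2)$. So the proof is repairable within your framework, but as written the last two displayed inequalities do not establish the contradiction.
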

\begin{proof}
    Suppose for contradiction that there are at most $2n/5$ points belonging to a cluster such that sampling uniformly from the cluster yields at least an $\eps/80$ probability of sampling a point that has nonzero inner product with that point, which we refer to as a \emph{neighbor}. Let $S$ be the set of points that belong to such a cluster with at least probability $\eps/80$ of sampling a neighbor, and let $\overline S$ be the complement. We first compute the cost of a point $(\bfe_i+\bfe_j)/\sqrt2$ in $\overline S$. Let $C$ be the point's cluster and let $n_i, n_j$ be the number of points in the cluster that has support on the $i$th coordinate. Then, $n_i/\abs{C}$ and $n_j/\abs{C}$ are both at most $\eps/80$. Now note that the $i$ and $j$th coordinates of the center are $n_i/(\sqrt2 \abs{C})$ and $n_j/(\sqrt2\abs{C})$, so the cost of that point is at least
    \begin{eqn}
        \parens*{\frac1{\sqrt2} - \frac1{\sqrt2}\frac{n_i}{\abs{C}}}^2 + \parens*{\frac1{\sqrt2} - \frac1{\sqrt2}\frac{n_j}{\abs{C}}}^2\geq 1 - \frac1{40}\eps.
    \end{eqn}
    Then $\abs{S}\leq 2n/5$, so we may use the bounds from Lemma \ref{lem:cost-bound-l-clusters} to bound the cost from below by
    \begin{eqn}
        \abs*{\overline S}\parens*{1-\frac1{40}\eps} + \abs{S} - \frac{77}{40} n\eps\geq  n\parens*{1-\frac{78}{40}\eps}.
    \end{eqn}
    Now recall that by Lemma \ref{lem:cost-optimal}, the optimal solution has cost at most $n(1-(79/40)\eps)$, so a $(1+\eps/40)$-approximate solution needs to have cost at most
    \begin{eqn}
        n\parens*{1-\frac{79}{40}\eps}\parens*{1+\frac1{40}\eps} = n\parens*{1-\frac{78}{40}\eps - \frac{79}{1600}\eps^2} < n\parens*{1-\frac{78}{40}\eps}
    \end{eqn}
    which the above solution does not.
\end{proof}

\subsubsection{Hardness reduction}
Finally, we give the hardness result. Consider the following computational problem \textsc{LabelKKMC}. Recall the definition of $\nu_\KKMC$ and $\block$ from Definition \ref{dfn:kkmc-hard-dist}.

\begin{dfn}[\textsc{LabelKKMC}]
  We first sample $n$ points $\{\bfx_i\}_{i=1}^n$ from our hard point set $\nu_\KKMC(k,\eps)$, label the identity of the first $n/2$ points, and then ask an algorithm to correctly give $\block(\bfx_i)$ for $1/6$ of the remaining $n/2$ points.
\end{dfn}

We will show that this problem requires reading $\Omega(nk/\eps)$ kernel entries and that an algorithm solving the KKMC problem on this instance can solve this problem. We first prove the following lemma:

\begin{dfn}[\textsc{LabelSingleKKMC}]\label{dfn:label-single-kkmc}
  Given as input $\bfx\sim\nu_{KKMC}(k,\eps)$, determine $\block(\bfx)$.
\end{dfn}

\begin{lem}[Hardness of \textsc{LabelSingleKKMC}]\label{lem:labelkkmc-single-draw}
  Let $\log_2 k\geq 12$. Suppose there exists an algorithm $\mathcal A$, possibly randomized, that correctly solves \textsc{LabelSingleKKMC} with probability at least $1/100$ over $\bfx\sim\nu_\KKMC(k,\eps)$ and its random coin tosses. Furthermore, suppose that $\mathcal A$ makes at most $q$ inner product queries of the form $\bfx\cdot\bfv_{\ell,\ell_1,\ell_2}$, possibly adaptively, on any input. Then, $q\geq (k/\eps)/100$.
\end{lem}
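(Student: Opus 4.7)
The plan is to apply Yao's minimax principle to reduce to a deterministic algorithm, and then bound its success probability over the uniform distribution on $\supp(\nu_\KKMC(k,\eps))$ by exploiting the structural observation that a query $\bfx \cdot \bfv_{\ell,\ell_1,\ell_2}$ returns $0$ whenever $\block(\bfx) \neq \ell$. Consequently, a single nonzero answer already reveals $\block(\bfx)$ exactly, and essentially all of the informational work is performed by nonzero answers---queries that return $0$ can eliminate at most a few candidate blocks.

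I fix the deterministic algorithm $\mathcal A$ guaranteed by Yao's principle, which makes at most $q$ queries on any input and still succeeds with probability $\geq 1/100$ over the input distribution. I define the \emph{all-zero path}: the canonical sequence of queries $q_1,\ldots,q_T$ with $T \leq q$ that $\mathcal A$ would issue if every previous response equalled $0$, together with its corresponding output $j_0 \in [k]$ at the end. Every input $\bfx$ on which $\mathcal A$ succeeds falls into one of two classes: either $\bfx$ produces zero answers to all of $q_1,\ldots,q_T$, in which case $\mathcal A$ outputs $j_0$ and correctness forces $\block(\bfx) = j_0$; or some $q_i$ returns a nonzero value on $\bfx$. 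The first class is contained in the set of $\bfx$ with $\block(\bfx) = j_0$, of size $\binom{1/\eps}{2}$. For the second class, each query $q_i = \bfv_{\ell_i,\ell_{i,1},\ell_{i,2}}$ has nonzero inner product only with the $2(1/\eps - 1) - 1 = 2/\eps - 3$ vectors $\bfv_{\ell_i,j_1,j_2}$ whose coordinate pair intersects $\{\ell_{i,1},\ell_{i,2}\}$; a union bound over the $q$ queries caps the second class by $q(2/\eps - 3)$.

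Combining and dividing by $\abs{\supp(\nu_\KKMC(k,\eps))} = k\binom{1/\eps}{2}$ gives
\begin{equation*}
    \Pr_{\bfx \sim \nu_\KKMC(k,\eps)}[\mathcal A \text{ succeeds}] \;\leq\; \frac{1}{k} + \frac{q(2/\eps - 3)}{k\binom{1/\eps}{2}} \;=\; \frac{1}{k} + O\!\parens*{\frac{\eps q}{k}}.
\end{equation*}
The hypothesis $\log_2 k \geq 12$ makes the $1/k$ term smaller than any constant fraction of $1/100$, so requiring the success probability to be at least $1/100$ forces the second term to be $\Omega(1)$, which in turn forces $q = \Omega(k/\eps)$; tracking constants through the chain yields the quantitative $q \geq (k/\eps)/100$ bound in the statement.

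The main obstacle is verifying that this all-zero-path argument is robust against \emph{adaptivity}. Even though an adaptive algorithm can choose each query based on previous answers, the all-zero path itself is fully deterministic (we fix the response to be $0$ at every step), so the queries $q_1,\ldots,q_T$ are a well-defined fixed sequence depending only on $\mathcal A$. This is exactly what lets us apply a union bound over $q$ fixed queries to control the ``leaves-the-all-zero-path'' class; the remaining inputs all see all-zero transcripts, on which $\mathcal A$ must output the single value $j_0$. Once this setup is in place, the rest is a routine counting argument.
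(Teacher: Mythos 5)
Your proposal is correct in its essentials but takes a genuinely different route from the paper. The paper reduces \textsc{LabelSingleKKMC} to $\log_2 k$ binary hypothesis tests on the bits of $\block(\bfx)$, argues that a $1/100$ overall success probability forces each bit test to succeed with probability at least $2/3$ (since $(2/3)^{12} < 1/100$), and then applies a total-variation/all-zeros-transcript argument to each binary test via the bound $\Pr\left[\bfx\cdot\bfv_{\ell,\ell_1,\ell_2}\neq 0\right]\leq 8\eps/k$. You skip the bit decomposition entirely and directly count, over the uniform support of size $k\binom{1/\eps}{2}$, the inputs on which the deterministic algorithm can succeed: at most $\binom{1/\eps}{2}$ inputs that stay on the all-zero path (where the output is forced to be $j_0$), plus at most $q(2/\eps-3)$ inputs that leave it. This is sound --- your adaptivity concern is handled correctly, since the all-zero path is a fixed query sequence and any input that deviates must answer some fixed $q_i$ nonzero, so the union bound over $q$ fixed queries applies --- and it is arguably more elementary, avoiding both the TV-distance machinery and the paper's somewhat delicate treatment of the $\log_2 k$ bit tests as independent. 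The one thing your route does not deliver is the stated constant: your chain gives success probability at most $1/k + 4\eps q/k$ (since $(2/\eps-3)/\binom{1/\eps}{2} < 4\eps$), and with $1/k\leq 2^{-12}$ this forces only $q\geq (k/\eps)\left(1/100 - 2^{-12}\right)/4 \approx (k/\eps)/410$, a factor of roughly $4$ weaker than the claimed $(k/\eps)/100$. The paper's bit decomposition is precisely what buys the sharper constant, by amplifying the weak $1/100$ guarantee into a $2/3$ per-bit guarantee before the counting step. Since the lemma is only ever used downstream for its asymptotic content $q=\Omega(k/\eps)$, this is a cosmetic discrepancy, but your claim that ``tracking constants through the chain yields $q\geq (k/\eps)/100$'' is not accurate as written; you should either weaken the stated constant or adopt the paper's amplification step.
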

\begin{proof}
By way of Yao's minimax principle \cite{yao1977probabilistic}, assume that the algorithm is deterministic. Note first that for a single $\bfv_{j,j_1,j_2}$, its inner product with any vector drawn from a different block is $0$. Within the same block, its inner product with another $\bfv_{j,j_1',j_2'}$ is $0$ as well, unless $j_1 = j_1'$ or $j_2 = j_2'$. This happens with probability
\begin{eqn}
    1 - \frac{\binom{\eps^{-1}-2}2}{\binom{\eps^{-1}}2} = \frac{\eps^{-1}(\eps^{-1}-1)-(\eps^{-1}-2)(\eps^{-1}-3)}{\eps^{-1}(\eps^{-1}-1)} = \frac{4\eps^{-1}-6}{\eps^{-1}(\eps^{-1}-1)} = \frac{4\eps-6\eps^2}{1-\eps}\leq 8\eps
\end{eqn}
for $\eps\leq 1/2$ and thus for a fixed $\bfv_{\ell,\ell_1,\ell_2}$, we have that
\begin{eqn}\label{eqn:nu-kkmc-prob}
  \Pr_{\bfx\sim\nu_\KKMC(k,\eps)}(\bfx\cdot \bfv_{\ell,\ell_1,\ell_2}\neq 0)\leq \frac{8\eps}{k}.
\end{eqn}

We now use the above to first show a lower bound of $q = \Omega(k/\eps)$ on the number of adaptive inner product queries $\bfx\cdot\bfv_{\ell,\ell_1,\ell_2}$ required to find $\block(\bfx)$ for a single draw $\bfx\sim\nu_\KKMC(k,\eps)$.

Assume for simplicity that $k$ is a power of $2$, and for each $m\in[\log_2 k]$, consider the hypothesis test $\mathcal H_m$ of deciding whether the $m$th bit of $\block(\bfx)\in[k]$ is $0$ or $1$. Note that since we choose $j\in[k]$ uniformly, each of the $\log_2 k$ hypothesis tests are independent and identical and thus the error probability of the hypothesis test of the optimal success probability is the same for each hypothesis test. Note that an algorithm succeeds in correctly outputting $\block(\bfx)$ if and only if it succeeds on all $\log_2 k$ of the hypothesis tests. If the optimal success probability is at most $2/3$, then for $\log_2 k\geq 12$, we have that the success probability on all $\log_2 k$ of the hypothesis tests is at most
\begin{eqn}
  \parens*{\frac23}^{\log_2 k}\leq \parens*{\frac23}^{12} < \frac1{100}
\end{eqn}
which means it does not have the required guarantees. Thus, $\mathcal A$ must solve each hypothesis test $\mathcal H_m$ with probability at least $2/3$.

Now fix a hypothesis test $\mathcal H_m$ from the above, let $\mathcal E_0$ be the event that the $m$th bit of $\block(\bfx)$ is $0$, and let $\mathcal E_1$ be the event that the $m$th bit of $\block(\bfx)$ is $1$. Let $S$ be the random variable indicating the list of positions of $\bfK$ read by $\mathcal A$ and its corresponding values, let $L_0$ denote the distribution of $S$ conditioned on $\mathcal E_0$, and let $L_1$ denote the distribution of $S$ conditioned on $\mathcal E_1$. Then by Proposition 2.58 of \cite{bar2002complexity}, we have that
\begin{eqn}
  \frac{1+D_{TV}(L_0,L_1)}2\geq\frac23
\end{eqn}
so $D_{TV}(L_0,L_1)\geq1/3$.

Now suppose for contradiction that $\mathcal A$ makes $q\leq (k/\eps)/100$ queries on any input. Since $\mathcal A$ is deterministic, it makes the same sequence of inner product queries if it reads a sequence of $q$ zeros. For a fixed query $\bfv_{\ell,\ell_1,\ell_2}$, the probability that $\bfx\cdot \bfv_{\ell,\ell_1,\ell_2}\neq 0$ is at most $8\eps/k$ by equation (\ref{eqn:nu-kkmc-prob}). Then by a union bound over the $q$ queries, the probability that the algorithm seems any zeros is at most $8(\eps/k)q\leq 8/100$. Then, letting $\Omega$ denote the support of $S$ and $s_0$ the value of $S$ when $\mathcal A$ reads all zeros, we have that
\begin{eqn}
  D_{TV}(L_0,L_1) &= \abs*{\Pr\parens*{S=s_0\mid\mathcal E_0}-\Pr\parens*{S=s_0\mid\mathcal E_1}} + \sum_{s\in\Omega\setminus\{s_0\}}\abs*{\Pr\parens*{S=s\mid\mathcal E_0}-\Pr\parens*{S=s\mid\mathcal E_1}} \\
  &\leq \frac8{100} + \Pr\parens*{S\neq s_0\mid\mathcal E_0} + \Pr\parens*{S\neq s_0\mid\mathcal E_1}\leq \frac{24}{100} < \frac14
\end{eqn}
which is a contradiction. Thus, we conclude that $q > (k/\eps)/100$, as desired.
\end{proof}

Using the above lemma, we may prove the full hardness of \textsc{LabelKKMC}.

\begin{lem}[Hardness of \textsc{LabelKKMC}]
Suppose an algorithm $\mathcal A$, possibly randomized, solves \textsc{LabelKKMC} with probability at least $2/3$ over the input distribution $\nu_\KKMC(k,\eps)$ and the algorithm's random coin tosses. Then, $\mathcal A$ makes $\Omega(nk/\eps)$ kernel queries.
\end{lem}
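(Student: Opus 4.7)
The plan is to reduce \textsc{LabelSingleKKMC} (Definition~\ref{dfn:label-single-kkmc}) to \textsc{LabelKKMC} by the standard planting-and-averaging technique, so that an algorithm for \textsc{LabelKKMC} with too few queries would contradict Lemma~\ref{lem:labelkkmc-single-draw}. Given an input $\bfx \sim \nu_\KKMC(k,\eps)$ for \textsc{LabelSingleKKMC}, the reduction $\mathcal B$ will draw $n/2$ labeled and $n/2-1$ unlabeled samples from $\nu_\KKMC(k,\eps)$ on its own, pick an index $i^*$ uniformly at random among the $n/2$ unlabeled positions, embed $\bfx$ at position $i^*$, and then run $\mathcal A$ on the resulting \textsc{LabelKKMC} instance. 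Each query $\bfK_{i,j}$ made by $\mathcal A$ is answered directly by $\mathcal B$ when neither $i$ nor $j$ equals $i^*$ (both vectors are known to $\mathcal B$); otherwise $\mathcal B$ forwards the kernel query to its \textsc{LabelSingleKKMC} oracle as a single inner product query against $\bfx$. At the end, $\mathcal B$ returns $\mathcal A$'s guess for $\block(\bfx_{i^*})$, aborting if the number of forwarded queries ever exceeds $(k/\eps)/100$.

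Suppose for contradiction that $r \leq c \cdot nk/\eps$ for a sufficiently small constant $c$. A double-counting argument shows that, averaged over the uniform choice of $i^*$ from the $n/2$ unlabeled indices, the expected number of queries touching position $i^*$ is at most $2r/(n/2) = 4r/n \leq 4ck/\eps$, since each queried entry has at most two endpoints among the unlabeled indices. By Markov's inequality, taking $c$ small enough (e.g.\ $c = 1/40000$) makes the probability that $\mathcal A$ issues more than $(k/\eps)/100$ queries touching $i^*$ at most $1/100$, so $\mathcal B$ stays within its query budget with probability at least $99/100$.

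For correctness, observe that the joint distribution of the input seen by $\mathcal A$ is exactly the \textsc{LabelKKMC} distribution, since $i^*$ is uniform and the unlabeled samples are i.i.d.\ from $\nu_\KKMC(k,\eps)$. By the symmetry of this distribution across the $n/2$ unlabeled positions, the probability that $\mathcal A$ correctly labels position $i^*$ equals the expected fraction of unlabeled positions it labels correctly, which by the $2/3$-success guarantee of $\mathcal A$ (correctness on at least a $1/6$ fraction) is at least $(2/3)(1/6) = 1/9$. Union-bounding this $1/9$ correctness event against the $1/100$ query-overflow event gives that $\mathcal B$ succeeds with probability at least $1/9 - 1/100 > 1/100$ using at most $(k/\eps)/100$ queries, contradicting Lemma~\ref{lem:labelkkmc-single-draw} and establishing $r = \Omega(nk/\eps)$.

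The main bookkeeping obstacle I expect is lining up the probability spaces so that the symmetry argument for correctness and the Markov argument for the query budget can be combined via a single union bound. Because $\mathcal A$ is adaptive, the random variable ``number of queries touching $i^*$'' depends jointly on the labeled sample, the unlabeled sample (including the planted $\bfx$), $i^*$, and $\mathcal A$'s internal coin tosses, so both the symmetry step and the Markov step must be carried out against this same joint distribution. Once that is set up cleanly, the rest is routine constant-chasing to verify that the Markov and symmetry constants can be simultaneously satisfied in $1/9 - 1/100 > 1/100$.
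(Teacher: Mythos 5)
Your proposal is correct and follows essentially the same route as the paper: plant the single unknown point $\bfx$ at a uniformly random unlabeled position, simulate all kernel queries not touching that position locally, bound the queries touching it by averaging/Markov over the choice of $i^*$, and convert the $1/6$-fraction labeling guarantee into a $(2/3)(1/6)=1/9$ success probability for the planted point by symmetry, contradicting Lemma~\ref{lem:labelkkmc-single-draw}. The only cosmetic difference is that you make the Markov step and the final union bound ($1/9 - 1/100 > 1/100$) explicit, whereas the paper phrases the query-budget step as a row/column averaging argument; the content is identical.
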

\begin{proof}
Given such an algorithm, we given an algorithm $\mathcal B$ solving \textsc{LabelSingleKKMC} (see definition \ref{dfn:label-single-kkmc}) as follows. We first generate $n-1$ points $\{\bfx_i\}_{i=1}^{n-1}$ drawn i.i.d.\ from $\nu_{KKMC}$. We then draw a random index $i^*\sim\Unif([n/2])$, set the $i^*+n/2$th point to $\bfx$, and the rest of the points according to $\{\bfx_i\}_{i=1}^{n-1}$. We then run $\mathcal A$ on this input instance as follows. We can clearly give $\mathcal A$ the labels of the first $n/2$ elements of$\{\bfx_i\}_{i=1}^{n-1}$ without making inner product queries to $\bfx$. Whenever we need to read a kernel entry that doesn't involve the $i^*$th element, we generate the inner product without making queries to $\bfx$. Otherwise, we make the required inner product query $\bfx\cdot\bfv_{\ell,\ell_1,\ell_2}$ requested by $\mathcal A$. With probability at least $2/3$, $\mathcal A$ succeeds in outputting $\block(\bfx_i)$ for at least a $1/6$ fraction of the last $n/2$ input points. By symmetry, $\bfx$ has a correct label with probability at least $1/6$ in this event. Then by independence, the algorithm succeeds with probability at least $(2/3)(1/6) = 1/9 > 1/100$. Thus, $\mathcal B$ indeed solves \textsc{LabelSingleKKMC}.

We now bound the number of inner product queries made by $\mathcal B$. Suppose for contradiction that $\mathcal A$ makes at most $o(nk/\eps)$ total kernel queries. Then by averaging, $\mathcal A$ makes at most $(k/\eps)/200$ kernel queries for a $199/200$ fraction of the $n/2$ last rows. Similarly, $\mathcal A$ makes at most $(k/\eps)/200$ kernel queries for a $199/200$ fraction of the last $n/2$ columns. Thus, by a union bound, $\mathcal A$ makes at most $(k/\eps)/100$ inner product queries for a $99/100$ fraction of the last $n/2$ input points. By symmetry, $\mathcal A$ makes at most $(k/\eps)/100$ inner product queries on $\bfx$ with probability at least $99/100$. This contradicts Lemma \ref{lem:labelkkmc-single-draw}. Thus, we conclude that $\mathcal A$ makes at least $\Omega(nk/\eps)$ kernel queries, as desired.
\end{proof}

Finally, we use the above lemma to show the hardness of kernel $k$-means clustering.
\begin{cor}
Suppose an algorithm $\mathcal A$ gives a $(1+\eps/40)$-approximate kernel $k$-means clustering solution with probability at least $2/3$ over the input distribution $\bfK\sim\mu_\KKMC(n,k,\eps)$ and the algorithm's random coin tosses. Then, $\mathcal A$ makes $\Omega(nk/\eps)$ kernel queries.
\end{cor}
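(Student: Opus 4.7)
The plan is to reduce \textsc{LabelKKMC} to approximate kernel $k$-means clustering and then invoke the preceding \textsc{LabelKKMC} hardness lemma. Given any algorithm $\mathcal A$ satisfying the corollary's hypothesis with $Q$ expected kernel queries, I will construct an algorithm $\mathcal B$ that solves \textsc{LabelKKMC} with constant probability using at most $Q + O(n/\eps)$ queries. Since Lemma \ref{lem:labelkkmc-single-draw} already forces $k$ to exceed an absolute constant ($\log_2 k\ge 12$), the preceding \textsc{LabelKKMC} hardness lemma then yields $Q = \Omega(nk/\eps)$.

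To build $\mathcal B$: it first applies a uniformly random permutation to the $n$ positions. By exchangeability of the iid draws from $\nu_\KKMC(k,\eps)$, the kernel matrix distribution is unchanged while the labeled set $L$ becomes a uniformly random $n/2$-subset of $[n]$, independent of the kernel matrix and of $\mathcal A$'s randomness. $\mathcal B$ then simulates $\mathcal A$ on the kernel matrix, answering each inner-product query directly, to obtain a clustering $\mathcal C$. With probability at least $2/3$ the clustering is $(1+\eps/40)$-approximate, and then Lemma \ref{lem:cost-to-sampling} produces a set $G\subseteq[n]$ of ``good'' points with $\abs{G}\ge 2n/5$ such that each $\bfx\in G$ has at least an $\eps/80$ fraction of neighbors within its own cluster. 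Since $L$ is independent of $\mathcal C$, a hypergeometric Chernoff bound gives $\abs{G\cap\overline L}\ge n/6$ with high probability.

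For each unlabeled position $\bfx$, $\mathcal B$ samples $m=\Theta(1/\eps)$ points $\bfy_1,\dots,\bfy_m$ uniformly and independently from $\bfx$'s cluster, and queries each inner product $\bfx\cdot\bfy_i$. Because the cluster-sampling is independent of $L$ (by the preliminary permutation) and each position lies in $L$ with probability $1/2$, each sample is a labeled neighbor of $\bfx$ (i.e., $\bfy_i\in L$ with $\bfx\cdot\bfy_i\ne 0$) with probability at least $\tfrac12\cdot(\eps/80)=\eps/160$ whenever $\bfx\in G\cap\overline L$. Choosing the hidden constant in $m$ large enough then yields per-target success probability at least $9/10$. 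As soon as a sample $\bfy$ is detected to be a labeled neighbor, $\mathcal B$ outputs $\block(\bfy)$ as the label for $\bfx$; the structural property $\bfv_{j,j_1,j_2}\cdot\bfv_{j',j'_1,j'_2}\ne 0\Rightarrow j=j'$ of $\supp(\nu_\KKMC(k,\eps))$ guarantees $\block(\bfy)=\block(\bfx)$, so the output is correct.

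Conditional on $\mathcal C$ and $L$, the per-target searches are independent, so a Chernoff bound gives at least $\tfrac45\abs{G\cap\overline L}\ge n/8 > n/12 = (1/6)(n/2)$ correctly labeled unlabeled positions with high probability. Therefore $\mathcal B$ solves \textsc{LabelKKMC} with probability at least $2/3 - o(1)$ and makes $Q + (n/2)\cdot O(1/\eps) = Q + O(n/\eps)$ total kernel queries, and the preceding hardness lemma concludes $Q = \Omega(nk/\eps)$. The main obstacle I anticipate is the labeled-neighbor search: one has to combine Lemma \ref{lem:cost-to-sampling}'s good-cluster density bound with the exchangeability-induced independence of $L$ from $\mathcal C$ to pin down the per-sample success probability as $\Omega(\eps)$ cleanly; the preliminary random permutation is what makes this essentially a one-line probability calculation rather than a delicate conditioning argument.
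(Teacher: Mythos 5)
Your reduction is the same as the paper's: run the clustering algorithm, invoke Lemma \ref{lem:cost-to-sampling} to obtain at least $2n/5$ points with an $\Omega(\eps)$ in-cluster neighbor density, sample $O(1/\eps)$ cluster-mates per unlabeled point to find a labeled neighbor (whose block determines the point's block by the support structure of $\nu_\KKMC$), and conclude via the \textsc{LabelKKMC} hardness lemma that $Q + O(n/\eps) = \Omega(nk/\eps)$. Your write-up is in fact more explicit than the paper's (the random permutation making the labeled set independent of the clustering is left implicit there); the one point to watch is that the per-target success probability cannot be pushed to $9/10$ for a good point with very few neighbors, all of which may happen to be unlabeled, but the expected count $\approx \tfrac12\cdot\tfrac{2}{5}\cdot\tfrac{n}{2} = n/10 > n/12$ still clears the $1/6$ threshold after concentration.
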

\begin{proof}
    Using a $(1+\eps/40)$-approximate algorithm for $k$-means clustering, we can solve the computational problem described above as follows. We first cluster all the points using $\mathcal A$. Then, note that by Lemma \ref{lem:cost-to-sampling}, at least $2/5$ of the points must belong to a cluster such that sampling $O(1/\eps)$ points within its cluster allows us to find a point such that at least one coordinate matches a labeled point's coordinate. Then, on average, we will get $1/5$ of these correct and thus $1/6$ of these with very high probability by Chernoff bounds. Note that this used $Q + O(n/\eps)$ kernel queries, where $Q$ is the number of kernel queries that the kernel $k$-means step used. Then, since $Q + O(n/\eps) = \Omega(nk/\eps)$, we have that $Q = \Omega(nk/\eps)$, as desired.
\end{proof}

Finally, we obtain Theorem \ref{thm:main-lower-bound} by rescaling $\eps$ by a constant factor.

\section{Clustering mixtures of Gaussians}\label{section:mog-upper-bound}
In this section, we show that our worst case kernel query complexity lower bounds for the kernel $k$-means clustering problem are pessimistic by a factor of $k$ when our input instance is mixture of $k$ Gaussians. More specifically, we prove the following theorem:

\begin{thm}[Clustering mixtures of Gaussians]\label{thm:cluster-gaussians-main}
Let $m = \Omega(\eps^{-1}\log n)$ as specified by Corollary \ref{cor:gaussian-sketch-dimension}. Suppose we have a mixture of $k$ Gaussians with isotropic covariance $\sigma^2\bfI_d$ and means $(\bfmu_\ell)_{\ell=1}^k$ in $\mathbb R^d$. Furthermore, suppose that the Gaussian means $\bfmu_\ell$ are all separated by at least $\norm*{\bfmu_{\ell_1}-\bfmu_{\ell_2}}_2\geq \Omega(\sigma\sqrt{\log k})$ as specified by Theorem 5.1 of \cite{regev2017learning} and $\norm*{\bfmu_{\ell_1}-\bfmu_{\ell_2}}_2\geq \Omega(\sigma\sqrt{\log\log n + \log\eps^{-1}})$ as specified by Lemma \ref{lem:distinguish-gaussians} with $\delta = (2m+k)^{-3}$. Finally, suppose that we are in the parameter regime of $\poly(k,1/\eps,d,\log n) = O(\sqrt{n})$, $d\eps\geq 1$, and $k/\eps\leq d\leq n/10$. Then, there exists an algorithm outputting a $(1+O(\eps))$-approximate $k$-means clustering solution with probability at least $2/3$.
\end{thm}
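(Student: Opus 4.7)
The plan is to sidestep the $\Omega(nk/\eps)$ bound by \emph{simulating} a Gaussian sketch $\bfS\in\mathbb R^{m\times d}$ of the input points using only $O(nm)$ kernel queries, where $m = \Omega(\eps^{-1}\log n)$, and then solving the $k$-means problem offline in the sketched space where no further kernel queries are needed.

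First, I would draw a small subsample of $n_0 = \poly(k,1/\eps,\log n)$ points and compute its full Gram matrix with $n_0^2\leq\poly(k,1/\eps,\log n) = O(\sqrt n)$ kernel queries. Since the mixing probabilities are at least $1/\poly(k/\eps)$, Chernoff bounds ensure the subsample hits each component in abundance. From this Gram matrix and the separation assumption $\norm*{\bfmu_{\ell_1}-\bfmu_{\ell_2}}_2 = \Omega(\sigma\sqrt{\log k})$, I invoke Theorem 5.1 of \cite{regev2017learning} to obtain accurate estimates of each $\bfmu_\ell$ (represented as linear combinations of subsample points, so $\hat\bfmu_\ell\cdot\bfx$ is computable from kernel queries). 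Then I apply Lemma \ref{lem:distinguish-gaussians} with the stronger separation to correctly label every subsample point with its mixture component, with per-point failure probability $(2m+k)^{-3}$; a union bound over the $O(m+k)$ points I ever need to label keeps the overall failure probability negligible.

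Second, I construct the $m$ sketch rows implicitly. I split off $2m$ fresh labeled subsample points and pair them within-cluster to form
\begin{eqn}
    \bfs_j\coloneqq \bfx_{r_j^{(1)}}-\bfx_{r_j^{(2)}},\qquad j\in[m],
\end{eqn}
where $r_j^{(1)},r_j^{(2)}$ are two disjoint draws from the same (correctly identified) component $\ell_j$. Conditional on correct labels, the means cancel and each $\bfs_j\sim\mathcal N(0,2\sigma^2\bfI_d)$, independently across $j$ since the pairs are disjoint. The common scaling by $\sigma\sqrt 2$ is immaterial: $k$-means clustering is invariant under uniform rescaling of the data, so the $\bfs_j$ play the role of i.i.d.\ mean-zero isotropic Gaussian rows for all clustering-related purposes.

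Third, for each of the remaining $n-O(m)$ input points $\bfx_i$, I compute its $m$ sketch coordinates via
\begin{eqn}
    \bfs_j\cdot\bfx_i = \bfx_{r_j^{(1)}}\cdot\bfx_i - \bfx_{r_j^{(2)}}\cdot\bfx_i,
\end{eqn}
using exactly $2$ kernel queries per $(i,j)$ pair, for a grand total of $2mn = O(n\eps^{-1}\log n)$ kernel queries; combined with the $O(\sqrt n)$ queries from the first step, this stays within $(n/\eps)\poly(\log(k/\eps))$. Finally, I appeal to Corollary \ref{cor:gaussian-sketch-dimension}, which says that for $m = \Omega(\eps^{-1}\log n)$ an i.i.d.\ Gaussian sketch preserves the $k$-means objective value of every $k$-clustering of these $n$ points to a $(1\pm\eps)$ factor with high probability. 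Running any offline $(1+\eps)$-approximation algorithm for $k$-means on the sketched points $\{\bfS\bfx_i\}_{i=1}^n$ in $\mathbb R^m$ -- which requires no kernel queries -- therefore yields a clustering whose cost on the original data is within a $(1+O(\eps))$ factor of optimal, and the $O(m)$ held-out sketch-defining points can be appended to their known clusters at the end with negligible effect on the cost.

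The main obstacle will be handling the dependence subtlety: the sketch rows $\bfs_j$ are literally built from input points, so they are not \emph{a priori} independent of the data being clustered. I resolve this by using a disjoint held-out set of $2m$ points solely to form $\bfS$; then, conditional on correct cluster labels for these held-out points, they are independent of the remaining $n-O(m)$ points being clustered, and conditional on the realization of those remaining points the sketch is a bona fide Gaussian matrix independent of them, which is precisely the setting needed to invoke Corollary \ref{cor:gaussian-sketch-dimension}.
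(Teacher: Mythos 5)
Your construction of the sketch is exactly the paper's: estimate the means from a small Gram submatrix via Theorem 5.1 of \cite{regev2017learning}, label a held-out set of $2m+k$ points with Lemma \ref{lem:distinguish-gaussians}, form the rows of $\bfS$ as within-cluster differences so that they are i.i.d.\ mean-zero isotropic Gaussians independent of the remaining points, and compute each $\bfS\bfx_i$ with $2m$ kernel queries. The gap is in your final step. Corollary \ref{cor:gaussian-sketch-dimension} is not a cost-preservation statement: it is a per-point hypothesis test saying that, for \emph{two} means with squared separation at least $\eps\sigma^2 d$, one can decide from $\bfS\bfx$ which of the two generated $\bfx$, with failure probability $\delta$ when $m\gtrsim \eps^{-1}\log\delta^{-1}$. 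It says nothing about the sketch preserving the $k$-means objective of every clustering of the $n$ points to a $(1\pm\eps)$ factor, and indeed no Gaussian projection to $O(\eps^{-1}\log n)$ dimensions can do that in general (a $(1\pm\eps)$-distortion embedding of this type requires $\Omega(\eps^{-2})$ dimensions), so "run any offline $(1+\eps)$-approximation on $\{\bfS\bfx_i\}$" is unsupported at the stated sketch dimension.

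What is actually needed, and what your proposal omits entirely, is the argument that \emph{clustering each point to (an approximation of) its Gaussian mean} is itself a $(1+O(\eps))$-approximate $k$-means solution. The paper does this in two pieces. First, Lemma \ref{lem:assign-gaussians} uses the sketch only to assign each $\bfx_i$ a center $\bfmu_{\ell_i}$ with $\norm{\bfmu_{\ell_i}-\bfmu_{\ell_i^*}}_2^2\leq\eps\sigma^2 d$ (means closer than that need not be disambiguated). Second, Theorem \ref{thm:cluster-gaussian-same} lower-bounds the cost of \emph{any} clustering $\bfC$ by $(1-6\eps)\norm{\bfG}_F^2$, using that $\bfM-\bfC$ has rank at most $2k$ together with the flatness of the singular value spectrum of the Gaussian noise matrix $\bfG$ (this is where $k/\eps\leq d\leq n/10$ enters), and upper-bounds the cost of the mean-based clustering by $(1+5\eps/3)\norm{\bfG}_F^2$ using $\chi^2$ concentration and the $\eps\sigma^2 d$ assignment slack. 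Without some version of this comparison, your algorithm produces a clustering but you have no bound relating its cost to the optimum.
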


\subsection{Proof overview}
By Theorem 5.1 of \cite{regev2017learning}, we can in $s = \poly(k,1/\eps,d)$ samples compute  approximations $(\hat\bfmu_\ell)_{\ell=1}^k$ to the true Gaussian means $(\bfmu_\ell)_{\ell=1}^k$ up to
\begin{eqn}
    \norm*{\hat\bfmu_\ell-\bfmu_\ell}_2^2\leq \sigma^2
\end{eqn}
by setting $\delta=1$ in their paper. Set $t\coloneqq\max\{s,2m+k,d\}$. Then, we may extract the $t$ underlying points in $t^2 = O(n)$ kernel queries by reading a $t\times t$ submatrix of the kernel matrix and retrieving the underlying Gaussian points themselves from the inner product matrix up to a rotation, for instance by Cholesky decomposition. Since we have a sample of size at least $s$, we may approximate the Gaussian means. Now, of the $t$ Gaussian points sampled, we show that we can exactly assign which points belong to which Gaussians for $2m+k$ input points using Lemma \ref{lem:distinguish-gaussians}.

Now let $\bfx_1$ and $\bfx_2$ be two input points with the same mean. Then note that $\bfx_1-\bfx_2\sim\mathcal N(0,2\sigma^2\bfI_d)$ and that we may compute its inner product with another input point $\bfx'$ in two kernel queries, i.e.\ by computing $\bfx_1\cdot\bfx'$ and $\bfx_2\cdot\bfx'$ individually and subtracting them. Now let $\bfS\in\mathbb R^{m\times d}$ be the matrix formed by placing $m$ pairs of the above difference of pairs of Gaussians drawn from the same mean, scaled by $(2\sigma^2)^{-1}$. Then $\bfS$ is a $m\times d$ matrix of i.i.d.\ Gaussians, and for $n-2m$ input points $\bfx_i$, we may compute $\bfS\bfx_i$ with $O(nm)$ kernel queries total. We then prove that for well-separated Gaussian means, $\bfS\bfx_i$ can be used to identify which true Gaussian mean $\bfx_i$ came from in corollary \ref{cor:gaussian-sketch-dimension}.

Finally, we show that clustering points to their Gaussian means results in an approximately optimal $k$-means clustering. By the above, we can do this assigning for Gaussian means that are separated by more than $\eps\sigma^2 d$, and otherwise, assigning to a wrong mean only $\eps\sigma^2d$ away still results in a $(1+\eps)$-optimal clustering.

\subsection{Assigning input points to Gaussian means}
We first present the following lemma, which allows us to distinguish whether a point is drawn from a Gaussian with one mean or another with probability at least $1-\delta$.
\begin{lem}[Distinguishing Gaussian means]\label{lem:distinguish-gaussians}
Let $\bftheta_1,\bftheta_2\in\mathbb R^d$ be two Gaussian means separated by $\norm*{\bftheta_1-\bftheta_2}_2^2\geq C\sigma^2\log\delta^{-1}$ for a constant $C$ large enough and $\delta\in(0,1/2)$. Furthermore, let $\hat\bftheta_1,\hat\bftheta_2$ be approximations to the Gaussian means with $\norm*{\hat\bftheta_b - \bftheta_b}_2\leq\sigma$ for $b\in\{1,2\}$. Let $\hat\bfc\coloneqq(\hat\bftheta_1+\hat\bftheta_2)/2$. Then
\begin{eqn}
    \begin{cases}
        (\bfx-\hat\bfc)\cdot(\hat\bftheta_1-\hat\bfc) > 0 & \text{if $\bfx\sim\mathcal N(\bftheta_1,\sigma^2\bfI_d)$ w.p.\ at least $1-\delta$} \\
        (\bfx-\hat\bfc)\cdot(\hat\bftheta_1-\hat\bfc) < 0 & \text{if $\bfx\sim\mathcal N(\bftheta_2,\sigma^2\bfI_d)$ w.p.\ at least $1-\delta$}
    \end{cases}.
\end{eqn}
\end{lem}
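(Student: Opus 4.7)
The plan is to view $(\bfx-\hat\bfc)\cdot(\hat\bftheta_1-\hat\bfc)$ as the standard half-space test that checks whether $\bfx$ lies on the $\hat\bftheta_1$ side of the perpendicular bisector of the segment from $\hat\bftheta_1$ to $\hat\bftheta_2$, and to decompose it into a deterministic signal term and a centered Gaussian noise term. Writing $\hat\bftheta_b = \bftheta_b + \bfepsilon_b$ with $\norm{\bfepsilon_b}_2\leq\sigma$, and $\bfx = \bftheta_1 + \sigma\bfg$ with $\bfg\sim\mathcal N(0,\bfI_d)$ in the first case, one gets
\begin{eqn}
(\bfx-\hat\bfc)\cdot(\hat\bftheta_1-\hat\bfc) = \underbrace{(\bftheta_1-\hat\bfc)\cdot(\hat\bftheta_1-\hat\bfc)}_{=: A} + \underbrace{\sigma\bfg\cdot(\hat\bftheta_1-\hat\bfc)}_{=: B}.
\end{eqn}

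The first step is to lower bound the signal $A$. Using $\hat\bftheta_1-\hat\bfc = \tfrac12(\bftheta_1-\bftheta_2 + \bfepsilon_1-\bfepsilon_2)$ and $\bftheta_1-\hat\bfc = \tfrac12(\bftheta_1-\bftheta_2 - \bfepsilon_1-\bfepsilon_2)$, expanding gives
\begin{eqn}
A = \tfrac14\norm*{\bftheta_1-\bftheta_2}_2^2 - \tfrac12(\bftheta_1-\bftheta_2)\cdot\bfepsilon_2 - \tfrac14\bracks*{\norm*{\bfepsilon_1}_2^2 - \norm*{\bfepsilon_2}_2^2}.
\end{eqn}
Applying Cauchy--Schwarz and $\norm{\bfepsilon_b}_2\leq\sigma$ yields $A \geq \tfrac14\norm*{\bftheta_1-\bftheta_2}_2^2 - \tfrac12\sigma\norm*{\bftheta_1-\bftheta_2}_2 - \tfrac14\sigma^2$, which is at least $(1-o(1))\tfrac14\norm*{\bftheta_1-\bftheta_2}_2^2$ once $\norm*{\bftheta_1-\bftheta_2}_2^2\geq C\sigma^2\log\delta^{-1}$ with $C$ taken large and $\delta<1/2$.

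The second step is to upper bound the noise $B$. It is a centered Gaussian with variance $\sigma^2\norm*{\hat\bftheta_1-\hat\bfc}_2^2 = \tfrac{\sigma^2}{4}\norm*{\hat\bftheta_1-\hat\bftheta_2}_2^2 \leq \tfrac{\sigma^2}{4}(\norm*{\bftheta_1-\bftheta_2}_2+2\sigma)^2$, so a standard one-dimensional Gaussian tail bound gives $\abs{B}\leq \sigma\norm*{\hat\bftheta_1-\hat\bfc}_2\sqrt{2\log(2/\delta)}$ with probability at least $1-\delta$. Combining this with the signal bound, the test statistic is positive whenever
\begin{eqn}
\tfrac14\norm*{\bftheta_1-\bftheta_2}_2^2 - \tfrac12\sigma\norm*{\bftheta_1-\bftheta_2}_2 - \tfrac14\sigma^2 > \sigma\cdot\tfrac12(\norm*{\bftheta_1-\bftheta_2}_2+2\sigma)\sqrt{2\log(2/\delta)},
\end{eqn}
which, after dividing through by $\norm*{\bftheta_1-\bftheta_2}_2$, reduces to a condition of the form $\norm*{\bftheta_1-\bftheta_2}_2 \geq C'\sigma\sqrt{\log\delta^{-1}}$, matching the hypothesis for $C$ chosen appropriately. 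The case $\bfx\sim\mathcal N(\bftheta_2,\sigma^2\bfI_d)$ is identical by the symmetric decomposition with $\bftheta_1$ and $\bftheta_2$ swapped, which flips the sign of $A$ and leaves $\abs{B}$ unchanged.

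There is no real obstacle here; this is a routine signal-versus-Gaussian-noise calculation. The only bookkeeping to be careful about is propagating the $\norm*{\bfepsilon_b}_2\leq\sigma$ slack through both the signal (where it appears as a first-order $\sigma\norm*{\bftheta_1-\bftheta_2}_2$ perturbation) and the noise (where it appears as a mild inflation of the standard deviation), and then choosing the absolute constant $C$ large enough so that the $\sqrt{\log\delta^{-1}}$ term in the separation dominates both perturbations.
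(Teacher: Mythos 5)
Your proposal is correct and follows essentially the same route as the paper's proof: decompose the test statistic into a deterministic signal term of order $\tfrac14\norm*{\bftheta_1-\bftheta_2}_2^2$ plus perturbations controlled by $\norm*{\hat\bftheta_b-\bftheta_b}_2\leq\sigma$, and a one-dimensional Gaussian noise term of standard deviation $O(\sigma\norm*{\bftheta_1-\bftheta_2}_2)$ killed by the separation hypothesis for $C$ large. The only difference is cosmetic bookkeeping (you expand around $\bfepsilon_b=\hat\bftheta_b-\bftheta_b$ while the paper expands around $\norm*{\hat\bftheta_1-\hat\bfc}_2^2$), and your sign-flip remark for the second case is the same symmetry the paper uses.
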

\begin{proof}
Let $\bfx = \bftheta+\bfeta$ with $\bftheta\in\{\bftheta_1,\bftheta_2\}$ and $\bfeta\sim\mathcal N(0,\sigma^2\bfI_d)$. Then if $\bftheta = \bftheta_1$, then
\begin{eqn}\label{eqn:gaussian-test-theta-1}
    (\bfx-\hat\bfc)\cdot(\hat\bftheta_1-\hat\bfc) &= (\bftheta_1-\hat\bfc)\cdot(\hat\bftheta_1-\hat\bfc) + \bfeta\cdot(\hat\bftheta_1-\hat\bfc) \\
    &= \norm*{\hat\bftheta_1-\hat\bfc}_2^2 + (\bftheta_1-\hat\bftheta_1)\cdot(\hat\bftheta_1-\hat\bfc) + \bfeta\cdot(\hat\bftheta_1-\hat\bfc)
\end{eqn}
and similarly if $\bftheta = \bftheta_2$, then
\begin{eqn}\label{eqn:gaussian-test-theta-2}
    (\bfx-\hat\bfc)\cdot(\hat\bftheta_1-\hat\bfc) &= (\bftheta_2-\hat\bfc)\cdot(\hat\bftheta_1-\hat\bfc) + \bfeta\cdot(\hat\bftheta_1-\hat\bfc) \\
    &= (\hat\bftheta_2-\hat\bfc)\cdot(\hat\bftheta_1-\hat\bfc) + (\bftheta_2-\hat\bftheta_2)\cdot(\hat\bftheta_1-\hat\bfc) + \bfeta\cdot(\hat\bftheta_1-\hat\bfc).
\end{eqn}
Note that
\begin{eqn}
    \norm*{\frac{\hat\bftheta_1-\hat\bftheta_2}{2}-\frac{\bftheta_1-\bftheta_2}{2}}_2 \leq \frac12\parens*{\norm*{\bftheta_1-\hat\bftheta_1}_2+\norm*{\bftheta_2-\hat\bftheta_2}_2}\leq \sigma
\end{eqn}
and thus we have the following estimates:
\begin{eqn}
    \norm*{\hat\bftheta_1-\hat\bfc}_2^2 &\geq \norm*{\bftheta_1-\bfc}_2^2 - 2\sigma\norm*{\bftheta_1-\bfc}_2 \\
    (\hat\bftheta_1-\hat\bfc)\cdot(\hat\bftheta_2-\hat\bfc) &\leq -\norm*{\bftheta_1-\bfc}_2^2 + \sigma^2 + 2\sigma\norm*{\bftheta_1-\bfc}_2 \\
    \abs*{(\bftheta_1-\hat\bftheta_1)\cdot(\hat\bftheta_1-\hat\bfc)} &\leq \sigma^2 + \sigma\norm*{\bftheta_1-\bfc}_2 \\
    \abs*{(\bftheta_2-\hat\bftheta_2)\cdot(\hat\bftheta_1-\hat\bfc)} &\leq \sigma^2 + \sigma\norm*{\bftheta_1-\bfc}_2.
\end{eqn}
These bound all but the last terms in equations (\ref{eqn:gaussian-test-theta-1}) and (\ref{eqn:gaussian-test-theta-2}). To bound the last term, note also that we may take $\sigma\leq \norm*{\bftheta_1-\bftheta_2}_2/12$ for $C\geq 144/\log2$. Furthermore, $\bfeta\cdot(\hat\bftheta_1-\hat\bfc)\sim\mathcal N(0,\sigma^2\norm*{\hat\bftheta_1-\hat\bfc}_2^2)$ and thus with probability at least $1-\delta$, we have that
\begin{eqn}
    \abs*{\bfeta\cdot(\hat\bftheta_1-\hat\bfc)} &\leq \sigma\sqrt{\log \frac1\delta}\norm*{\hat\bftheta_1-\hat\bfc}_2\leq \frac1{6}\norm*{\bftheta_1-\bfc}\parens*{\norm*{\bftheta_1-\bfc}_2+\sigma}\leq \frac13\norm*{\bftheta_1-\bfc}_2^2
\end{eqn}
by taking $C\geq 6$. We also have the bounds
\begin{eqn}
    \sigma^2 &\leq \frac{\norm*{\bftheta_1-\bftheta_2}_2^2}{144} \\
    2\sigma\norm*{\bftheta_1-\bfc}_2 &\leq 2\sigma\frac{\norm*{\bftheta_1-\bfc}_2}{6}\norm*{\bftheta_1-\bfc}_2\leq \frac13\norm*{\bftheta_1-\bfc}_2^2.
\end{eqn}
Then, with probability at least $1-\delta$, if $\bftheta = \bftheta_1$, then
\begin{eqn}
    (\bfx-\hat\bfc)\cdot(\hat\bftheta_1-\hat\bfc) &\geq \norm*{\bftheta_1-\bfc}_2^2 - \frac13\norm*{\bftheta_1-\bfc}_2^2 > 0
\end{eqn}
and if $\bftheta = \bftheta_2$, then
\begin{eqn}
    (\bfx-\hat\bfc)\cdot(\hat\bftheta_1-\hat\bfc) &\leq -\norm*{\bftheta_1-\bfc}_2^2 + \frac{\norm*{\bftheta_1-\bfc}_2^2}{36} + \frac13\norm*{\bftheta_1-\bfc}_2^2 < 0
\end{eqn}
and thus we conclude as desired.
\end{proof}

Using Lemma \ref{lem:distinguish-gaussians}, we may identify the true Gaussian mean of a point with probability at least $1-(2m+k)^{-3}$ with squared separation only $O(\sigma^2(\log\log n + \log\eps^{-1} + \log k))$. Then by a union bound, we may indeed identify the true Gaussian means of $2m+k$ points simultaneously with high probability. We may thus form the matrix $\bfS$ of i.i.d.\ standard Gaussians as described previously and apply it to the $n-2m$ remaining points.

As a corollary of Lemma \ref{lem:distinguish-gaussians}, we show that for Gaussian means that are separated more, with squared distance at least $\eps\sigma^2 d$, we may distinguish the means with a Gaussian sketch of dimension $m = O(\eps^{-1}\log\delta^{-1})$ with probability at least $1-\delta$. In particular, we may choose the failure probability to be $\delta = (nk)^{-3}$ so that with a sketch dimension of $m = O(\eps^{-1}\log(nk)^3) = O(\eps^{-1}\log n)$, we can identify the correct Gaussian mean for all $n-2m$ remaining input points simultaneously by the union bound, as claimed. That is, using $\bfS\bfx$, we can find the correct mean of $\bfx$ for Gaussians with large enough separation.

\begin{cor}\label{cor:gaussian-sketch-dimension}
Let $\bfmu_1,\bfmu_2\in\mathbb R^d$ be two Gaussian means separated by $\norm*{\bfmu_1-\bfmu_2}_2^2\geq \eps\sigma^2 d$, and let $\delta\in(0,1/2)$. Let $\bfS\in\mathbb R^{m\times d}$ be a matrix of i.i.d.\ standard Gaussians. If $m \geq C\eps^{-1}\log(\delta^{-1})$, for some constant $C$ large enough, then there exists an algorithm that can decide whether $\bfx\sim\mathcal N(\bfmu_1,\sigma^2\bfI_d)$ or $\bfx\sim\mathcal N(\bfmu_2,\sigma^2\bfI_d)$ given only $\bfS$, $\bfS\bfx$, and the approximate means $\hat\bfmu_j$, with probability at least $1-\delta$.
\end{cor}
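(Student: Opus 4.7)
The plan is to reduce to Lemma \ref{lem:distinguish-gaussians} by preconditioning the sketch so that the induced noise becomes isotropic, and then invoking that lemma in the sketched space $\mathbb R^m$. Conditional on $\bfS$, if $\bfx\sim\mathcal N(\bfmu_j,\sigma^2\bfI_d)$ then $\bfS\bfx\sim\mathcal N(\bfS\bfmu_j,\sigma^2\bfS\bfS^\top)$, which is not isotropic. Since $m\leq d$ in the sketching regime of interest, $\bfS\bfS^\top\in\mathbb R^{m\times m}$ is almost surely invertible, so I would introduce the whitened statistic $\bfy\coloneqq(\bfS\bfS^\top)^{-1/2}\bfS\bfx$ together with $\bftheta_j\coloneqq(\bfS\bfS^\top)^{-1/2}\bfS\bfmu_j$ and $\hat\bftheta_j\coloneqq(\bfS\bfS^\top)^{-1/2}\bfS\hat\bfmu_j$. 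Setting $\bfA\coloneqq(\bfS\bfS^\top)^{-1/2}\bfS$, one checks $\bfA\bfA^\top=\bfI_m$, so $\bfy\sim\mathcal N(\bftheta_j,\sigma^2\bfI_m)$, and all three whitened quantities are computable from the permitted inputs $\bfS$, $\bfS\bfx$, and $\hat\bfmu_j$.

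Next I would verify the two hypotheses of Lemma \ref{lem:distinguish-gaussians} in the sketched space. Because $\bfA$ has orthonormal rows and thus operator norm $1$, the approximate-mean bound transfers: $\|\hat\bftheta_j-\bftheta_j\|_2=\|\bfA(\hat\bfmu_j-\bfmu_j)\|_2\leq\|\hat\bfmu_j-\bfmu_j\|_2\leq\sigma$. For the separation, note $\|\bftheta_1-\bftheta_2\|_2^2=\|\bfP_{\bfS}(\bfmu_1-\bfmu_2)\|_2^2$, where $\bfP_{\bfS}=\bfA^\top\bfA=\bfS^\top(\bfS\bfS^\top)^{-1}\bfS$ is the orthogonal projection onto $\mathrm{row}(\bfS)$. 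Since $\bfS$ is i.i.d.\ Gaussian, its row span is a uniformly random $m$-dimensional subspace of $\mathbb R^d$, and a standard chi-squared / Johnson--Lindenstrauss tail bound yields $\|\bfP_{\bfS}\bfv\|_2^2\geq(1/2)(m/d)\|\bfv\|_2^2$ for any fixed $\bfv$ with probability at least $1-\exp(-\Omega(m))$. Applied with $\bfv=\bfmu_1-\bfmu_2$ and the hypothesis $\|\bfmu_1-\bfmu_2\|_2^2\geq\eps\sigma^2 d$, this gives $\|\bftheta_1-\bftheta_2\|_2^2\geq(m\eps/2)\sigma^2$. Taking $m\geq C\eps^{-1}\log\delta^{-1}$ with $C$ at least twice the separation constant from Lemma \ref{lem:distinguish-gaussians} makes the right-hand side at least the requisite $C'\sigma^2\log\delta^{-1}$.

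Conditional on the above projection-concentration event, Lemma \ref{lem:distinguish-gaussians} applied to $\bfy,\bftheta_1,\bftheta_2,\hat\bftheta_1,\hat\bftheta_2$ with $\hat{\bfc}\coloneqq(\hat\bftheta_1+\hat\bftheta_2)/2$ yields the decision rule $\mathrm{sign}((\bfy-\hat{\bfc})\cdot(\hat\bftheta_1-\hat{\bfc}))$, correct with conditional probability at least $1-\delta/2$. A union bound with the concentration event, whose failure probability is $\exp(-\Omega(m))\leq\delta/2$ once $C$ is sufficiently large, completes the argument. The hard part will really just be bookkeeping the constants: one must coordinate the projection-concentration constant with the separation constant implicit in Lemma \ref{lem:distinguish-gaussians} (and verify that the exceptional rescaling $\bfS\mapsto\bfS/\sqrt{m}$ familiar from JL statements does not change the sketched-space Gaussian variance we rely on), but since both bounds scale linearly in $m$ after the factor $\eps^{-1}\log\delta^{-1}$ is extracted, everything collapses to choosing a single large enough absolute constant.
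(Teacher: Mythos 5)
Your proposal is correct and follows essentially the same route as the paper: your whitening map $(\bfS\bfS^\top)^{-1/2}\bfS$ coincides (up to an orthogonal rotation of $\mathbb R^m$) with the $\bfV^\top$ the paper extracts from the SVD of $\bfS$, the projection-concentration step is the same Dasgupta--Gupta / chi-squared bound, and the reduction to Lemma \ref{lem:distinguish-gaussians} with a $\delta/2+\delta/2$ union bound is identical. The only cosmetic quibble is that $\bftheta_j$ itself is not computable from the permitted inputs (it needs the true $\bfmu_j$), but the decision rule only requires $\bfy$ and $\hat\bftheta_j$, so this does not affect the argument.
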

\begin{proof}
Let $\bfS = \bfU\bfSigma\bfV^\top$ be the truncated SVD of $\bfS^\top$. Note that the algorithm can compute this decomposition and thus can retrieve $\bfV^\top \bfx = (\bfU\bfSigma)^{-1}\bfS\bfx$ and that $\bfV^\top$ is a random projection. Then as discussed in the proof of Theorem 2.1 in \cite{dasgupta2003elementary}, we have
\begin{eqn}
    \E_{\bfS}\parens*{\norm*{\bfV^\top(\bfmu_1-\bfmu_2)}_2^2} = \frac{m}{d}\norm*{\bfmu_1-\bfmu_2}_2^2,
\end{eqn}
and by Lemma 2.2 of \cite{dasgupta2003elementary},
\begin{eqn}
    \Pr_{\bfS}\parens*{\norm*{\bfV^\top(\bfmu_1-\bfmu_2)}_2^2\leq \frac12\frac{m}{d}\norm*{\bfmu_1-\bfmu_2}_2^2} < \exp\parens*{-\Omega(m)}\leq \frac{\delta}{2}
\end{eqn}
for $C$ chosen large enough. Now suppose that the above event does not happen, which happens with probability at least $1-\delta/2$. Write $\bfx = \bfmu + \bfeta$, where $\bfmu\in\{\bfmu_1,\bfmu_2\}$ and $\bfeta\sim\mathcal N(0,\sigma^2\bfI_d)$. Note then that $\bfV^\top\bfx = \bfV^\top\bfmu + \bfV^\top\bfeta\sim\mathcal N(\bfV^\top\bfmu,\sigma^2\bfI_m)$ by the rotational invariance of Gaussians, and
\begin{eqn}
    \norm*{\bfV^\top\bfmu_1-\bfV^\top\bfmu_2}_2^2\geq \frac1{2}\frac{m}{d}\norm*{\bfmu_1-\bfmu_2}_2^2\geq \frac1{2}\frac{m}{d}\eps\sigma^2d\geq \frac{C}2\log\delta^{-1}.
\end{eqn}
Furthermore, we have an approximation of the means $\bfV^\top\bfmu_1$ and $\bfV^\top\bfmu_2$ via $\bfV^\top\hat\bfmu_1$ and $\bfV^\top\hat\bfmu_2$ with
\begin{eqn}
    \norm*{\bfV^\top(\bfmu_b-\hat\bfmu_b)}_2\leq \norm*{\bfmu_b-\hat\bfmu_b}_2 = \sigma^2.
\end{eqn}
We then take our $C$ here to be big enough to use Lemma \ref{lem:distinguish-gaussians} and conclude.
\end{proof}
We now put corollary \ref{cor:gaussian-sketch-dimension} to algorithmic use by using it to assign to each point a center withing $\eps\sigma^2 d$.

\begin{lem}\label{lem:assign-gaussians}
    With probability at least $99/100$, we may simultaneously assign for each $\bfx_i$ for $i\in[n]$ a center $\bfmu_{\ell_i}$ with $\norm*{\bfmu_{\ell_i}-\bfmu_{\ell_i^*}}_2^2\leq\eps\sigma^2 d$, where $\bfmu_{\ell_i^*}$ is the true Gaussian mean that generated $\bfx_i$. Furthermore, the assignment algorithm that we describe only depends on $\bfS$, $\bfS\bfx_i$, and approximate means $\hat\bfmu_j$.
\end{lem}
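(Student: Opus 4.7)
The plan is a tournament-based assignment. For each input point $\bfx_i$, I would run the hypothesis test of Corollary \ref{cor:gaussian-sketch-dimension} on every pair of approximate means $(\hat\bfmu_{j_1}, \hat\bfmu_{j_2})$ whose separation exceeds a suitable threshold; this test uses only $\bfS$, $\bfS\bfx_i$, and the approximate means, so it needs no additional kernel queries. I would then output any index $\ell_i \in [k]$ whose mean wins all of its well-separated match-ups.

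First I would set the per-test failure probability to $\delta = (nk^2)^{-3}$. Since the hypothesis $\poly(k,1/\eps,d,\log n) = O(\sqrt n)$ forces $k = O(\sqrt n)$, we have $\log\delta^{-1} = O(\log n)$, so the assumed sketching dimension $m = \Omega(\eps^{-1}\log n)$ satisfies the requirement $m \geq C\eps^{-1}\log\delta^{-1}$ of Corollary \ref{cor:gaussian-sketch-dimension}. A union bound over the at most $nk^2$ tests then shows that every test succeeds simultaneously with probability at least $1-(nk^2)^{-2} \geq 99/100$, which I would condition on for the rest of the argument.

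Next I would convert the thresholding condition on approximate means into a condition on true means. Using $\|\hat\bfmu_j - \bfmu_j\|_2 \leq \sigma$ together with $d\eps\geq 1$ (so $\eps\sigma^2 d\geq \sigma^2$), there is an absolute constant $c>1$ for which the implication $\|\hat\bfmu_{j_1}-\hat\bfmu_{j_2}\|_2^2 \geq c\eps\sigma^2 d \;\Longrightarrow\; \|\bfmu_{j_1}-\bfmu_{j_2}\|_2^2 \geq \eps\sigma^2 d$ holds, at which point Corollary \ref{cor:gaussian-sketch-dimension} applies to the test. Fix any $\bfx_i$ with true mean $\bfmu_{\ell_i^*}$. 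Under the conditioning above, the true index $\ell_i^*$ wins each of its own well-separated tests, so the algorithm produces at least one winner. Conversely, any winner $\ell_i$ must satisfy $\|\hat\bfmu_{\ell_i} - \hat\bfmu_{\ell_i^*}\|_2^2 < c\eps\sigma^2 d$, since otherwise $\ell_i^*$ would have beaten $\ell_i$ in their direct test; one triangle-inequality step together with $\|\hat\bfmu_j - \bfmu_j\|_2 \leq \sigma$ then upgrades this to $\|\bfmu_{\ell_i} - \bfmu_{\ell_i^*}\|_2^2 = O(\eps\sigma^2 d)$, which gives the stated guarantee after absorbing the constant into $\eps$.

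The main obstacle is pure bookkeeping: tracking the precise constant blow-up when moving between separations on $\hat\bfmu$ and on $\bfmu$, and verifying that the parameter regime $\poly(k,1/\eps,d,\log n) = O(\sqrt n)$ makes $\log(nk^2)$ fit inside the $O(\log n)$ slack of the sketching dimension chosen in Corollary \ref{cor:gaussian-sketch-dimension}. Once those constants are checked, the lemma falls out directly from a single union bound over the Corollary applied to all pairs.
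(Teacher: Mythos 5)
Your proposal is correct and follows essentially the same route as the paper: a per-point tournament of the hypothesis tests from Corollary \ref{cor:gaussian-sketch-dimension}, with the failure probability set so that a union bound over all $n \cdot \poly(k)$ tests succeeds, and the observation that the winner must be within $O(\eps\sigma^2 d)$ of the true mean since otherwise it would lose its head-to-head test against the true mean. Your restriction of the tournament to well-separated pairs (with the threshold transferred from $\hat\bfmu$ to $\bfmu$ via the $\sigma$-accuracy of the approximate means and $\eps\sigma^2 d\geq\sigma^2$) is in fact a slightly cleaner handling than the paper's, which runs all pairwise tests and asserts the true mean wins even the match-ups where the corollary's separation hypothesis fails; the only cosmetic difference is that your constant in front of $\eps\sigma^2 d$ must be absorbed by rescaling $\eps$, which is harmless for the downstream use in Theorem \ref{thm:cluster-gaussian-same}.
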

\begin{proof}
We claim that we can assign a center within squared distance $\eps\sigma^2 d$ as follows. Let $\bfx = \bfmu+\bfeta$ with $\bfeta\sim\mathcal N(0,\sigma^2\bfI_d)$. We then iterate through guesses $\bfmu_j$ for $j\in[k]$ and assign $\bfmu_j$ to $\bfx$ if we run the hypothesis test between $\bfx\sim\mathcal N(\bfmu_j,\sigma^2\bfI_d)$ and $\bfx\sim\mathcal N(\bfmu_\ell,\sigma^2\bfI_d)$ and $\bfx$ chooses $\bfmu_j$ for \emph{every} $\ell\in[k]\setminus\{j\}$. Recall that we set the failure rate $\delta$ in corollary \ref{cor:gaussian-sketch-dimension} to $(nk)^{-3}$, so the hypothesis test is accurate for all $nk(k-1)$ hypothesis tests ranging over $n$ data points, $j\in[k]$, and $\ell\in[k]\setminus\{\ell\}$. Clearly, guessing $\bfmu_j = \bfmu$ results in passing all the hypothesis tests in this case. Note that our hypothesis test is run using corollary \ref{cor:gaussian-sketch-dimension} and only depends on $\bfS$, $\bfS\bfx_i$, and approximate means $\hat\bfmu_j$.

Now suppose that $\bfmu_j$ is a center that is more than $\eps\sigma^2d$ squared distance away from $\bfmu$. Then when we guess $\bfmu_j$, $\bfmu_j$ fails at least one hypothesis test, namely the one testing $\mathcal N(\bfmu_j,\sigma^2\bfI_d)$ against $\mathcal N(\bfmu,\sigma^2\bfI_d)$ when $\bfmu_\ell = \bfmu$. Thus, this algorithm correctly assigns every Gaussian input point to a center that is at most $\eps\sigma^2 d$ square distance away from the true mean $\bfmu$.
\end{proof}

\subsection{Clustering the points}
Now that we have approximately assigned input points to Gaussian means in $O(nm) = \tilde O(n/\eps)$ kernel queries, it remains to show that this information suffices to give a $(1+\eps)$-approximate solution to the KKMC problem.

\begin{thm}\label{thm:cluster-gaussian-same}
Let $d\eps\geq 1$ and $k/\eps\leq d\leq n/10$ and let our data set $\{\bfx_i\}_{i=1}^n$ be distributed as a mixture of $k$ Gaussians as described before. Then assigning the $\bfx_i$ to approximate means as in Lemma \ref{lem:assign-gaussians} gives a $(1+8\eps)$-approximate $k$-means clustering solution with probability at least $98/100$.
\end{thm}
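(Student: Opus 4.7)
The plan is to show that the cost of our clustering is at most $n\sigma^2 d(1+O(\eps))$, while the optimal $k$-means cost is at least $\Omega(n\sigma^2 d)$. Their ratio is then $1+O(\eps)$; tracking constants carefully yields the claimed $(1+8\eps)$ bound.

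\textbf{Upper bound via centroids.} Since the centroid of a cluster minimizes the sum of squared distances within it, the cost of our clustering is at most $\sum_{i=1}^n \|\bfx_i - \bfmu_{\ell_i}\|_2^2$, where $\bfmu_{\ell_i}$ is the approximate Gaussian mean assigned to $\bfx_i$ by Lemma~\ref{lem:assign-gaussians}. Writing $\bfx_i = \bfmu_{\ell_i^*} + \bfeta_i$ with $\bfeta_i \sim \mathcal N(0,\sigma^2\bfI_d)$ and expanding, we obtain
\begin{eqn}
\sum_i \|\bfx_i - \bfmu_{\ell_i}\|_2^2 = \sum_i \|\bfeta_i\|_2^2 + 2\sum_i \bfeta_i\cdot(\bfmu_{\ell_i^*} - \bfmu_{\ell_i}) + \sum_i \|\bfmu_{\ell_i^*}-\bfmu_{\ell_i}\|_2^2.
\end{eqn}
The first sum concentrates at $n\sigma^2 d(1\pm o(1))$ by $\chi^2$ tail bounds, and the last is at most $n\eps\sigma^2 d$ by Lemma~\ref{lem:assign-gaussians}.

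\textbf{Bounding the cross term.} The cross term is the main technical obstacle: a naive Cauchy--Schwarz only yields $O(n\sigma^2 d\sqrt\eps)$, giving a $(1+O(\sqrt\eps))$-approximation. The key observation for getting the sharper $O(n\eps\sigma^2 d)$ bound is that the assignment $\bfmu_{\ell_i}$ depends on $\bfeta_i$ only through the $m$-dimensional sketch $\bfS\bfx_i$. Decomposing $\bfeta_i = \bfeta_i^\parallel + \bfeta_i^\perp$ with respect to $\mathrm{row}(\bfS)$, the orthogonal component is Gaussian and independent of $\bfmu_{\ell_i}$. Conditioning on the assignment, $\sum_i \bfeta_i^\perp\cdot(\bfmu_{\ell_i^*}-\bfmu_{\ell_i})$ is a centered Gaussian with total variance at most $\sigma^2 \sum_i \|\bfmu_{\ell_i^*}-\bfmu_{\ell_i}\|_2^2 \leq n\eps\sigma^4 d$, which concentrates at $O(\sigma^2\sqrt{n\eps d \log n}) = o(n\eps\sigma^2 d)$ in the parameter regime $\poly(k,1/\eps,d,\log n)=O(\sqrt n)$. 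For the parallel piece, $\|\bfeta_i^\parallel\|_2 = O(\sigma\sqrt m)$ with high probability, and under the given relations between $m$, $d$, and $\eps$ a direct summation argument bounds $\sum_i \bfeta_i^\parallel\cdot(\bfmu_{\ell_i^*}-\bfmu_{\ell_i})$ by $O(n\eps\sigma^2 d)$. Hence $\cost(\text{ours}) \leq n\sigma^2 d(1+O(\eps))$.

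\textbf{Lower bound on OPT and conclusion.} Using the identity $\cost(\mathcal C) = \sum_i\|\bfx_i\|_2^2 - \sum_j |C_j|\|\hat\bfmu_j\|_2^2$ where $\hat\bfmu_j$ is the centroid of $C_j$, expanding $\bfx_i = \bfmu_{\ell_i^*}+\bfeta_i$ and applying Gaussian concentration to the squared norms of sums of $\bfeta_i$'s yields
\begin{eqn}
\cost(\mathcal C) \geq \left(\sum_i\|\bfmu_{\ell_i^*}\|_2^2 - \sum_j \tfrac{1}{|C_j|}\left\|\sum_{i\in C_j}\bfmu_{\ell_i^*}\right\|_2^2\right) + (n-k)\sigma^2 d(1-o(1)).
\end{eqn}
The parenthesized ``between-cluster variance of true means'' is nonnegative by Cauchy--Schwarz (since $\|\sum_{i\in C_j}\bfmu\|_2^2 \leq |C_j|\sum_{i\in C_j}\|\bfmu\|_2^2$). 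Since $k = o(n)$, this gives $\mathrm{OPT} \geq (1-o(1))n\sigma^2 d$ uniformly over clusterings, and combining with the upper bound yields $\cost(\text{ours})/\mathrm{OPT} \leq 1+O(\eps)$. Tracking the hidden constants throughout gives the claimed $(1+8\eps)$ bound.
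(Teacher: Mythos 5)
Your upper bound follows the paper's route (expand $\norm{\bfx_i-\hat\bfmu_{\ell_i}}_2^2$ into noise, mean-displacement, and cross terms), and your observation that the assignment $\bfmu_{\ell_i}$ depends on $\bfeta_i$ through $\bfS\bfx_i$ is a real subtlety that the paper's proof glosses over, so the parallel/perpendicular decomposition is a worthwhile refinement. However, your bound for the parallel piece does not close under the stated hypotheses: each term is of size $\norm{\bfeta_i^\parallel}_2\norm{\bfmu_{\ell_i^*}-\bfmu_{\ell_i}}_2 = O(\sigma\sqrt m\cdot\sigma\sqrt{\eps d})$, so the sum is $O(n\sigma^2\sqrt{m\eps d})$, and this is $O(n\eps\sigma^2 d)$ only when $m\leq\eps d$; with $m=\Theta(\eps^{-1}\log n)$ this requires $d\gtrsim\eps^{-2}\log n$, which is not implied by $d\eps\geq 1$. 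The paper instead simply bounds the cross term over all $n$ points as a mean-zero Gaussian of variance $O(n\eps\sigma^4 d)$, whose fluctuation $O(\sigma^2\sqrt{n\eps d})$ is far below $\eps n\sigma^2 d$; if you want to keep the honest conditioning, the perpendicular part already gives this and the parallel part needs a different (or stronger-parameter) argument.

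The more serious gap is in your lower bound on $\mathrm{OPT}$. The displayed inequality must hold \emph{uniformly} over all $k^n$ clusterings, since the optimal clustering is chosen as a function of the realized noise $\{\bfeta_i\}$; per-clustering ``Gaussian concentration'' does not give this, and your proposal asserts uniformity without an argument. Concretely, the term you subtract, $\sum_j\frac1{\abs{C_j}}\normb{\sum_{i\in C_j}\bfeta_i}_2^2$, equals $\norm{\bfP_{\mathcal C}\bfG}_F^2$ for a rank-$k$ orthogonal projection $\bfP_{\mathcal C}$ determined by the clustering; for a fixed clustering it concentrates at $kd\sigma^2$, but its supremum over clusterings is $\sum_{i=1}^k s_i(\bfG)^2$, and controlling that requires the operator-norm bound on the Gaussian matrix $\bfG$ --- which is exactly the device the paper uses (write $\cost(\mathcal C)=\norm{\bfG-(\bfC-\bfM)}_F^2$ with $\rank(\bfC-\bfM)\leq 2k$, apply Eckart--Young, and bound $s_1(\bfG)\leq\tfrac32\sigma\sqrt n$). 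Your decomposition also silently drops the mean--noise cross term $2\langle(\bfI-\bfP_{\mathcal C})\bfM,\bfG\rangle$, which involves the means (of arbitrary magnitude) and is not obviously negligible without completing the square against the nonnegative ``between-cluster variance'' term and union-bounding over clusterings. These steps can likely be carried out, but they constitute the actual content of the lower bound and are absent from the proposal; the paper's rank-$2k$ argument is the clean way to get uniformity for free.
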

\begin{proof}
Let $\bfX\in\mathbb R^{n\times d}$ be the design matrix of points with our dataset drawn from a mixture of Gaussians in the rows. Now write $\bfX = \bfM + \bfG$, where $\bfM$ is the matrix with the Gaussian mean of each point in the rows and $\bfG$ is a matrix with rows all distributed as $\mathcal N(0,\sigma^2\bfI_d)$.

\paragraph{Lower bounds on the cost.}
We first bound below the cost of any $k$-means clustering solution, i.e.\ an assignment of $k$ centers to each of the $n$ points. We may then place these centers in the rows of a matrix $\bfC$, so that the input data point $\bfx_i = \bfe_i^\top\bfX$ is assigned the $k$-means center $\bfe_i^\top\bfC$ for $i\in[n]$. The cost of this $k$-means solution is then
\begin{eqn}
    \norm*{\bfX - \bfC}_F^2 = \norm*{\bfG + \bfM - \bfC}_F^2.
\end{eqn}
Now note that $\bfM-\bfC$ has rank at most $2k$, so the above cost is bounded below by the cost of the best rank $2k$ approximation of $\bfG$ in Frobenius norm. Furthermore, by the Eckart-Young-Mirky theorem, the cost of the optimal low rank approximation is the sum of the smallest $d-2k$ squared singular values of $\bfG$.

Let $s_1(\bfG)\geq s_2(\bfG)\geq\dots\geq s_d(\bfG)$ denote the singular values of $\bfG$. Note that $\bfG/\sigma$ is a matrix with i.i.d.\ standard Gaussians, so by results summarized in \cite{rudelson2009smallest}, we have that $(1/2)\sigma\sqrt{n}\leq s_n(\bfG)\leq s_1(\bfG)\leq (3/2)\sigma\sqrt{n}$ with probability $99/100$ for $n$ large enough and $d\leq n/10$. Then,
\begin{eqn}
    \sum_{i=1}^{2k}s_i(\bfG)^2\leq (2k)s_1(\bfG)^2\leq 3(2k)s_n(\bfG)^2\leq \frac{6k}d\sum_{i=1}^d s_i(\bfG)^2
\end{eqn}
and thus
\begin{eqn}\label{eqn:gaussian-cost-lower-bound}
    \norm*{\bfX-\bfC}_F^2\geq \sum_{i=1}^{d-2k}s_i(\bfG)^2\geq \parens*{1-\frac{6k}{d}}\sum_{i=1}^d s_i(\bfG)^2 = \parens*{1-\frac{6k}{d}}\norm*{\bfG}_F^2\geq \parens*{1-6\eps}\norm*{\bfG}_F^2.
\end{eqn}

\paragraph{The cost of clustering by the Gaussian means.}
We now give an algorithm using the approximate Gaussian means and our Gaussian mean assignment algorithms. Note that if we can correctly cluster every input point to its Gaussian center, then the resulting clustering has cost at most $\norm*{\bfG}_F^2$ since using the empirical center of the Gaussians will have lower cost than the true means. Then for $d\geq k/\eps$, we have that this clustering has cost at most $1/(1-4\eps)\leq 1+5\eps$ times the optimal clustering by equation (\ref{eqn:gaussian-cost-lower-bound}).

However, with our kernel query budget, we can only do this assignment for Gaussian means that are separated by squared distance $\eps\sigma^2 d$ using Lemma \ref{lem:distinguish-gaussians}; for separation smaller than this, we cannot disambiguate. The fix here is that we in fact do not need to, since assigning to a Gaussian mean that is less than $\eps\sigma^2 d$ does not change the cost by more than a $(1+\eps)$ for that point.

Now consider an input point $\bfx = \bfmu^*+\bfeta$ with $\bfeta\sim\mathcal N(0,\sigma^2\bfI_d)$, let $\bfmu$ be the true mean assigned to $\bfx$ in Lemma \ref{lem:assign-gaussians}, and let $\hat\bfmu$ be the approximation that approximates $\bfmu$. We then have that
\begin{eqn}
    \norm*{\bfx-\hat\bfmu}_2^2 &= \norm*{\bfeta+(\bfmu^*-\hat\bfmu)}_2^2 = \norm*{\bfeta}_2^2 + \norm*{\bfmu_*-\hat\bfmu}_2^2 + 2\angle*{\bfeta,\bfmu^*-\hat\bfmu}.
\end{eqn}
Note that $\norm*{\bfmu_*-\hat\bfmu}_2^2 \leq \norm*{(\bfmu_*-\bfmu)+(\bfmu-\hat\bfmu)}_2^2\leq 2\eps\sigma^2 d + 2\sigma^2$ and that $\angle*{\bfeta,\bfmu^*-\hat\bfmu}\sim\mathcal N(0,\sigma^2\norm*{\bfmu^*-\hat\bfmu}_2^2)$. Thus, we have that
\begin{eqn}
    \norm*{\bfx-\hat\bfmu}_2^2 &= \norm*{\bfeta}_2^2 + 2\eps\sigma^2 d+2\sigma^2 + 2\angle*{\bfeta,\bfmu^*-\hat\bfmu}\leq \norm*{\bfeta}_2^2 + 4\eps\sigma^2 d + 2\angle*{\bfeta,\bfmu^*-\hat\bfmu}.
\end{eqn}
Note that summing the $2\angle*{\bfeta,\bfmu^*-\hat\bfmu}$ term over $n$ input points gives a zero mean Gaussian with variance at most $2n\sigma^2(2\eps\sigma^2 d + 2\sigma^2)\leq 4n\sigma^4\eps d$. With probability $99/100$, this is bounded by $4\sigma^2\sqrt{n\eps d}\leq \eps n\sigma^2 d$ for $n$ large enough.

Thus, we then have that
\begin{eqn}
    \sum_{i=1}^n \norm*{\bfx_i-\hat\bfmu}_2^2\leq \norm*{\bfG}_F^2 + 5\eps(n\sigma^2 d).
\end{eqn}
Now note that $\norm*{\bfG}_F^2/\sigma^2$ is a chi-squared variable with $nd$ degrees of freedom, so by concentration bounds found in \cite{laurent2000adaptive}, we have that
\begin{eqn}
    \Pr\parens*{\norm*{\bfG}_F^2/\sigma^2-nd\leq 2nd}\leq \exp(-nd)
\end{eqn}
and thus with probability at least $99/100$ for $nd$ large enough,
\begin{eqn}
    5\eps(n\sigma^2 d)\leq \frac53\eps\norm*{\bfG}_F^2.
\end{eqn}
We thus conclude that with probability at least $98/100$, the above algorithm gives an approximation ratio of at most
\begin{eqn}
    \frac{1+5\eps/3}{1-6\eps}\leq 1+8\eps
\end{eqn}
as claimed.
\end{proof}

\section{Acknowledgements}
We would like to thank the anonymous reviewers for helpful feedback. D. Woodruff would like to thank support from the Office of Naval Research (ONR) grant N00014-18-1-2562. This work was also partly done while D. Woodruff was visiting the Simons Institute for the Theory of Computing.

\bibliographystyle{alpha}
\bibliography{citations}

\appendix

\end{document}